  \providecommand\BibTeX{{%
    \normalfont B\kern-0.5em{\scshape i\kern-0.25em b}\kern-0.8em\TeX}}}
\newcommand{\classFont}[1]{\protect\ensuremath{\mathsf{#1}}\xspace}
\newcommand{\pind}{\perp\!\!\!\perp}
\newcommand{\cpind}{\perp\!\!\!\perp_{\rm c}}
\newcommand {\pci}[3] {#2~\!\!\pind_{#1}\!\!~#3}
\newcommand {\pmi}[2] {#1~\!\!\pind\!\!~#2}
\newcommand{\SUM}{\mathrm{SUM}}
\newcommand{\pcixyz}{\pci{\tuple x}{\tuple y}{\tuple z}}
\newcommand{\pmixy}{\pmi{\tuple x}{\tuple y}}
\newcommand{\PSPACE}{\classFont{PSPACE}}
\newcommand{\NP}{\classFont{NP}}
\newcommand{\bp}[1]{\mathrm{BP}(#1)}
\newcommand{\PNP}[2]{\boole{\classFont{NP}}{#1}{#2}}
\newcommand{\NPr}{\classFont{NP}_\mathbb{R}}
\newcommand{\NPrz}{\classFont{NP}^0_\mathbb{R}}
\newcommand{\PTIME}{\classFont{P}}
\newcommand{\PTIMEr}{\classFont{P}_\mathbb{R}}
\newcommand{\PTIMErz}{\classFont{P}^0_\mathbb{R}}
\newcommand{\boole}[3]{\mathrm{S}\textrm{-}#1^{#2}_{#3}}
\newcommand{\tuple}[1]{\vec{#1}}
\newcommand{\Dom}{\operatorname{Dom}}
\newcommand{\ar}{\operatorname{ar}}
\newcommand{\A}{\mathfrak{A}}
\newcommand{\N}{\mathbb{N}}
\newcommand{\calL}{\mathcal{L}}
\newcommand{\calS}{\mathcal{S}}
\newcommand{\calC}{\mathcal{C}}
\newcommand{\X}{\mathbb{X}}
\newcommand{\struc}{\mathrm{Struc}}
\newcommand{\sub}{\subseteq}
\newcommand{\Varfo}{\mathrm{Var_{1}}}
\newcommand{\Varso}{\mathrm{Var_{2}}}
\newcommand{\dfn}{:=}
\newcommand{\fleq}{\leq_{\mathrm{fin}}}
\newcommand{\enc}{\mathrm{enc}}
\newcommand{\RE}{\mathbb{R}}
\newcommand{\FO}{{\rm FO}}
\newcommand{\mA}{{\mathfrak A}}
\newcommand{\ESO}{{\rm ESO}}
\newcommand{\eso}[2]{\ESO_{#1}[{#2}]}
\newcommand{\peso}[2]{\mathrm{L}\text{-}\ESO_{#1}[{#2}]}
\newcommand{\esor}[1]{\eso{\RE}{#1}}
\newcommand{\esod}[1]{\eso{d[0,1]}{#1}}
\newcommand{\pesod}[1]{\peso{d[0,1]}{#1}}
\newcommand{\pesou}[1]{\peso{[0,1]}{#1}}
\newcommand{\ESOr}{\ESO_{\mathbb{R}}}
\newcommand\bigexists{%
  \mathop{\lower0.75ex\hbox{\ensuremath{%
    \mathlarger{\mathlarger{\mathlarger{\mathlarger{\exists}}}}}}}%
  \limits}
\theoremstyle{theorem}
\newcommand\copyrighttext{%
  \footnotesize  \textcopyright Authors 2020. This is the author's version of the work. It is posted here for your personal use. Not for redistribution. The definitive version was published in Proceedings of the 35th Annual ACM/IEEE Symposium on Logic in Computer Science (LICS '20), July 8--11, 2020, Saarbr\"ucken, Germany, ACM, New York, NY, USA, https://doi.org/10.1145/3373718.3394773.}
\newcommand\copyrightnotice{%
\begin{tikzpicture}[remember picture,overlay]
\node[anchor=north,yshift=-10pt] at (current page.north) {\fbox{\parbox{\dimexpr\textwidth-\fboxsep-\fboxrule\relax}{\copyrighttext}}};
\end{tikzpicture}%
}
\begin{document}

%%
%% The "title" command has an optional parameter,
%% allowing the author to define a "short title" to be used in page headers.
\title[Real computation and probabilistic team semantics]{Descriptive complexity of real computation and probabilistic independence logic}

%%
%% The "author" command and its associated commands are used to define
%% the authors and their affiliations.
%% Of note is the shared affiliation of the first two authors, and the
%% "authornote" and "authornotemark" commands
%% used to denote shared contribution to the research.

%% Author with single affiliation.
\author{Miika Hannula}
%\authornote{with author1 note}          %% \authornote is optional;
                                        %% can be repeated if necessary
\orcid{0000-0002-9637-6664}             %% \orcid is optional
\affiliation{
%  \position{Position1}
%  \department{Department1}              %% \department is recommended
  \institution{University of Helsinki}            %% \institution is required
%  \streetaddress{Street1 Address1}
%  \city{City1}
%  \state{State1}
%  \postcode{Post-Code1}
  \country{Finland}                    %% \country is recommended
}
\email{miika.hannula@helsinki.fi}          %% \email is recommended

%% Author with single affiliation.
\author{Juha Kontinen}
%\authornote{with author1 note}          %% \authornote is optional;
%% can be repeated if necessary
\orcid{[0000-0003-0115-5154}             %% \orcid is optional
\affiliation{
	%  \position{Position1}
	%  \department{Department1}              %% \department is recommended
	\institution{University of Helsinki}            %% \institution is required
	%  \streetaddress{Street1 Address1}
	%  \city{City1}
	%  \state{State1}
	%  \postcode{Post-Code1}
	\country{Finland}                    %% \country is recommended
}
\email{juha.kontinen@helsinki.fi}          %% \email is recommended

%% Author with single affiliation.
\author{Jan Van den Bussche}
%\authornote{with author1 note}          %% \authornote is optional;
%% can be repeated if necessary
\orcid{0000-0003-0072-3252}             %% \orcid is optional
\affiliation{
	%  \position{Position1}
	%  \department{Department1}              %% \department is recommended
	\institution{Hasselt University}            %% \institution is required
	%  \streetaddress{Street1 Address1}
	%  \city{City1}
	%  \state{State1}
	%  \postcode{Post-Code1}
	\country{Belgium}                    %% \country is recommended
}
\email{jan.vandenbussche@uhasselt.be}          %% \email is recommended

%% Author with single affiliation.
\author{Jonni Virtema}
%\authornote{with author1 note}          %% \authornote is optional;
%% can be repeated if necessary
\orcid{0000-0002-1582-3718}             %% \orcid is optional
\affiliation{
	%  \position{Position1}
	%  \department{Department1}              %% \department is recommended
	\institution{Hokkaido University}            %% \institution is required
	%  \streetaddress{Street1 Address1}
	%  \city{City1}
	%  \state{State1}
	%  \postcode{Post-Code1}
	\country{Japan}                    %% \country is recommended
}
\email{jonni.virtema@let.hokudai.ac.jp}          %% \email is recommended
\affiliation{
	%  \position{Position1}
	%  \department{Department1}              %% \department is recommended
	\institution{Hasselt University}            %% \institution is required
	%  \streetaddress{Street1 Address1}
	%  \city{City1}
	%  \state{State1}
	%  \postcode{Post-Code1}
	\country{Belgium}                    %% \country is recommended
}

\renewcommand{\shortauthors}{M. Hannula, J. Kontinen, J. Van den Bussche, and J. Virtema}

%% Abstract
%% Note: \begin{abstract}...\end{abstract} environment must come
%% before \maketitle command
\begin{abstract}
	
We introduce a novel variant of BSS machines called Separate Branching BSS machines (S-BSS in short) and develop a Fagin-type logical characterisation for languages decidable in nondeterministic polynomial time by S-BSS machines.  We show that NP on S-BSS machines is strictly included in NP on BSS machines and that every NP language on S-BSS machines is a countable disjoint union of closed sets in the usual topology of $\RE^n$. Moreover, we establish that on Boolean inputs NP on S-BSS machines without real constants characterises a natural fragment of the complexity class $\exists \RE$ (a class of problems polynomial time reducible to the true existential theory of the reals)  and hence lies between $\NP$ and $\PSPACE$. Finally we apply our results to determine the data complexity of probabilistic independence logic.

\end{abstract}

%% 2012 ACM Computing Classification System (CSS) concepts
%% Generate at 'http://dl.acm.org/ccs/ccs.cfm'.
\begin{CCSXML}
<ccs2012>
<concept>
<concept_id>10003752.10003777.10003787</concept_id>
<concept_desc>Theory of computation~Complexity theory and logic</concept_desc>
<concept_significance>500</concept_significance>
</concept>
<concept>
<concept_id>10003752.10003790.10003799</concept_id>
<concept_desc>Theory of computation~Finite Model Theory</concept_desc>
<concept_significance>500</concept_significance>
</concept>
<concept>
<concept_id>10002950.10003648</concept_id>
<concept_desc>Mathematics of computing~Probability and statistics</concept_desc>
<concept_significance>500</concept_significance>
</concept>
<concept>
<concept_id>10003752.10003753</concept_id>
<concept_desc>Theory of computation~Models of computation</concept_desc>
<concept_significance>300</concept_significance>
</concept>
</ccs2012>
\end{CCSXML}

\ccsdesc[500]{Theory of computation~Complexity theory and logic}
\ccsdesc[500]{Theory of computation~Finite Model Theory}
\ccsdesc[500]{Mathematics of computing~Probability and statistics}
\ccsdesc[500]{Theory of computation~Models of computation}
%% End of generated code

%% Keywords
%% comma separated list
\keywords{Blum-Shub-Smale machines, descriptive complexity, team semantics, independence logic, real arithmetic.}  %% \keywords are mandatory in final camera-ready submission

%% \maketitle
%% Note: \maketitle command must come after title commands, author
%% commands, abstract environment, Computing Classification System
%% environment and commands, and keywords command.
\maketitle

\copyrightnotice

 \section{Introduction}
 \emph{The existential theory of the reals} consists of all
 first-order sentences that are true about the reals and are of the form
 \[
 \exists x_1\ldots \exists x_n\phi(x_1, \ldots ,x_n),
 \]
 where $\phi$ is a quantifier-free arithmetic formula containing inequalities and equalities. Known to be $\NP$-hard on the one hand, and in $\PSPACE$ on the other hand \cite{Canny88}, the exact complexity of this theory is a major open question.  
  The existential theory of the reals is today attracting considerable interest due to its central role in geometric graph theory. First isolated as a complexity class in its own right in \cite{Schaefer09}, $\exists \RE$ is defined as the closure of the existential theory of the reals under polynomial-time reductions. In the past decade several algebraic and geometric problems have been classified as complete for $\exists \RE$; a recent example is the art gallery problem of deciding whether a polygon can be guarded by a given number of guards \cite{AbrahamsenAM18}. 
 
 The existential theory of the reals is closely connected to
 \emph{Blum-Shub-Smale machines} (BSS machine for short) which
 are essentially random access machines with registers that can
 store arbitrary real numbers and which can compute rational
 functions over reals in a single time step. Many complexity classes from
 classical complexity theory transfer to the realm of BSS machines, such as
 nondeterministic polynomial time ($\NPr$) over languages
 consisting of finite strings of reals. While the focus is
 primarily on languages over some numerical domain (e.g., reals
 or complex numbers), also Boolean inputs (strings over
 $\{0,1\}$) can be considered. In this context $\exists \RE$ corresponds to the \emph{Boolean part} of $\NPr^0$ ($\bp{\NPr^0}$), obtained by restricting $\NP_\mathbb{R}$ to Boolean inputs and limiting the use of machine constants to $0$ and $1$, as feasibility of Boolean combinations of polynomial equations is complete for both of these classes \cite{BurgisserC06,SchaeferS17}. 
 
 BSS computations can also be described logically. This research orientation was initiated by Gr\"adel and Meer who showed that $\NPr$ is captured by a variant of existential second-order logic ($\ESO_{\mathbb{R}}$) over \emph{metafinite structures} \cite{GradelM95}. Metafinite structures are two-sorted structures that consist of a finite structure, an infinite domain with some arithmetics (such as the reals with multiplication and addition), and weight functions bridging the two sorts \cite{GradelG98}. Since the work by Gr\"adel and Meer, others (see, e.g., \cite{CuckerM99,HansenM06,Meer00}) have shed more light upon \emph{the descriptive complexity over the reals} mirroring the development of classical descriptive complexity.
 In addition to metafinite structures, the connection between logical definability encompassing numerical structures and computational complexity has received attention  in \emph{constraint databases} \cite{BENEDIKT2003169,GradelK99,Kreutzer00}. A constraint database models, e.g., geometric data by combining a numerical \emph{context structure}, such as the real arithmetic, with a finite set of quantifier-free formulae defining infinite database relations~\cite{KanellakisKR95}.

In this paper we investigate the descriptive complexity of so-called \emph{probabilistic independence logic} in terms of the BSS model of computation and the existential theory of the reals.
Probabilistic independence logic is a recent addition to the vast family of  new logics in \emph{team semantics}. 
In team semantics \cite{vaananen07}  formulae are evaluated with respect to sets of assignments which are called teams. During the past decade  research
on team semantics has flourished with interesting connections to fields such as database theory  \cite{HannulaK16}, statistics \cite{CoranderHKPV16},  hyperproperties  \cite{kmvz18}, and quantum information theory \cite{Hyttinen15b}, just to mention a few examples. The focus of this article is probabilistic team semantics that extends team based logics with probabilistic dependency notions. While the first ideas of probabilistic teams trace back to  \cite{galliani08,Hyttinen15b}, the systematic study of the topic was initiated by the works \cite{DurandHKMV18,HKMV18}. 

At the core of probabilistic independence logic
$\FO(\cpind)$ is the concept of conditional independence. The models of this logic are finite first-order structures but the notion of a team  is replaced by a probabilistic team, i.e.,  a discrete probability distribution over a  finite set of assignments. In \cite{HKMV18} it was observed that probabilistic independence logic is equivalent to a restriction of $\ESO_{\mathbb{R}}$ in which the weight functions are distributions. The exact complexity and relationship of $\FO({\cpind})$ to  $\ESO_{\mathbb{R}}$ and $\NPr$  was left as an open question; in this paper we present a (strict) sublogic of $\ESO_{\mathbb{R}}$ and a (strict) subclass of $\NPr$ that both capture $\FO({\cpind})$.

%\paragraph{Our contribution.}
\textbf{\emph{Our contribution.}}
In this paper we introduce a novel variant of BSS machines
called Separate Branching BSS machines (S-BSS machines for short) and characterise its $\NP$
languages  (denoted by $\PNP{}{[0,1]}$) with 
$\peso{[0,1]}{+,\times,\leq, (r)_{r\in\RE}}$ that is a natural sublogic of
$\ESO_{\mathbb{R}}$. Likewise, we isolate a fragment $\exists [0,1]^\leq$ of the complexity class $\exists \RE$ and show that it coincides with the class of Boolean languages in $\PNP{0}{[0,1]}$. Moreover we establish a topological characterisation of the languages decidable by S-BSS machines; we show that, under certain natural restrictions, languages decidable by S-BSS machines are countable disjoint unions of closed sets in the usual topology of $\RE^n$. The topological characterisation separates the languages decidable by BSS machines and S-BSS machines, respectively. Moreover it enables us to separate the complexity classes $\PNP{0}{[0,1]}$ and $\NPrz$. Finally we show the equivalence of
the logics $\peso{[0,1]}{+,\times, \leq,\allowbreak 0,1}$  
and $\FO(\cpind)$, implying that 
$\FO(\cpind) \equiv \PNP{0}{[0,1]}$.
Table \ref{tbl:results} summarises the main results of the paper.

%\paragraph{Structure of the paper.}
\textbf{\emph{Structure of the paper.}}
Section \ref{sec:preli} gives the basic definitions on descriptive complexity, BSS machines, and logics on $\RE$-structures required for this paper.
Section \ref{sec:characterisation} focuses in giving logical characterisations of variants of $\NP$ on S-BSS machines. In Section \ref{sec:SBSScharacterisation} we establish the aforementioned topological characterisation of S-BSS decidable languages. In Section \ref{sec:complexityhierarchy} we prove a hierarchy of the related complexity classes; in particular we separate $\PNP{0}{[0,1]}$ and $\NPrz$.
Section \ref{sec:teamsemantics} deals with probabilistic team semantics and establishes that $\FO(\cpind)\equiv\PNP{0}{[0,1]}$. Section \ref{sec:conclusion} concludes the paper.

\begin{table}
	\centering
	\begin{tabular}{ccccccc}
		& 			& $\bp{\PNP{0}{[0,1]}}$					& 		& $\bp{\NPrz}$						& &\\
$\NP$	& $\subseteq$ 	&  \rotatebox[origin=c]{-90}{$=$}${}^*$	&$\subseteq$ & \rotatebox[origin=c]{-90}{$=$} & $\subseteq$ & $\PSPACE$\\
		&			&  $\exists [0,1]^{\leq}$				& 		& $\exists\RE$						& & \\
		&&&&&&\\
	\end{tabular}
	\begin{tabular}{ccccccc}	
		%\hline
		&&$\PNP{0}{[0,1]}$& &$\NPrz$&& \\
		&& \rotatebox[origin=c]{-90}{$\equiv$}${}^*$ &$\subset^*$& \rotatebox[origin=c]{-90}{$\equiv$}&& \\
		&&$\peso{[0,1]}{+,\times, \leq,0,1}$& &$\ESOr[+,\times, \leq,0,1]$&&\\
		&& \rotatebox[origin=c]{-90}{$\equiv$}${}^*$ && && \\
		&& $\FO(\cpind)$ && && 
	\end{tabular}
\caption{Known complexity results and logical characterisations together with the main results
of this paper. The results of this paper are marked with an asterisk (*).  The top figure is with respect to Boolean inputs;
on the bottom figure, the inputs can include real numbers.}
\label{tbl:results}
\vspace{-4mm}
\end{table}

   \section{Preliminaries}\label{sec:preli}
   
     A vocabulary is \emph{relational} 
    (resp., \emph{functional})
    if it consists of only relation
    (resp., function)
    symbols. A structure is \emph{relational}
    if it is defined over a relational
    vocabulary. 
We let $\Varfo$ and $\Varso$ denote disjoint countable sets of first-order and function variables (with prescribed arities), respectively.
We write $\vec{x}$ to denote a tuple of first-order variables and $\vert \vec{x} \rvert$ to denote the length of that tuple.
The arities of function variables $f$ and relation symbols $R$ are denoted by $\ar(f)$ and $\ar(R)$, respectively.
If $f$ is a function with domain $\Dom(f)$ and $A$ a set, we define $f\upharpoonright A$ to be the function with domain   $\Dom(f)\cap A$ that agrees with $f$ for each element in its domain.
Given a finite set $D$, a function $f\colon D\to[0,1]$ that maps elements of $D$ to elements of the closed interval $[0,1]$ of real numbers such that $\sum_{s\in D}f(s)=1$ is called a \emph{(probability) distribution}.

 \subsection{$\mathbb{R}$-structures}

Let $\tau$ be a relational vocabulary.
A  $\tau$-structure is a  tuple $\A=(A, (R^\A)_{R\in\tau})$,
where $A$ is a  nonempty set and each $R^\A$ an $\ar(R)$-ary relation on $A$.
   The structure $\A$ is  \emph{a finite structure} if $\tau$ and $A$ are finite sets.
In this paper, we consider structures that enrich finite relational 
 $\tau$-structures by adding real numbers ($\RE$) as a second
 domain sort and functions that map tuples over  $A$ to $\RE$.
\begin{definition}
Let $\tau$ and $\sigma$ be respectively a  finite relational and a finite functional vocabulary, and let $X\subseteq \RE$. An $X$-structure of vocabulary $\tau\cup\sigma$ is a tuple 
\[
\A = (A,\RE, (R^\A)_{R\in\tau},
(g^\A)_{g\in\sigma}),
\]
where the reduct of $\A$ to $\tau$ is a finite relational structure, and 
 each $g^\A$ is a \emph{weight function} from $A^{\ar(g)}$ to $X$. Additionally, an $d[0,1]$-structure $\A$ is defined analogously, with the exception that the weight functions $g^\A$ are distributions.
\end{definition}
\emph{An assignment} is a total function $s:\Varfo \rightarrow A$ that assigns a value for each first-order variable. The modified assignment $s[a/x]$ is an assignment that maps $x$ to $a$ and agrees with $s$ for all other variables.

Next, we define a variant of functional existential second-order logic with numerical terms ($\ESO_\RE$) that is designed to describe properties of $\RE$-structures. As first-order terms we have only first-order variables. For a set   $\sigma$ of function symbols, the set of numerical $ \sigma$-terms $i$ is generated by the following grammar:
 \[
i ::= c\mid f(\vec{x}) \mid i\times i \mid i+i\mid
\SUM_{\vec{y}} \, i,
\]
where $c\in\RE$ is a real constant denoting itself, $f\in\sigma$, and $\vec{x}$ and $\vec{y}$ are tuples of first-order variables from $\Varfo$ such that the length of $\vec{x}$ is $\ar(f)$. The value of a numerical term $i$ in a structure $\A$ under an assignment $s$ is denoted by $[i]^\A_s$.
In addition to the natural semantics for the real constants, we have the following rules for the numerical terms:
\begin{align*}
&[f(\tuple x)]^\A_s:=f^{\A}(s(\tuple x)),     \quad   &&[i \times j]^\A_s := [i]^\A_s \cdot [j]^\A_s,\\
&[i+j]^\A := [i]^\A+[j]^\A, \hspace{2mm}
&&[\SUM_{\vec y}
\, i]^\A_s :=
\sum_{\vec a \in A^{|\vec y|}} [i]^\A_{s[\vec a/\vec y]},
\end{align*}
where $+,\cdot,\sum$ are the addition, multiplication, and summation of real numbers, respectively.

\begin{definition}[Syntax of $\ESO_\RE$]\label{esodef}
Let $\tau$ be a finite relational vocabulary and $\sigma$ a finite functional vocabulary. Let $O\sub \{+,\times, \SUM\}$,  $E \sub \{=,<,\leq\}$, and $C\sub \RE$.  The set of $\tau\cup\sigma$-formulae of $\esor{O,E,C}$ is defined via the grammar:
\begin{align*}
\phi ::= \ &   x=y \mid \neg x= y \mid i \mathrel e j \mid \neg
{i\mathrel e j} \mid R(\vec{x}) \mid \neg R(\vec{x}) \mid {} \\
 & \phi\land\phi \mid
  \phi\lor\phi \mid  \exists x\phi \mid \forall x \phi \mid  \exists f \psi,
 \end{align*}
where $i$ and $j$ are numerical $\sigma$-terms constructed using operations from $O$ and constants from $C$, and $e\in E$, $R\in\tau$ is a relation symbol, $f$ is a function variable,
 $x$ and $y$ are first-order variables and $\vec{x}$ a tuple of first-order variables,
  and $\psi$ is a $\tau\cup(\sigma\cup\{f\})$-formula of $\esor{O,E,C}$. 
\end{definition}
Note that the syntax of $\esor{O,E,C}$ allows first-order subformulae to appear only in negation normal form. This restriction however does not restrict the expressiveness of the language.

The semantics of $\esor{O,E,C}$ is defined via $\RE$-structures
and assignments analogous to first-order logic; note that
first-order variables are always assigned to a value in $A$
whereas functions map tuples over $A$ to $\RE$. In addition to the clauses of first-order logic, we have the following semantical clauses:
\begin{align}
&\A\models_s i\mathrel ej \Leftrightarrow [i]^\A_s \mathrel e
[j]^\A_s,  \quad \A\models_s \neg {i\mathrel e j} \Leftrightarrow
\A\not\models_s i\mathrel ej,\notag \\
&\A\models_s \exists f \phi \Leftrightarrow \A[h/f]\models_s \phi \text{ for some $h\colon A^{\ar(f)} \to \RE$,}\label{con2}
\end{align}
where $\A[h/f]$ is the expansion of $\A$ that interprets $f$ as $h$.

Given $S\sub \RE$, we define $\eso{S}{O,E,C}$ as the variant of $\esor{O,E,C}$ in which \eqref{con2} is modified such that $h\colon A^{\ar(f)} \to S$, and $\esod{O,E,C}$  as the variant in which \eqref{con2} is modified such that $h\colon A^{\ar(f)} \to [0,1]$ is a distribution, that is, $\Sigma_{\tuple a\in A^{\ar(f)}} h(\tuple a) =1$.
Note that in the setting of $\esod{O,E,C}$ the value $f^\A$ of a $0$-ary function symbol $f$ is always $1$.

\paragraph{Loose fragment.}
For both $S\subseteq \RE$ and $S= d[0,1]$,
define $\peso{S}{O,E,C}$ as the \emph{loose
fragment} of $\eso{S}{O,E,C}$ in which negated numerical atoms
$\neg {i\mathrel ej}$ are disallowed. We want to point out that as long as ${=}\in E$ and $0,1\in C$, the logic $\peso{S}{O,E,C}$ subsumes existential second-order logic over finite  structures (a precise formulation is given later by Proposition \ref{lem:esosub}).

\paragraph{Expressivity comparisons.}
Fix a relational vocabulary $\tau$ and a functional vocabulary $\sigma$. 
Let $\calL$ and $\calL'$ be some logics over $\tau\cup\sigma$ defined above, and let $X\subseteq \RE$ or $X=d[0,1]$.
For a formula $\phi\in\calL$, define $\struc^{X}(\phi)$ to be the class of $X$-structures $\A$ of vocabulary $\tau\cup\sigma$ such that $\A\models \phi$.
 We write $\calL \leq_X \calL'$ if for all sentences $\phi\in \calL$ there is a sentence $\psi\in \calL'$ such that $\struc^X(\phi)= \struc^X(\psi)$.
As usual, the shorthand $\equiv_X$
 stands for $\leq_X$ in both directions.
  For $X=\RE$, we write simply $\leq$ and $\equiv$. 
  
 In plain words, the subscript $S$ in $\eso{S}{O,E,C}$ constitutes the class of functions available for quantification, whereas the superscript $X$ in $\struc^X(\phi)$ constitutes the class of functions available for function symbols in the vocabulary. Thus,  $\phi \in \eso{S}{O,E,C}$ defines a class $\struc^{X}(\phi)$, even if $S$ and $X$ are different.
 \subsection{Blum-Shub-Smale Model}\label{sec:BSSmachines}

We will next give a definition of BSS machines (see e.g. \cite{BSSbook}). We define $\RE^* \dfn \bigcup \{\RE^n \mid n\in\N\}$. The \emph{size} $|x|$ 
of $x\in \RE^n$ is defined as $n$. The space $\RE^*$ can be seen as the real analogue of $\Sigma^*$ for a finite set $\Sigma$. We also define $\RE_*$ as the set of all sequences $x=(x_i)_{i\in \mathbb{Z}}$ where $x_i\in \RE$. The members of $\RE_*$ are thus of the form $(\ldots, x_{-2},x_{-1},x_0,x_1,x_2,\ldots )$. Given an element $x\in \RE^* \cup \RE_*$ we write $x_i$ for the $i$th coordinate of $x$. The space $\RE_*$ has natural shift operations. We define shift left $\sigma_l\colon \RE_* \to \RE_*$ and shift right $\sigma_r\colon\RE_* \to \RE_*$ as $\sigma_l(x)_i\dfn  x_{i+1}$ and $\sigma_r(x)_i\dfn x_{i-1}$.

\begin{definition}[BSS machines] \label{def:BSS} A BSS machine consists of an input space $\mathcal{I}=\RE^*$, a state space $\mathcal{S}=\RE_*$, and an output space $\mathcal{O}=\RE^*$, together with a connected directed graph whose nodes are labelled by $1, \ldots ,N$. The nodes are of five different types.
\begin{enumerate}
\item \emph{Input node}. The node labeled by $1$ is the only input node. The node is associated with a next node $\beta(1)$ and the input mapping $g_I: \mathcal{I} \to \mathcal{S}$.
\item \emph{Output node}. The node labeled by $N$ is the only output node. This node is not associated with any next node. Once this node is reached, the computation halts, and the result of the computation is placed on the output space by the output mapping $g_O: \mathcal{S}\to \mathcal{O}$.
\item \emph{Computation nodes.} A computation node $m$ is associated with a next node $\beta(m)$ and a mapping $g_m: \mathcal{S}\to \mathcal{S}$ such that for some $c\in \mathbb{R}$ and $i,j,k\in \mathbb{Z}$ the mapping $g_m$ is identity on coordinates $l\neq i$ and on coordinate $i$ one of the following holds:
\begin{itemize}
\item $g_m(x)_i =x_j+x_k$ (addition),
\item $g_m(x)_i = x_j - x_k$ (subtraction),
\item $g_m(x)_i = x_j\times x_k$ (multiplication),
\item $g_m(x)_i = c$ (constant assignment).
\end{itemize} 
\item \emph{Branch nodes.} A branch node $m$ is associated with nodes $\beta^-(m)$ and $\beta^+(m)$. Given $x\in \mathcal{S}$ the next node is $\beta^-(m)$ if $x_0\leq 0$, and $\beta^+(m)$ otherwise.
\item \emph{Shift nodes.} A shift node $m$ is associated either with shift left $\sigma_l$ or shift right $\sigma_r$, and a next node $\beta(m)$.
\end{enumerate}
The input mapping $g_I: \mathcal{I} \to \mathcal{S}$ places an
input $(x_1, \ldots ,x_n)$ in the state
\[(\ldots ,0,n,x_1, \ldots ,x_n, 0,\ldots )\in \mathcal{S},\]
 where the size of the input $n$ is located at the zeroth
 coordinate. The output mapping $g_O\colon \mathcal{S}\to
 \mathcal{O}$ maps a state to the string consisting of its first
 $l$ positive coordinates, where $l$ is the number of consecutive ones stored in the negative coordinates starting from the first negative coordinate.
 For instance, $g_O$ maps 
\[(\ldots ,2,1,1,1,n,x_1, x_2,x_3,x_4,\ldots )\in \mathcal{S},\]
 to $(x_1, x_2,x_3)\in \mathcal{O}$.
 A configuration at any moment of computation consists of a node
 $m\in \{1, \ldots ,N\}$ and a current state $x\in\mathcal{S}$.
The (sometimes partial) \emph{input-output} function $f_M:\RE^*\to \RE^*$ of a
machine $M$ is now defined in the obvious manner.
A function $f:\RE^*\to \RE^*$ is \emph{computable} if $f=f_M$ for some machine $M$. A language $L\sub \RE^*$ is \emph{decided} by a BSS machine $M$ if its characteristic function $\chi_L\colon \RE^*\to \RE^*$ is $f_M$.
 \end{definition}

\paragraph{Deterministic complexity classes.}
A  machine $M$ \emph{runs in (deterministic) time} $t\colon \N \rightarrow \N$,
if  $M$ reaches the output in $t(|x|)$
steps for each input $x\in \mathcal{I}$.
The  machine $M$ runs in \emph{polynomial time} if $t$ 
is a polynomial function.
 The complexity class
$\PTIME_\RE$ is defined as the set of all subsets of $\RE^*$ that
are decided by some  machine $M$ running in polynomial time.

\paragraph{Nondeterministic complexity classes.}
A language $L\sub \RE^*$ is \emph{decided nondeterministically}
 by a BSS machine $M$, if
 \[
 x \in L \quad\text{ if and only if }\quad f_M((x,x')) =1, \text{ for some $x'\in  \RE^*$},
 \]
when a slightly
different input mapping $g_I:\mathcal{I}\to \mathcal{S}$, which places
an input $(x_1, \ldots ,x_n,x'_1, \ldots ,x'_m)$ in the state
\[(\ldots ,0,n,m,x_1, \ldots ,x_n,x'_1, \ldots ,x'_m,\ldots )\in \mathcal{S},\]
where the sizes of $x$ and $x'$ are respectively placed on the first two coordinates,
is used. 
When we consider languages that a machine $M$ decides nondeterministically, we
call $M$ \emph{nondeterministic}. Sometimes when we wish to emphasize
that this is not the case, we call $M$ \emph{deterministic}.
Moreover, we say that $M$ is \emph{[0,1]-nondeterministic}, if
the guessed strings $x'$ are required to be from $[0,1]^*$.
L is \emph{decided in time} $t\colon \N \rightarrow \N$,
if, for every  $x \in L$, $M$ reaches the output $1$ in $t(|x|)$ steps for some $x'\in \RE^*$.
The machine \emph{runs in polynomial time} if $t$ is a polynomial function.
The class $\NP_\RE$ consists of those languages $L\subseteq \RE^*$
for which there exists a machine $M$ that nondeterministically decides $L$
in polynomial time.
Note that, in this case, the size of $x'$ above can be bounded by a polynomial (e.g., the running time of $M$) without altering the definition. The complexity class $\NP_\RE$ has many natural complete problems such as 4-FEAS, i.e., the problem of determining whether a polynomial of degree four has a real root \cite{blum1989}.

\paragraph{Complexity classes with Boolean restrictions.}
If we restrict attention to machines $M$ that may use only $c\in \{0,1\}$ in constant assignment nodes, then the corresponding complexity classes are denoted using an additional superscript $0$ (e.g., as in $\NP^0_\RE$).  Complexity classes over real computation can also be related to standard complexity classes. For a complexity class $\calC$ over the reals, the \emph{Boolean part} of $\calC$, written $\bp{\calC}$, is defined as $\{L\cap\{0,1\}^*\mid L\in \calC\}$.

\paragraph{Descriptive complexity.}
Similar to Turing machines, also BSS machines can be studied from the vantage point of descriptive complexity. To this end, finite $\RE$-structures are encoded as finite strings of reals using so-called rankings that stipulate an ordering on the finite domain. Let $\A$ be an $\RE$-structure over $\tau\cup\sigma$ where $\tau$ and $\sigma$ are relational and functional vocabularies, respectively. A \emph{ranking} of $\A$ is any bijection $\pi\colon\Dom(A)\to \{1, \ldots ,|A|\}$. A ranking $\pi$ and the lexicographic ordering on $\mathbb{N}^k$ induce a \emph{$k$-ranking} $\pi_k\colon\Dom(A)^k\to \{1, \ldots ,|A|^k\}$ for $k\in \mathbb{N}$.
Furthermore, $\pi$ induces the following encoding $\enc_\pi(\A)$. First we define $\enc_\pi(R^\A)$ and $\enc_\pi(f^\A)$ for $R\in \tau$ and $f\in \sigma$:
\begin{itemize}
\item Let $R\in \tau$ be a $k$-ary relation symbol. The encoding $\enc_\pi(R^\A)$ is a binary string of length $\lvert A \rvert^k$ such that the $j$th symbol in $\enc_\pi(R^\A)$ is $1$ if and only if $(a_1, \ldots ,a_k) \in R^\A$, where $\pi_k(a_1, \ldots ,a_k)=j$.
\item Let $f\in \sigma$ be a $k$-ary function symbol. The encoding $\enc_\pi(f^\A)$ is string of real numbers of length $\lvert A \rvert^k$  such that the $j$th symbol in $\enc_\pi(f^\A)$ is $f^\A(\vec{a})$, where $\pi_k(\vec{a})=j$.
\end{itemize}
The encoding $\enc_\pi(\A)$ is  then the concatenation of the string $(1,\ldots ,1)$ of length $|A|$ and the encodings of the interpretations of the relation and function symbols in $\tau\cup\sigma$.
We denote by $\enc(\A)$ any encoding $\enc_\pi(\A)$ of $\A$.

Let $\calC$ be a complexity class and $\eso{S}{O,E,C}$ a logic, where $O\sub \{+,\times, \SUM\}$,  $E \sub \{=,<,\leq\}$,  $C\sub \RE$, and $S\sub \RE$ or $S=d[0,1]$. Let $X\sub \RE$ or $X=d[0,1]$, and let $\calS$ be an arbitrary class of $X$-structures over $\tau\cup\sigma$ that is closed under isomorphisms.
We write $\enc(\calS)$ for the set of encodings of structures in $\calS$. Consider the following two conditions:
\begin{enumerate}[(i)]
\item $\enc(\calS)=\{\enc(\A)\mid \A \in \struc^X(\phi)\}$ for some $\phi\in \eso{S}{O,E,C}[\tau\cup\sigma]\}$,
\item $\enc(\calS)\in \calC$.
\end{enumerate}
If  $(i)$ implies $(ii)$, we write $\eso{S}{O,E,C} \leq_X \calC$, and if the vice versa holds, we write $\calC \leq_X \eso{S}{O,E,C} $. If both directions hold, then we write $\eso{S}{O,E,C} \equiv_X \calC$. We omit the subscript $X$ in the notation if $X=\RE$.

The following results due to Gr\"adel and Meer extend Fagin's theorem to the context of real computation.\footnote{Only the first equivalence is explicitly stated in \cite{GradelM95}. The second, however, is a simple corollary, using the fact that $0$ and $1$ can be identified in $\esor{+,\times,\leq}$; these two are the only idempotent reals for multiplication, and $0$ is the only idempotent real for addition.}
\begin{theorem}[\cite{GradelM95}]\label{thm:meer}
$\esor{+,\times,\leq,(r)_{r\in \RE}}\equiv \NP_\RE$ and \\
$\esor{+,\times,\leq}\equiv \NP^0_\RE$.
\end{theorem}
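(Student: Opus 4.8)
The statement is the real-arithmetic counterpart of Fagin's theorem, and the plan is to prove the two inclusions of $\esor{+,\times,\leq,(r)_{r\in \RE}}\equiv\NP_\RE$ separately, then obtain $\esor{+,\times,\leq}\equiv\NP^0_\RE$ by inspecting which constants are used. As a preliminary normalisation I would put every $\ESOr$-sentence in the shape $\exists f_1\cdots\exists f_m\,\psi$ with $\psi$ first-order over $\tau\cup\sigma\cup\{f_1,\dots,f_m\}$: this is harmless because unbounded-arity function quantifiers let one push an $\exists f$ past a preceding $\forall x$ by Skolemisation (replace the $k$-ary $f$ by a $(k{+}1)$-ary $F$ and every $f(\vec y)$ by $F(x,\vec y)$; finiteness of $A$ makes this an equivalence), while $\exists f$ commutes with $\land,\lor,\exists x$ after renaming. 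I would also note that a relation is simulable by a $\{0,1\}$-valued function, and---for the second equivalence---that $0$ and $1$ are the unique additive, resp.\ multiplicative, idempotent reals, so that a nullary function $c$ with $c{\times}c=c\wedge\neg(c{+}c=c)$ is forced to equal $1$ (and similarly for $0$); this makes the constants $0,1$ eliminable inside $\esor{+,\times,\leq}$ and is exactly the content of the footnote.

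For $\esor{+,\times,\leq,(r)_{r\in \RE}}\leq\NP_\RE$: given $\phi=\exists f_1\cdots\exists f_m\,\psi$ over $\tau\cup\sigma$ I would describe a nondeterministic polynomial-time BSS machine that, on input $w\in\RE^*$, first checks in polynomial time that $w$ is a well-formed encoding $\enc(\A)$ of an $\RE$-structure $\A$ (read $|A|$ off the leading block of ones, verify that the remaining length is $\sum_{R\in\tau}|A|^{\ar(R)}+\sum_{g\in\sigma}|A|^{\ar(g)}$, and that the relation blocks are Boolean), then nondeterministically guesses the tables of $f_1,\dots,f_m$ (polynomially many reals, supplied by the guess string), and finally evaluates $\psi$ deterministically: the first-order quantifiers are unrolled over the at most $|A|^{|\vec x|}$ relevant tuples, numerical subterms are computed bottom-up with the addition, multiplication and constant-assignment nodes (the real constants of $\phi$ are hard-wired into the machine), and each atom $i\mathrel e j$ is resolved by a branch node. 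The machine accepts iff $\psi$ comes out true; since $\struc^\RE(\phi)$ is isomorphism-closed, the language it accepts is exactly $\enc(\struc^\RE(\phi))$, and its running time is polynomial in $|A|$ and hence in $|w|$. If the constants of $\phi$ lie in $\{0,1\}$, the machine uses only the constants $0,1$, giving $\esor{+,\times,\leq}\leq\NP^0_\RE$.

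For the converse $\NP_\RE\leq\esor{+,\times,\leq,(r)_{r\in \RE}}$, let $\enc(\calS)\in\NP_\RE$ be decided by a nondeterministic BSS machine $M$ with node set $\{1,\dots,N\}$ running in time polynomial in $|A|$, say $|A|^s$, on encodings of $\A$. I would write a sentence $\phi$ that existentially quantifies, besides a $\{0,1\}$-valued binary function coding a strict linear order $<$ on $A$ (with first-order axioms forcing linearity, i.e.\ a ranking), the function variables of a purported accepting run of $M$ on $\enc_<(\A)$: a content function $w(\vec p,\vec t)$ holding the value of tape coordinate $\vec p$ at time $\vec t$, with positions and time steps both indexed by $(s{+}1)$-tuples over $A$ ordered lexicographically by $<$ (enough to cover the $\le |A|^s$ coordinates $M$ can reach and the $\le |A|^s$ steps it runs), node-tracking functions recording which of the $N$ nodes $M$ occupies at each time, and functions coding the nondeterministic guess. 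The body is a first-order conjunction of: an initialisation clause stating that at the least time step $M$ sits at node $1$ and $w(\cdot,\vec 0)$ is $g_I$ applied to $\enc_<(\A)$---here a position index is decoded, using $<$, the lexicographic successor on tuples, and the fixed block boundaries, into the element or tuple of $\A$ whose membership bit or weight it should carry; a transition clause for each of the five node types linking the configurations at $\vec t$ and its successor (the arithmetic and constant nodes rewrite one coordinate---so a constant node $c$ imposes $w(\vec p_i,\vec t{+}1)=c$, which is where the constant $c$ of $M$ enters $\phi$---the branch nodes pick the next node according to the sign of $w(\vec p_0,\vec t)$, and the shift nodes translate $w(\cdot,\vec t)$ by the successor or predecessor on position indices); and an acceptance clause asserting that node $N$ is eventually reached with output $1$. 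Then $\A\models\phi$ iff $M$ has an accepting run on some---equivalently every, since $\enc(\calS)$ does not depend on the chosen ranking---encoding of $\A$, i.e.\ iff $\A\in\calS$; hence $\struc^\RE(\phi)=\calS$ and $\enc(\struc^\RE(\phi))=\enc(\calS)$. When $M$ uses only $0,1$ the formula uses only $0,1$, which after elimination lands in $\esor{+,\times,\leq}$, so $\NP^0_\RE\leq\esor{+,\times,\leq}$.

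The main obstacle is the bookkeeping in this last direction: arranging the tuple-indexing of tape and time so that the lexicographic successor, the fixed absolute coordinates $i,j,k$ addressed by $M$'s nodes (with coordinate $0$ placed centrally so the tape can grow in both directions), and the decoding of the blocks of $\enc_<(\A)$ at initialisation are all first-order definable from the quantified order; keeping the simulation of the shift nodes consistent with this absolute addressing; and ensuring that the only real constants occurring in $\phi$ are precisely those used by $M$, so that the Boolean-constant version of the equivalence falls out with no extra work.
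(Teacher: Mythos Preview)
The paper does not give its own proof of this theorem: it is cited from Gr\"adel and Meer, and the only argument supplied is the footnote deriving the second equivalence from the first via the idempotent characterisation of $0$ and $1$---precisely the trick you identify. Your sketch is the standard Gr\"adel--Meer argument, and indeed the paper later adapts exactly this template (run encoded by quantified content and node-tracking functions, initialisation/transition/accept clauses over a quantified order) in its proof of Theorem~\ref{thm:main} for the S-BSS case; there the extra complication of representing arbitrary reals as quotients $f/g$ with $f,g$ bounded is forced by the $[0,1]$-restriction on quantified functions, a wrinkle you correctly omit for unrestricted $\ESO_\RE$.
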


\subsection{Separate Branching BSS}
 We now define a restricted version of the BSS model which branches with respect to two separated intervals $(-\infty,\epsilon^-]$ and $[\epsilon^+,\infty)$. We will later relate these BSS machines to certain  fragments of $\ESO_{\RE}$ and the existential theory of the reals.

\begin{definition}[Separate Branching BSS Machine]
\emph{Separate branching BSS machines} (S-BSS machines for short) are otherwise identical to the BSS machines of Definition \ref{def:BSS}, except that the branch nodes are replaced with the following \emph{separate branch nodes}.
\begin{itemize}
\item \emph{Separate branch nodes.} A separate branch node $m$ is associated with $\epsilon_-,\epsilon_+ \in \RE$, $\epsilon_-<\epsilon_+$, and nodes $\beta^+(m)$ and $\beta^-(m)$. Given $x\in \mathcal{S}$ the next node is $\beta^+(m)$ if $x_0\geq \epsilon_+$,  $\beta^-(m)$ if $x_0\leq \epsilon_-$, and otherwise the input is rejected.
\end{itemize}
\end{definition}
Note that for a given S-BSS machine it is easy to write an equivalent BSS machine. A priori it is not clear whether the converse is possible; in fact, we will later show that in some cases the converse is not possible.

We can now define the variants of the complexity classes $\PTIMEr$, $\PTIMErz$, $\NPr$, and $\NPrz$ that are obtained by replacing BSS machines with S-BSS machines in the definitions of the complexity classes. Furthermore, we define $\NP_{[0,1]}$, and $\NP^0_{[0,1]}$ as the variants of $\NPr$, and $\NPrz$ in which the input $x$ may be any element from $ \RE^*$ but the guessed element $x'$ must be taken from $[0,1]^*$. Let $\calC$ be one of the aforementioned complexity classes. We define $\boole{\calC}{}{}$ to be the variant of $\calC$, where, instead of BSS machines, S-BSS machines are used. If $\calC$ includes the superscript $0$, this means that not only the parameter $c$ in constant assignment, but also $\epsilon_-$ and $\epsilon_+$ in separate branching are from $\{0,1\}$.

\section{Descriptive complexity of nondeterministic polynomial time in S-BSS}\label{sec:characterisation}
We now show that $\PNP{}{[0,1]}$ corresponds to a numerical variant of $\ESO$ in which quantified functions take values from the unit interval and numerical inequality atoms only appear positively. Later we show that both of these restrictions are necessary in the sense that removing either one lifts expressiveness to the level of $\esor{+,\times,\leq,(r)_{r\in \RE}}$ which captures $\NPr$. On the other hand, we give a logical proof, based on topological arguments, that $\PNP{}{[0,1]}< \NPr$. 
The proof of Theorem \ref{thm:main} is a nontrivial adaptation of
the proof of Theorem \ref{thm:meer} (see \cite[Theorem
4.2]{GradelM95}). In the proof we apply Lemma
\ref{lem:minus} and, by Proposition \ref{lem:esosub},
 assume without loss of generality built-in $\ESO$ definable
predicates on the finite part.

 Let $0$ and $1$ be distinct constants, $d$ a $(k+1)$-ary distribution, and $R$ a $k$-ary relation on a finite domain $A$ of size $n$. We say that $d$ is the \emph{characteristic distribution} of $R$ (w.r.t. $0$ and $1$) if $\tuple a\in R$ implies $d(\tuple a,1)=\frac{1}{n^k}$, and $\tuple a\notin R$ implies $d(\tuple a,0)=\frac{1}{n^k}$.
The next proposition implies that it is possible to simulate existential quantification of $\ESO$ definable predicates on the finite domain using function (or distribution) quantification; in particular, we may assume without loss of generality built-in predicates such as a linear ordering and its induced successor relation on the finite domain.
Clearly, any predicate that is $\ESO$-definable over finite structures is also $\ESO$-definable (w.r.t. the finite domain) over $\RE$-structures.

Below, we write $\mathrm{L}$-$\ESO_S[O,E,C,\exists X]$ to denote the extension of $\mathrm{L}$-$\ESO_S[O,E,C]$ by existential quantification of
relations over the finite domain with the usual semantics. 
\begin{proposition}\label{lem:esosub}
Let $\{0,1\}\sub S$ and $O,E,C$ be arbitrary.
For every formula  $\phi\in \mathrm{L}$-$\ESO_S[O,E,C,\exists X]$ there exist formulas $\phi' \in \mathrm{L}$-$\ESO_S[O, E\cup\{=\}, C\cup\{0,1\}]$ and
 $\phi''\in \mathrm{L}$-$\ESO_{d[0,1]}[O,E\cup\{=\},C]$ such that, for every $\RE$-structure $\mA$ and assignment $s$,
\[
\mA\models_s \phi \,\Leftrightarrow\, \mA\models_s \phi' \,\Leftrightarrow\, \mA\models_s \phi''.
\]

\end{proposition}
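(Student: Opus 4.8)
The goal is to eliminate existential quantification over finite-domain relations in favour of quantification over $[0,1]$-valued weight functions, and separately over distributions. The plan is to handle one relation quantifier at a time and then iterate: by structural induction it suffices to replace an outermost $\exists X\,\psi$, where $X$ is a $k$-ary relation variable and $\psi$ is already in the target fragment. The key device is the notion of \emph{characteristic distribution} introduced just before the statement. For the distribution version $\phi''$, I would replace $\exists X\,\psi$ by $\exists d\,\bigl(\theta_d \wedge \psi^*\bigr)$, where $d$ is a fresh $(k+1)$-ary function symbol, $\theta_d$ is a loose-$\ESO$ formula (using only $\SUM$, $=$, and the constant-free numerical term machinery) asserting that $d$ is the characteristic distribution of \emph{some} $k$-ary relation w.r.t.\ the two idempotents $0$ and $1$ that are available numerically, and $\psi^*$ is obtained from $\psi$ by replacing every occurrence of the atom $X(\vec t)$ by the numerical atom $d(\vec t, 1) \cdot n^k = 1$ (or more simply $\SUM_{\vec z} d(\vec z, 1) \cdot (\text{indicator that }\vec z = \vec t)$-style term; concretely one uses $d(\vec t,1) + \dots$ — I will phrase it using equality of numerical terms, which is permitted since $=\in E\cup\{=\}$ and the fragment is the loose one so only positive numerical atoms are used), and similarly negative occurrences $\neg X(\vec t)$ by $d(\vec t,0)\cdot n^k = 1$. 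The point is that a $(k+1)$-ary distribution $d$ whose support projects injectively over the first $k$ coordinates, with each fibre mass exactly $1/n^k$ and each fibre concentrated on a $0$-or-$1$ last coordinate, is in bijective correspondence with a $k$-ary relation; and "$1/n^k$" is definable because $n^k = \SUM_{\vec z} 1$ where the $1$ comes from a $0$-ary distribution symbol (whose value is forced to $1$, as noted after the definition of $\esod{}{}$) or from the numerically-available constant $1$.

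For the $\phi'$ version we do the analogous thing but with an unconstrained weight function $h\colon A^{k}\to[0,1]$: replace $\exists X\,\psi$ by $\exists h\,\psi^{\dagger}$ where $X(\vec t)$ is replaced by $h(\vec t) = 1$ and $\neg X(\vec t)$ by $h(\vec t) = 0$, together with a conjunct $\forall \vec z\,(h(\vec z) = 0 \vee h(\vec z) = 1)$ forcing $h$ to be Boolean-valued — note this conjunct is a positive (loose) formula since it is a disjunction of equalities, using only $=$ and the constants $0,1$, which are in $C\cup\{0,1\}$. Here we do not even need $\SUM$. Iterating over the (finitely many) relation quantifiers, and using that $\ESO$-definable finite-domain predicates — in particular a linear order and successor — are thereby obtained, gives the claimed $\phi',\phi''$ in the stated fragments; the equivalence $\mA\models_s\phi \Leftrightarrow \mA\models_s\phi' \Leftrightarrow \mA\models_s\phi''$ follows by the correspondences above, checked by a routine induction on $\psi$.

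The main obstacle I anticipate is purely bookkeeping-but-delicate: ensuring everything stays inside the \emph{loose} fragment, i.e.\ that no negated numerical atom is ever introduced. Replacing $\neg X(\vec t)$ naively would tempt one to write $h(\vec t) \neq 1$, which is forbidden; the fix is exactly the trick above — encode "$\vec t\notin X$" \emph{positively} as $h(\vec t)=0$ (for the Boolean-valued $h$) or as $d(\vec t,0)\cdot n^k = 1$ (for the characteristic distribution), which is legitimate precisely because $h$ resp.\ $d$ has been constrained to be two-valued. A secondary subtlety is that the constants $0$ and $1$ must be \emph{available}: in $\phi'$ we are allowed to add them to $C$, so that is fine; in $\phi''$ we must not add constants, so we rely on the $0$-ary distribution symbol being interpreted as $1$ and on $0$ being expressible — e.g.\ as $1 - 1$ is not allowed without subtraction, so instead $0$ is used only as a term-position marker in $d(\vec t,0)$ where it is a \emph{first-order} element of the finite domain $A$, not a real constant; here one invokes $\{0,1\}\subseteq S$ together with built-in $\ESO$-definable constants on the finite part to pin down two distinct elements named $0,1$ in $A$. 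Handling this last point cleanly — distinguishing the finite-domain elements "$0,1$" from the real constants "$0,1$" — is the one place where care is genuinely required.
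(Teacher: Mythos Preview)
Your overall strategy matches the paper's: replace a $k$-ary relation variable $X$ by a $k$-ary weight function (for $\phi'$) or a $(k{+}1)$-ary distribution (for $\phi''$), and rewrite $X(\tuple x)$ and $\neg X(\tuple x)$ as two \emph{positive} numerical equalities that cannot hold simultaneously. For $\phi'$ your translation is exactly the paper's --- $X(\tuple x)\mapsto f_X(\tuple x)=1$, $\neg X(\tuple x)\mapsto f_X(\tuple x)=0$ --- and your extra conjunct $\forall\tuple z\,(h(\tuple z)=0\vee h(\tuple z)=1)$ is harmless but unnecessary, since the two equalities $f_X(\tuple x)=0$ and $f_X(\tuple x)=1$ already exclude one another.

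There is, however, a genuine gap in your treatment of $\phi''$. You encode membership via $d(\tuple t,1)\cdot n^k = 1$ and propose to recover $1/n^k$ from $n^k=\SUM_{\tuple z}1$; but the statement fixes $O$ as \emph{arbitrary}, so neither $\times$ nor $\SUM$ is guaranteed to be available, and the target fragment for $\phi''$ does not add constants either. The paper avoids this entirely with a trick you should adopt: existentially quantify a fresh $k$-ary \emph{distribution} variable $u$ subject only to $\forall\tuple x\tuple x'\;u(\tuple x)=u(\tuple x')$. This pins down the uniform value $u(\tuple x)=1/n^k$ using nothing but equality between function terms, and then one writes $X(\tuple x)\mapsto f_X(\tuple x,1)=u(\tuple x)$ and $\neg X(\tuple x)\mapsto f_X(\tuple x,0)=u(\tuple x)$. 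No arithmetic operation and no real constant is needed. If you additionally want your $\theta_d$ forcing $f_X$ to be a characteristic distribution, note that $\forall\tuple x\,\bigl(f_X(\tuple x,0)=u(\tuple x)\vee f_X(\tuple x,1)=u(\tuple x)\bigr)$ alone suffices: summing over $\tuple x$ and using that $f_X$ is a distribution forces all remaining mass to vanish and makes the disjunction exclusive --- and again this uses only $=$.

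Your final paragraph correctly flags the one remaining bookkeeping point: the symbols $0,1$ in $f_X(\tuple x,0)$ and $f_X(\tuple x,1)$ are \emph{finite-domain} elements, not real constants. The paper is brief here; the clean fix is to existentially quantify two distinct first-order variables $z_0\neq z_1$ and use them in place of $0,1$ throughout.
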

\begin{proof}
 The sentence $\phi'$ ($\phi''$, resp.) is obtained from $\phi$ by a translation that is the identity function, except that, for second-order variables $X$ of arity $k$, we rewrite the quantifications $\exists X$ as $\exists f_X$, where  $f_X$ is a $k$-ary (($k+1$)-ary, resp.) function variable, and the atoms 
 $X(\tuple x)$ and $\neg X(\tuple x)$ by $f_X(\tuple x)=1$ and $f_X(\tuple x)=0$ ($f_X(\tuple x,1)=u(\tuple x)$ and $f_X(\tuple x,0)=u(\tuple x)$, resp.), respectively.  Here, $u$ is the $k$-ary uniform distribution which is definable in  $\mathrm{L}$-$\ESO_{d[0,1]}{[=]}$ by $\forall \tuple x\tuple x' u(\tuple x)=u(\tuple x')$.
\end{proof}

\begin{lemma}\label{lem:minus}
If $\{0,1\}\sub C$, we have
$\peso{[0,1]}{+,\times,\leq,C}\equiv
\peso{[-1,1]}{+,\times,\leq,C}$.
\end{lemma}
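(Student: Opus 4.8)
The plan is to prove the two inclusions separately. The inclusion $\peso{[0,1]}{+,\times,\leq,C}\leq\peso{[-1,1]}{+,\times,\leq,C}$ is an easy relativization: given a sentence $\phi$ of the first logic, leave the first-order part and all numerical atoms unchanged, and replace each second-order quantifier $\exists f\,\chi$ (with $f$ of arity $k$) by $\exists f\,\bigl((\forall\vec x\,0\leq f(\vec x))\wedge\chi'\bigr)$, where $\vec x$ is a fresh $k$-tuple of first-order variables and $\chi'$ is the recursively translated matrix. Since $0\in C$, the conjunct $0\leq f(\vec x)$ is a \emph{positive} numerical atom, so the output stays in the loose fragment; and a function $A^{k}\to[-1,1]$ satisfies $\forall\vec x\,0\leq f(\vec x)$ exactly when its range lies in $[0,1]$, so the two sentences define the same class of $\RE$-structures. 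The real work is the converse.

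For $\peso{[-1,1]}{+,\times,\leq,C}\leq\peso{[0,1]}{+,\times,\leq,C}$, the idea is to split every existentially quantified function $f$ (ranging over $[-1,1]$) into a pair $f',f''$ ranging over $[0,1]$ with the intended reading $f=f'-f''$. Because the object language has no subtraction, I cannot literally substitute $f'(\vec x)-f''(\vec x)$ for $f(\vec x)$; instead the intended difference must be carried symbolically. After renaming bound function variables so that they are pairwise distinct and disjoint from the vocabulary, and fixing for each such $f$ a fresh same-arity pair $f',f''$, I would assign to every numerical term $t$ of $\phi$ a pair $(t^{+},t^{-})$ of subtraction-free terms over $\{+,\times\}$ and constants from $C$ by the recursion
\[
c\mapsto(c,0),\qquad g(\vec x)\mapsto(g(\vec x),0),\qquad f(\vec x)\mapsto(f'(\vec x),f''(\vec x)),
\]
\[
(t_1+t_2)^{\pm}=t_1^{\pm}+t_2^{\pm},\qquad (t_1\times t_2)^{+}=t_1^{+}\times t_2^{+}+t_1^{-}\times t_2^{-},\qquad (t_1\times t_2)^{-}=t_1^{+}\times t_2^{-}+t_1^{-}\times t_2^{+},
\]
where $c\in C$, $g$ is a vocabulary function symbol, and $f$ is quantified. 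The sentence $\psi$ is then obtained from $\phi$ by keeping first-order connectives, quantifiers and relational literals unchanged, rewriting each numerical atom $i\leq j$ as $i^{+}+j^{-}\leq j^{+}+i^{-}$, and replacing each $\exists f\,\chi$ by $\exists f'\,\exists f''\,\chi'$. Note that $\psi$ introduces no negated numerical atom, and that constants and vocabulary-function values are always placed whole on the $(\cdot)^{+}$ side with $0$ on the $(\cdot)^{-}$ side, so no negated constant is ever needed; hence $\psi$ is genuinely a sentence of $\peso{[0,1]}{+,\times,\leq,C}$.

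Correctness would be proved by induction on $\phi$, with the invariant: for $\RE$-structures whose quantified functions $f$ (valued in $[-1,1]$) and the auxiliary functions $f',f''$ (valued in $[0,1]$) satisfy $f=f'-f''$ pointwise, one has $[t]=[t^{+}]-[t^{-}]$ for every term $t$ (an easy sub-induction on $t$ using the recursion above), and consequently $i\leq j$ holds iff $i^{+}+j^{-}\leq j^{+}+i^{-}$ holds. At an $\exists f$ step, from a witness $f$ valued in $[-1,1]$ one takes $f'=\max(f,0)$ and $f''=\max(-f,0)$, which are valued in $[0,1]$ since $|f|\leq 1$ and satisfy $f=f'-f''$; conversely, from witnesses $f',f''$ valued in $[0,1]$ one sets $f=f'-f''\in[-1,1]$; both directions close by the induction hypothesis. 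I expect the only genuine obstacle to be the bookkeeping through nested function quantifiers — phrasing the invariant as a statement about expansions of $\RE$-structures that interpret all the free function symbols so that it is preserved at the quantifier steps — together with the routine verification that the term recursion never forces a negative constant, which is exactly what keeps the translation inside the loose $\{+,\times,\leq,C\}$-fragment. Combining the two inclusions gives $\peso{[0,1]}{+,\times,\leq,C}\equiv\peso{[-1,1]}{+,\times,\leq,C}$.
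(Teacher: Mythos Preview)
Your proof is correct and takes a genuinely different route from the paper's. The paper first normalises each numerical atom to an inequality between sums of monomials, introduces for every monomial $p$ a quantified $\{0,1\}$-valued sign indicator $g_p$, adds a side formula $\theta$ enforcing multiplicative sign propagation, and then rewrites each atom as a disjunction over all sign patterns of its monomials (moving the negatively-signed monomials across the inequality). Your approach bypasses all of this: you split each quantified $f$ into a difference $f'-f''$ of $[0,1]$-valued functions and propagate the decomposition symbolically to a pair $(t^{+},t^{-})$ for every term, so that a single atom $i^{+}+j^{-}\leq j^{+}+i^{-}$ replaces the paper's exponential disjunction. The trade-off is that your term recursion can blow up the \emph{term} size exponentially in the multiplicative depth, whereas the paper's translation blows up the \emph{formula} size via the disjunction; neither matters for the lemma, which is purely about expressive equivalence. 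A minor bonus of your argument is that it uses only the constant $0\in C$ (the paper uses both $0$ and $1$ to pin down the range of the sign functions $g_p$), and it never relies on $t^{+},t^{-}$ evaluating to nonnegative reals, which is why vocabulary functions with arbitrary real values cause no trouble.
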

\begin{proof}
Left-to-right direction is straightforward; the quantification
$\exists f\, \psi$ in
$\peso{[0,1]}{+,\times,\leq,C}$ can be simulated in
$\peso{[-1,1]}{+,\times,\leq,C}$ by the formula
\(
\exists f (\forall \tuple x \, 0 \leq  f(\tuple x)\wedge \psi).
\)

The converse direction is nontrivial. Let $\phi$ be an arbitrary
$\peso{[-1,1]}{+,\times,\leq,C}$-formula. We will show how to construct an equivalent
$\peso{[0,1]}{+,\times,\leq,C}$-formula $\phi'$. By the standard Skolemization argument we may assume that $\phi$ is in the prenex normal form. 
 Moreover, we assume that every atomic formula of the form $t_1\leq t_2$ is written such that $t_1$ and $t_2$ are multivariate polynomials where function terms $f(\vec{x})$ play the role of variables; this normal form is obtained by using the distributive laws of addition and multiplication. Let $M$ be the smallest set that includes every term of polynomials $t_1$ and $t_2$ such that $t_1\leq t_2$ is a subformula of $\phi$, and is closed under taking subterms.
 Clearly $M$ is a finite set, for its cardinality is bounded by the length of $\phi$. For each $p\in M$ with $m$ variables, we introduce an $m$-ary function $g_p$ that will be interpreted as the sign function for the term $p$. Let $\vec{x}_p$ be the related tuple of variables. The idea is that $g_p(\vec{a})=0$ ($g_p(\vec{a})=1$) if $p(\vec{a})<0$ ($p(\vec{a})\geq 0$).

We are now ready to define the translation $\phi\mapsto \phi'$, where
\[
\phi = \exists  f_1 \ldots \exists f_m Q_1 x_1 \ldots Q_n x_n\, \psi
\]
is in the normal form mentioned above. We define
\[
\phi' \dfn \bigexists_{p\in M} g_p \exists  f_1 \ldots \exists f_m Q_1 x_1 \ldots Q_n x_n (\theta\land \psi^\circ),
\] 
where the recursively defined translation ${}^\circ$ is homomorphic for the Boolean connectives and identity for first-order literals.

For atomic formulae $t_1\leq t_2$ of the form $s_1+\dots+ s_l\leq r_1+\dots +r_m$ the translation is defined as follows. The translation makes certain that every term (of polynomial) of the inequation after the translation has a non-negative value; this is done by moving terms to the other side of the inequation. Denote $\mathcal{I}=\{1,\dots,l\}$ and $\mathcal{J}=\{1,\dots,m\}$, and define $(t_1\leq t_2)^{\circ}$ as
\begin{align*}
\bigvee_{\substack{I\subseteq \mathcal{I} \\ J\subseteq \mathcal{J}}} &\Big( \bigwedge_{\substack{i\in I \\ j\in J}} g_{s_i}(\vec{x}_{s_i})=1 \land g_{r_j}(\vec{x}_{r_j})=1\\
 &\land \bigwedge_{\substack{i\in \mathcal{I}\setminus I \\ j\in \mathcal{J}\setminus J}}  g_{s_i}(\vec{x}_{s_i})=0 \land g_{r_j}(\vec{x}_{r_j})=0 \\
 &\land \sum_{i\in I} s_i + \sum_{j\in \mathcal{J}\setminus J} r_j \leq \sum_{i\in \mathcal{I}\setminus I} s_i + \sum_{j\in J} r_j \Big).
\end{align*}
Finally the subformula $\theta$ makes sure that the signs of the terms in $p\in M$ propagate correctly from subterms to terms. Define $\theta$ as
\begin{align*}
&\bigwedge_{\substack{p\in M \\ c\in M\cap [0,\infty]\\ d\in M\cap [-\infty,0)}} \forall \vec{x} \big( g_p(\vec{x})=0 \lor g_p(\vec{x})=1 \big) \land g_c=1 \land g_d=0 \\
&\land \bigwedge_{\substack{p,q,r\in M \\ p= q\times r}} \Big( \big( g_q(\vec{x}_q)=g_r(\vec{x}_r) \land g_p(\vec{x}_p)=1 \big)\\
&\quad\quad\lor \big( g_q(\vec{x}_q)=0 \land g_r(\vec{x}_r)=1 \land g_p(\vec{x}_p)=0 \big)\\
&\quad\quad\lor \big( g_q(\vec{x}_q)=1 \land g_r(\vec{x}_r)=0 \land g_p(\vec{x}_p)=0 \big)\Big).
\end{align*}
Note that the sign function maps terms of value $0$ to either $0$ or $1$, since for the purpose of the construction the sign of $0$ valued terms does not matter. 
\end{proof}

\begin{theorem}\label{thm:main}
$\peso{[0,1]}{+,\times,\leq,(r)_{r\in \RE}}\equiv \PNP{}{[0,1]}$.
\end{theorem}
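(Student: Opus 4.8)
The plan is to prove the two inclusions of the equivalence separately, following the template of Fagin's theorem and of its real analogue, Theorem~\ref{thm:meer}. The inclusion $\peso{[0,1]}{+,\times,\leq,(r)_{r\in \RE}} \leq \PNP{}{[0,1]}$ is the more routine direction: I would turn a sentence into an S-BSS machine that guesses the existentially quantified functions and then verifies the first-order part deterministically. The converse inclusion $\PNP{}{[0,1]} \leq \peso{[0,1]}{+,\times,\leq,(r)_{r\in \RE}}$ is the heart of the proof: from a polynomial-time S-BSS machine one builds a sentence asserting the existence of an accepting run. In both directions I would use Proposition~\ref{lem:esosub} to assume a linear order and its induced arithmetic on index tuples (successor, addition and multiplication of indices, iterated sums, and so on) built in on the finite part, as all such predicates are $\ESO$-definable over finite ordered structures, and Lemma~\ref{lem:minus} to pass freely between $[0,1]$- and $[-1,1]$-valued function quantification when convenient.

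For the first inclusion, given a sentence $\phi$ of $\peso{[0,1]}{+,\times,\leq,(r)_{r\in \RE}}$, I would first Skolemise its first-order quantifiers (the Skolem functions are finite, hence can be carried as $\{0,1\}$-valued function variables) and, using the remark above, assume $\phi=\exists f_1\cdots\exists f_m\,\forall x_1\cdots\forall x_q\,\psi$ with $\psi$ a positive Boolean combination of first-order literals and numerical atoms $i\leq j$ over the finite ordered domain. On input $\enc(\A)$ the machine reads off $|A|$ and the interpretations and uses its $[0,1]$-nondeterminism to guess the polynomially many values of $f_1,\dots,f_m$, the Skolem data, and, for each of the polynomially many tuples $\vec a\in A^q$, an index naming a disjunct of a DNF of $\psi$. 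It then loops over all $\vec a\in A^q$, checking for each the named disjunct: first-order literals by table look-ups and comparisons of integers (harmless at a separate-branch node, since consecutive integers differ by $1$), and a numerical atom $i\leq j$ by computing $[i]^\A_s-[j]^\A_s$ --- this takes polynomial time, as each $\SUM$ unfolds into a sum of $|A|^{O(1)}$ terms and the real constants of $\phi$ are loaded by constant-assignment nodes --- and then feeding the difference into a separate-branch node with $\epsilon_-=0$, $\epsilon_+=1$: a non-positive difference proceeds to the next literal, a difference $\geq 1$ leads to a rejecting sink, and a difference in $(0,1)$ is rejected outright. The machine outputs $1$ once all tuples have passed. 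For correctness: if $\A\models\phi$, then picking witnessing $f_i$, Skolem functions and, for each $\vec a$, a disjunct that is actually satisfied makes every checked numerical atom have a non-positive difference, so no dead zone is entered and the machine accepts; if $\A\not\models\phi$, then under every guess some $\vec a$ has no satisfied disjunct, hence some literal of the named disjunct fails there, and a failing numerical atom has a strictly positive difference, which is rejected whether it lands in $(0,1)$ or in $[1,\infty)$. So the dead zone of separate branching does no harm precisely because it can be triggered only by a \emph{false} atom, which is the machine-side counterpart of the ban on negated numerical atoms in the loose fragment.

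For the converse I would adapt the tableau construction of \cite[Thm.~4.2]{GradelM95}. Let $M$ be a polynomial-time S-BSS machine deciding $L$ nondeterministically with $[0,1]$-guesses, and let $p(n)$ bound its running time, so that time steps and tape positions are named by $O(1)$-tuples over the finite domain, with the needed arithmetic on these names available by Proposition~\ref{lem:esosub}. The sentence $\phi$ existentially quantifies a $[0,1]$-valued function holding the guessed string, $\{0,1\}$-valued functions recording for each time step the active node and, for each visited separate-branch node, which of $\beta^-,\beta^+$ is taken, and functions holding the tape contents; it then states as a first-order formula over the ordered finite domain that the initial configuration is the one prescribed by the input mapping, that every step conforms to $M$, and that the output node is reached with a $1$ on the output. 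Here some care is required to use only positive numerical atoms: an equality $s=t$ is written $s\leq t\wedge t\leq s$; a computation node imposes such an equality between the affected cell's contents at consecutive times; for a separate-branch node one writes a disjunction $(b\leq 0\wedge x_0\leq\epsilon_-\wedge\cdots)\vee(1\leq b\wedge\epsilon_+\leq x_0\wedge\cdots)$ over the recorded outcome $b\in\{0,1\}$, which at once encodes the branch taken, imposes the matching inequality, and excludes the dead zone; and since exactly one node is active at each step, the case distinction over node types likewise turns into a disjunction of conjunctions of positive atoms.

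I expect the main obstacle to be the one with no counterpart in \cite{GradelM95}: a tape cell of $M$ may hold an arbitrary real number produced by the machine's arithmetic, whereas every function available for quantification is $[0,1]$-valued (or, by Lemma~\ref{lem:minus}, $[-1,1]$-valued), so the cell contents cannot simply be stored. The plan is to represent the tableau instead through bounded bookkeeping --- $[0,1]$-valued, indeed essentially $\{0,1\}$-valued, functions describing how each cell's content is assembled --- together with the structure's own $\RE$-valued weight functions, laid out so that every one-step transition is still expressible by the positive polynomial (in)equalities the logic provides; keeping this representation faithful under the blow-up that repeated multiplication can cause in the tape values is the delicate part of the argument. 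Once the encoding is set up, the remaining description of the run is routine if lengthy, and combining the two inclusions gives the theorem.
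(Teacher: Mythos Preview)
Your logic-to-machine direction is essentially the paper's: Skolemise, put the matrix in DNF, guess function values, Skolem data, and a disjunct index per tuple, then verify each tuple. Your treatment of a numerical atom $i\leq j$ via a separate-branch node on $[i]-[j]$ with $\epsilon_-=0$ is the right idea---a false atom yields a strictly positive difference, rejected whether it lands in the dead zone or beyond $\epsilon_+$.

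For the machine-to-logic direction you have correctly located the obstacle but not cleared it. Your ``bounded bookkeeping'' proposal---recording with $\{0,1\}$-valued functions how each cell is assembled, and relying on the structure's own $\RE$-valued weight functions for the actual arithmetic---cannot by itself express the branch conditions. After $t$ multiplication steps a cell may hold a polynomial of degree $2^t$ in the inputs; no numerical term of size fixed by the formula names such a value, and no $\{0,1\}$-valued history lets you compare it to $\epsilon_\pm$ without materialising intermediate results. Materialising them is precisely what the $[0,1]$ range forbids, so some genuine encoding of arbitrary reals by bounded functions is unavoidable, and you have not supplied one.

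The paper supplies exactly that. Passing via Lemma~\ref{lem:minus} to $[-1,1]$-valued quantification, it quantifies functions $f,g$ subject to the single constraint $f(\tuple y)^2+g(\tuple y)=1$, and represents the content of register $\tuple s$ at time $\tuple t$ as the ratio $f(\tuple s,\tuple t)/g(\tuple s,\tuple t)=f/(1-f^2)$. Since $x\mapsto x/(1-x^2)$ is a bijection $(-1,1)\to\RE$, every real is so represented with $f\in(-1,1)$ and $g\in(0,1]$. Each one-step transition then becomes a fixed polynomial identity relating $f,g$ at consecutive times---e.g.\ $x_i\gets x_j+x_k$ reads $f_{\tuple s_i,\tuple t+1}\, g_{\tuple s_j,\tuple t}\, g_{\tuple s_k,\tuple t}=g_{\tuple s_i,\tuple t+1}(f_{\tuple s_j,\tuple t}\, g_{\tuple s_k,\tuple t}+g_{\tuple s_j,\tuple t}\, f_{\tuple s_k,\tuple t})$---and the separate-branch tests $x_0\leq\epsilon_-$, $x_0\geq\epsilon_+$ become positive inequalities between $f$ and $\epsilon_\pm\cdot g$. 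This ratio trick is the missing idea; with it in hand, the rest of your tableau description goes through as you outlined.
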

\begin{proof}
\emph{\textbf{Right-to-left direction.}}
Suppose $L\in \PNP{}{[0,1]}$ is a class of $\RE$-structures that is closed under isomorphisms. By Lemma \ref{lem:minus} it suffices to construct an
$\peso{[-1,1]}{+,\times,\leq,\RE}$ sentence $\phi$  such that $\mA\models \phi$ iff $\mA\in L$ for all $\RE$-structures $\mA$.
Let $M$ be an S-BSS machine such that $M$ consists of $N$ nodes, and for each input $x$ it accepts $(x,x')$ for some $x'\in [0,1]^*$ in time $|x|^{k^*
 }$ iff $x=\enc(\mA)$ for some $\mA\in L$, where $k^*$ is some fixed natural number. We may assume that $|x'|$ is of size $|x|^{k^*}$. 
Let $k$ be a fixed natural number such that $\vert x\rvert^{k^*}\leq \vert A\rvert^k$; such a $k$ always exists since $\lvert\enc(\mA)\rvert$ is polynomial in $\lvert A\rvert$.
The computation of $M$ on a given input $\enc(\mA)$ can be represented using functions $f:A^{2k+1}\to (-1,1)$, $g:A^{2k+1}\to (0,1] $, and $h_1,\ldots ,h_N:A^{k}\to \{0,1\}$ such that
\begin{enumerate}[(a)]
\item $f(\tuple s,\tuple t)/g(\tuple s,\tuple t)$
is the content of register $\tuple s$ at time $\tuple t$; 
\item $h_i(\tuple t)$ is 1 if $i$ is the node label at time $\tuple t$,
and 0 otherwise.
\end{enumerate}
Note that $\tuple s$ is $(k+1)$-ary because we need to store $|A|^k$ positive and negative register contents. We may assume $k$ such that registers with index greater than $|A|^k$ do not contribute to the final outcome, i.e., their contents are never shifted to registers associated with the nodes of $M$.
Construct a formula 
 \[
 \psi(f,g,h):=\theta_{\textrm{pre}} \wedge \theta_{\textrm{initial}}\wedge \theta_{\textrm{comp}}\wedge \theta_{\textrm{accept}}
 \]
  of  $\peso{[-1,1]}{+,\times,\leq,(r)_{r\in \RE}}$  such that 
$\mA\models \exists fgh\, \psi$ iff $M$ accepts $\enc(\mA)$.
By Proposition \ref{lem:esosub} we may assume a built-in ordering $\fleq$ and its induced successor relation $S$ and constants $0,1,\max$ on the finite domain. 
Likewise, we may extend $\fleq$ to order also $k$-tuples from the finite domain. Under such ordering we then write $\tuple x+1$ ($\tuple x-1$) for the  element succeeding (preceding) a $k$-tuple $\tuple x$, and $\vec{n}$ for the $n$-th $k$-tuple. 
First, $\theta_{\textrm{pre}}$ is the conjunction of a formula stating that the ranges of $g$ and $h$ are as stated, and another formula
\begin{align}\label{eq:frac}
\forall \tuple y\, f(\tuple y)^2+g(\tuple y)=1,
\end{align}
where $f(\tuple y)^2$ is a shorthand for $f(\tuple y)\times f(\tuple y)$. Observe that \eqref{eq:frac} implies
\[\frac{f(\tuple y)}{g(\tuple y)} = \frac{f(\tuple y)}{(1- f(\tuple y)^2)}.\]  Also, $x \mapsto  x/(1-x^2)$ is a bijection from $(-1,1)$ to $\mathbb{R}$.
 That the range of $f$ is $(-1,1)$ will follow from the remaining conjuncts of $\psi$, described below.

\noindent
\textit{Initial configuration.}  We give a description of $\theta_{\textrm{initial}}$ such that 
\begin{multline}\label{eq:zero}
(\mA,f,g,\tuple h)\models \theta_{\textrm{initial}}\\
\text{iff $(f,g,\tuple h)$ satisfies (a) \& (b) at time $\tuple 0$}.
\end{multline}
For clause (b) it suffices to add  to $\theta_{\textrm{initial}}$
\[h_1(\tuple 0)=1\wedge h_2(\tuple 0)=0\wedge \ldots \wedge h_N(\tuple 0)=0.\] 
Consider then clause (a). We denote by $\tuple s_0$ the $\lfloor |A^{k+1}|/2\rfloor$th $k+1$-tuple with respect to $\fleq$. 
  The sequence $\tuple s_0$, which is clearly definable in $\ESO$, now represents the zeroth coordinate of $R_*$.
  To encode that $|x|$ is placed on zeroth coordinate 
 we add to $\theta_{\textrm{initial}}$ 
\begin{align}\label{eq:count}
\exists \epsilon\exists  f_{\rm count} \Big(&  f_{\rm count}(0)= \epsilon\\\nonumber
&\land \forall xy \big(S(x,y)\to f_{\rm count}(y)=f_{\rm count}(x)+ \epsilon\big) \\\nonumber
&\land  f_{\rm count}(\max)=1 
 \land f(\tuple s_0,\tuple 0)=p(1/\epsilon)\times g(\tuple s_0,\tuple 0)\Big),\nonumber
 \end{align}
  where $\epsilon$ is a nullary function variable (i.e., a real from $[-1,1]$),
  $p$ is a polynomial such that $|\enc(\mA)|=p(|A|)$, and the last conjunct of \eqref{eq:count} is a shorthand for
  \[
  \epsilon^{\deg(p)}\times  f(\tuple s_0,\tuple 0) = p^*(\epsilon)\times g(\tuple s_0,\tuple 0),
  \]
  where $\deg(p)$ is the degree of the polynomial $p$, and $p^*$ is the polynomial obtained by multiplying $p$ by $\epsilon^{\deg(p)}$ (that is $\epsilon^{\deg(p)}\times p(1/\epsilon) = p^*(\epsilon)$). 
  It follows from \eqref{eq:frac} and \eqref{eq:count} that $f(\tuple s_0,\tuple 0)\in (-1,1)$ and $f(\tuple s_0,\tuple 0)/g(\tuple s_0,\tuple 0)=|\enc(\mA)|$. To encode that $|x'|$ is placed on the first coordinate we also add to $\theta_{\textrm{initial}}$ a formula stipulating that  $f(\tuple s_0,\tuple 0)^{k^*}/g(\tuple s_0,\tuple 0)^{k^*}=f(\tuple s_0+1,\tuple 0)/g(\tuple s_0+1,\tuple 0)$.

   Let $f^*\in \tau$ be a function symbol and let $r_{f^*}$ be a natural number that indicates the starting position of the encoding of $f^*$ in $\enc(\mA)$. Clearly $r_{f^*}$ is a definable real number as it is the value of a fixed univariate polynomial. We use the shorthand $\vec{s} = \vec{y} + r_{f^*}$ to denote that in the ordering of $k$-tuples (induced from $\fleq$) the ordinal number of $\vec{s}$ is the sum of the ordinal number of $\vec{y}$ and $r_{f^*}$. Clearly $\vec{s} = \vec{y} + r_{f^*}$ is expressible in our logic. We then add the following to $\theta_{\textrm{initial}}$: 
   \begin{equation}\label{equ}
   \forall\vec{s}\vec{y} \bigwedge_{f^* \in \tau} \Big( \vec{s} = \vec{y} + r_{f^*} \rightarrow  \big(f(\tuple s,\tuple 0)=f^*(\tuple y)\times g(\tuple s,\tuple 0)\big) \Big)
   \end{equation}

Note that \eqref{eq:frac} and \eqref{equ} imply that $f(\tuple s,\tuple 0)\in(-1,1)$; for, by \eqref{eq:frac}, $|f(s,0)|=1$ leads to $g(s,0)=0 $ which contradicts \eqref{equ}.
The interpretations of relations in $\sigma$ are treated analogously. For all the remaining positions $\tuple s>\tuple s_0$ we stipulate that $0\leq f(\tuple s,\tuple 0)\leq g(\tuple s,\tuple 0)$, and 
 for all positions $\tuple s<\tuple s_0$ we stipulate that $f(\tuple s,\tuple 0)=0$.  In the first case $f(\tuple s,\tuple 0)/g(\tuple s,\tuple 0)$ is some value guessed from the unit interval $[0,1]$ and in the second case it is $0$. We conclude that \eqref{eq:zero} holds by this construction.

\noindent
\textit{Computation configurations.}
Then we define  $\theta_{\textrm{comp}}$ such that
\begin{multline}\label{eq:greater}
(\mA,f,g,\tuple h)\models \theta_{\textrm{comp}}\\
\text{ iff $(f,g,\tuple h)$ satisfies (a) and (b) at time $\tuple t>\tuple 0$}.
\end{multline}
We let 
\begin{align*}
\theta_{\textrm{comp}} \dfn  
\forall \tuple s \, \tuple t \Big(&\bigvee_{1 \leq m< m'\leq N} \big(h_m(\tuple t)=0 \vee h_{m'}(\tuple t)=0 \big)\wedge\\
& \bigvee_{1\leq m\leq N} \big(h_m(\tuple t)=1\wedge \theta_m\big)\Big),
\end{align*}
 where each $\theta_m$ describes the instruction of node $m$. Suppose $m$ is a computation node associated with a mapping $g_m$ that is the identity on coordinates $l\neq i$ and on coordinate $i$ defined as $g_m(x)_i=x_j+x_k$. Let us write $f_{\tuple s,\tuple t}$ and $g_{\tuple s,\tuple t}$ for $f(\tuple s,\tuple t)$ and $g(\tuple s,\tuple t)$, and  $\tuple s_i,\tuple s_j,\tuple s_k$ for the tuples that correspond to the $i$th, $j$th, and $k$th input coordinates. Clearly, these tuples are definable. We define
 \begin{align*}
 \theta_m \dfn &\,h_{\beta(m)}(\tuple t+1)=1 \wedge f_{\tuple s_i,\tuple t+ 1}\times g_{\tuple s_j,\tuple t}\times g_{\tuple s_k,\tuple t}\\
 & =  g_{\tuple s_i,\tuple t+1}\times (  f_{\tuple s_j,\tuple t}\times g_{\tuple s_k,\tuple t} + g_{\tuple s_j,\tuple t} \times f_{\tuple s_k,\tuple t} ) \wedge \\
 & \tuple s\neq \tuple s_i \to (f_{\tuple s,\tuple t+1} = f_{\tuple s,\tuple t} \wedge g_{\tuple s,\tuple t+1} = g_{\tuple s,\tuple t} ).
  \end{align*} 
 The other computation nodes are described analogously. 
 For a shift left node $m$ we define 
 \begin{align*}
 \theta_m \dfn &\,h_{\beta(m)}(\tuple t+1)=1 \,\wedge\\
 & \tuple s < \tuple \max \to (f_{\tuple s,\tuple t+1} = f_{\tuple s+1,\tuple t} \wedge g_{\tuple s,\tuple t+1} = g_{\tuple s+1,\tuple t} ),
 \end{align*} 
 and the case for shift right node is analogous.
 For  a separate branch node $m$ we define
\begin{align*}
 \theta_m \dfn &\,\Big(\big(h_{\beta^+(m)}(\tuple t+1)=1 \wedge f_{\tuple s_0,\tuple t}\geq \epsilon^+\big) \vee\\
  &\big(h_{\beta^-(m)}(\tuple t+1)=1 \wedge f_{\tuple s_0,\tuple t}\leq \epsilon^-\big)\Big)\wedge \\
   &  f_{\tuple s,\tuple t+1} = f_{\tuple s,\tuple t} \wedge g_{\tuple s,\tuple t+1} = g_{\tuple s,\tuple t}.
 \end{align*} 
 Our formulae now imply that \eqref{eq:greater} follows by the construction. In particular, keeping the values of $f$ in $(-1,1)$ ensures that the arithmetical operations are encoded correctly.

Finally, to express that the value of the characteristic function $f_M$ is $1$ we may stipulate without loss of generality that coordinates $-2,-1,1$ respectively contain $0,1,1$; we also need to state that the machine is in node $N$ at the last step:
\begin{align*}
\theta_{\rm accept}:= & h_N(\tuple \max) =1 \land f_{\tuple s_0+1,\tuple \max}= g_{\tuple s_0+1,\tuple \max}\\
& \wedge f_{\tuple s_0-1,\tuple \max}= g_{\tuple s_0-1,\tuple \max}\wedge f_{\tuple s_0-2,\tuple \max}= 0.
 \end{align*}
We conclude that  $\mA \models \exists fg\tuple h\, \psi$
iff $M$ accepts $\enc(\mA)$.

\textbf{\emph{Left-to-right direction.}}
Let $\phi\in\peso{[0,1]}{+,\times,\leq,\RE}$  be a sentence over some vocabulary $\sigma\cup\tau$. As in the previous lemma, we may assume that $\phi$ is of the form 
\[
\exists  f_1 \ldots \exists f_m Q_1 x_1 \ldots Q_n x_n\,\psi,
\] where
 $\psi$ is quantifier-free.
 We may further may transform $\phi$ to an equivalent form 
\begin{equation}\label{eq:skolem2}
\exists  f_1 \ldots \exists f_m \exists g_{i_{l+1}} \ldots
\exists g_{i_n} \forall x_{i_{1}} \ldots \forall x_{i_l} \,\psi',
\end{equation}
 where $g_{i_j}$ are Skolem functions on the finite domain and $\psi'$ is obtained from $\psi$ by replacing each occurrence of $x_{i_j}$, $l+1\leq j\leq n$, with $g_{i_j}(\tuple x_j)$.
Note that \eqref{eq:skolem2} is an intermediate expression which is not anymore in $\peso{[0,1]}{+,\times,\leq,\RE}$.
  We may assume $\psi'$ is in disjunctive normal form $\bigvee_{i\in I} C_i$, where $I$ is a finite set of indices.

  Suppose the relational and function symbols in $\sigma\cup \tau\cup \{f_1, \ldots ,f_m\}$ are of arity at most $n'\geq n$. First, a fixed initial segment of negative coordinates is allocated with the following intention:
\begin{itemize}
\item one coordinate $a$ for separate branching,
\item three coordinates $i,j,k$ for numerical identity atoms,
\item two sequences of coordinates $\tuple b=(b_1, \ldots ,b_n)$ and $\tuple c=(c_1,\ldots ,c_{n'})$ for elements of the finite domain.
\end{itemize}
 We construct a machine $M$ which runs in polynomial time and accepts  $(x,x')$ iff 
\begin{enumerate}
\item $x=\enc(\mA)$ where $\mA$ is a model over $\sigma\cup\tau$, and
\item $(x,x')$ is a concatenation of $\enc((\mA,\tuple f,\tuple g))$
 and indices $i_{\tuple a}\in I$ such that $(\mA,\tuple f,\tuple g,\tuple a)\models C_{i_{\tuple a}}$ for each $\tuple a\in A^{l}$.
\end{enumerate}
We may suppose that $\tuple f$ and $(\tuple g,(i_{\tuple a})_{\tuple a\in A^l})$ are respectively encoded as strings of reals and integers.

 Let $p'$ be a polynomial such that for each $\mA$ over $\sigma\cup \tau$ we have $p'(|A|)=\enc(\mA)$.
The machine first checks whether there is a natural number $d$ such that $p'(d)=|x|$. For this, it first sets $x_i\gets 1$ and $x_a\gets x_0- p'(x_i)$, where initially $x_0=|x|$. If $x_a=0$, then $x_0\gets x_i$, and if $x_a\geq 1$, then $x_i \gets x_i+1$ and the process is repeated. Otherwise, if $x_a\notin\{0\}\cup[1,\infty)$, the input is rejected. This type of branching can be implemented repeating separate branching twice.
 Provided that the input is not rejected, this process terminates with $x_0=d$ where  $p'(d)=|x|$. The machine then checks whether item 1 holds; given $\lvert \mA \rvert$ this is straightforward.
 Checking that $(x,x')$ is a concatenation of $\enc((\mA,\tuple f,\tuple g))$, for some functions $\tuple f,\tuple g$, and some indices $i_{\tuple a}$ is analogous.

 It remains to be checked that the last claim of item 2 holds. We go through all tuples $\vec{a}\in A^l$, calculate the values of the Skolem functions, and check that the disjunct $C_{i_{\tuple a}}$ holds for the calculated value of the variables.
 For each $\vec{a} = (a_1, \ldots ,a_l) \in \{0,\ldots ,d-1\}^l$, placed on the coordinates $b_1, \ldots ,b_l$, the machine uses $x_0$ and $\tuple c$ for retrieving and placing $g_{i_{l+1}}(\tuple a_{l+1}), \ldots ,g_{i_{n}}(\tuple a_{n})$
 on the coordinates $b_{l+1}, \ldots ,b_n$.
 The machine then retrieves the index $i_{\tuple a}$ and checks whether $C_{i_{\tuple a}}$ holds true with respect to the values on coordinates $\tuple b$. Once this process is completed for all value combinations $(a_1, \ldots ,a_l) \in \{0,\ldots ,d-1\}^l$ the computation halts with accept.

The contents of the input are accessed using shifts which fix the contents of the allocated coordinates. That is, we use operations $\sigma_l^{X}$, where $X$ is a finite set of coordinates, such that $\sigma_l^X(x)_i=x_i$ if $i\in X$, and otherwise $\sigma_l^X(x)_i=x_j$ where $j=\min\{k\in \mathbb{N}\mid  k> i,k\notin X\}$. For instance, $\sigma_l^{\{0\}}$
is obtained by first swapping $x_0$ and $x_1$ and then shifting left. 

 Also, if $C_{i_{\tuple a}}$ contains a numerical atom $f(\tuple t_0)\leq g(\tuple t_1)\times h(\tuple t_2)$, then the values of its constituent function terms with respect to $\tuple b$ are placed on coordinates $i,j,k$. The machine then sets $x_a\gets x_i - x_j\times x_k$, and if $x_a \leq 0$, then it continues to the next atom in $C_{i_{\tuple a}}$, and else it rejects.
If $C_{i_{\tuple a}}$ contains a relational atom $R(\tuple x_0)$, then the value of its characteristic function with respect to $\tuple b$ is placed on coordinate $a$. If $x_a=1$, then the machine moves to the next atom in $C_{i_{\tuple a}}$, and else it rejects. Negated relational atoms are treated analogously, and the stated branching is straightforward to implement with separate branch nodes.

It follows from our construction that $M$ runs in polynomial time and accepts $(x,x')$ iff items 1 and 2 hold.
 Hence, we conclude that $\peso{[0,1]}{+,\times,\leq,(r)_{r\in \RE}}\leq \PNP{}{[0,1]}$.
\end{proof}
Suppose we above consider (i) guesses from $\RE$ instead of $[0,1]$, or (ii) BSS instead of S-BSS machines. Then slightly modified proofs yield (i)  $\peso{\RE}{+,\times,\leq,(r)_{r\in \RE}}\equiv \PNP{}{\RE}$, and (ii) $\eso{[0,1]}{+,\times,\leq,(r)_{r\in \RE}}\equiv \NP_{[0,1]}$. Furthermore, logical constants $r\in \RE\setminus \{0,1\}$ are only needed to capture $c$ in constant assignment and $\epsilon^+,\epsilon^-$ in separate branching, and for the converse direction only those machine constants $r\in \RE\setminus \{0,1\}$ which explicitly occur in the logical expression are needed. Thus we obtain the following corollary.  
\begin{corollary}\label{cor:characterisation}
~
\begin{enumerate}
\item $\peso{\RE}{+,\times,\leq,(r)_{r\in \RE}}\equiv\PNP{}{\RE}$,
\item $\peso{\RE}{+,\times,\leq,0,1}\equiv\PNP{0}{\RE}$,
\item $\peso{[0,1]}{+,\times,\leq,0,1}\equiv\PNP{0}{[0,1]}$\label{this},
\item $\eso{[0,1]}{+,\times,\leq,(r)_{r\in \RE}}\equiv\NP_{[0,1]}$,
\item $\eso{[0,1]}{+,\times,\leq,0,1}\equiv\NP_{[0,1]}^{0}$.
\end{enumerate}
\end{corollary}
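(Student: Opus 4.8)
All five equivalences will be obtained by re-running the proof of Theorem~\ref{thm:main} with small, localised modifications; no new idea is needed beyond identifying which invariants of that proof change. There are two independent axes of modification: (A)~widening the range of the guessed string (equivalently, of the quantified functions) from $[0,1]$ to all of~$\RE$; and (B)~replacing separate branch nodes by ordinary branch nodes. On top of these, the superscript-$0$ variants are handled by a uniform bookkeeping argument on constants. Concretely, part~(1) is axis~(A); part~(4) is axis~(B); part~(3) is the constant bookkeeping applied to Theorem~\ref{thm:main}; and parts~(2) and~(5) are the constant bookkeeping applied to~(1) and~(4).

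For axis~(A), i.e.\ part~(1), the only reason for the rational-function encoding ``register $=f(\tuple s,\tuple t)/g(\tuple s,\tuple t)$'' with $f\in(-1,1)$, $g\in(0,1]$ was that the quantified functions were confined to $[0,1]$. When they may take arbitrary real values one takes $g\equiv 1$, discards the conjunct \eqref{eq:frac} of $\theta_{\textrm{pre}}$, and replaces every occurrence of the quotient $f_{\tuple s,\tuple t}/g_{\tuple s,\tuple t}$ by $f_{\tuple s,\tuple t}$; this in fact \emph{simplifies} the arithmetical subformulae, so that, for instance, an addition node is described simply by $f_{\tuple s_i,\tuple t+1}=f_{\tuple s_j,\tuple t}+f_{\tuple s_k,\tuple t}$. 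Separate branching still contributes only \emph{positive} numerical atoms, so the sentence stays in the loose fragment. For the converse direction the simulating machine merely guesses real contents instead of $[0,1]$-contents, and the rest of the left-to-right construction of Theorem~\ref{thm:main} is unchanged.

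For axis~(B), i.e.\ part~(4), the $[0,1]$-valued function/register encoding via $f/g$ is kept exactly as in Theorem~\ref{thm:main}, but separate branch nodes are replaced by ordinary branch nodes. In the right-to-left direction the subformula $\theta_m$ for a branch node $m$ becomes
\[
\big(h_{\beta^-(m)}(\tuple t+1)=1 \wedge f_{\tuple s_0,\tuple t}\leq 0\big)\vee\big(h_{\beta^+(m)}(\tuple t+1)=1 \wedge \neg(f_{\tuple s_0,\tuple t}\leq 0)\big),
\]
conjoined, as before, with $f_{\tuple s,\tuple t+1}=f_{\tuple s,\tuple t}\wedge g_{\tuple s,\tuple t+1}=g_{\tuple s,\tuple t}$; this uses a \emph{negated} numerical atom and hence lands in the full logic $\eso{[0,1]}{+,\times,\leq,(r)_{r\in \RE}}$ rather than its loose fragment. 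In the left-to-right direction a negated numerical atom $\neg(i\leq j)$ is now checkable with one ordinary branch node whose non-satisfying outgoing edge leads to a rejecting sink, while a positive atom is handled as in Theorem~\ref{thm:main}; no separate branching is needed anywhere. All other parts of both directions transfer verbatim.

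Finally, parts~(2), (3) and~(5) follow by the constant bookkeeping. In the right-to-left directions the only constants $r\in\RE\setminus\{0,1\}$ that appear in the constructed sentence are those occurring as $c$ in constant-assignment nodes and as $\epsilon^+,\epsilon^-$ in (separate) branch nodes of the simulated machine, so a machine using only the constants $0,1$ yields a sentence using only the logical constants $0,1$. In the left-to-right directions the simulating machine needs only the constants $0,1$ together with the real constants that literally occur in the input sentence (the auxiliary branchings used there, and the built-in finite-domain predicates furnished by Proposition~\ref{lem:esosub}, use only $0$ and $1$). Combining this with Theorem~\ref{thm:main} gives~(3), and combining it with parts~(1) and~(4) gives~(2) and~(5). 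The main point requiring care is exactly this constant-tracking, together with the verification that for part~(4) introducing the single negated atom above is the \emph{only} change needed to pass from the loose fragment to the full logic; everything else is routine.
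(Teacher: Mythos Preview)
Your proposal is correct and follows essentially the same approach as the paper: the paper's own argument is the brief paragraph immediately preceding the corollary, which likewise isolates the two independent modifications (guesses from $\RE$ versus $[0,1]$, and BSS versus S-BSS branching) and then handles the superscript-$0$ variants by observing that non-$\{0,1\}$ constants arise only from constant-assignment and (separate) branching parameters in one direction, and only from constants explicitly occurring in the sentence in the other. Your write-up simply spells out these modifications in more detail than the paper does.
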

In the following two sections we investigate how S-BSS computability relates to BSS computability, and in particular how $\PNP{}{[0,1]}$ relates to $\NPr$. On the one hand it turns out that  $\PNP{}{[0,1]}$ is strictly weaker than $\NPr$. On the other hand both obvious strengthenings of $\PNP{}{[0,1]}$, namely  $\PNP{}{\RE}$ and $\NP_{[0,1]}$, collapse to $\NPr$.

\section{Characterisation of S-BSS decidable languages}\label{sec:SBSScharacterisation}
We give a characterisation of languages decidable by S-BSS machines using the ideas from the previous section.
The goal of this section is to establish the following theorem:
\begin{theorem}\label{thm:SBSScharacterisation}
Every language that can be decided by a) a deterministic S-BSS machine, or b) a $[0,1]$-nondeterministic S-BSS machine in time $t$, for some function $t\colon \N\rightarrow \N$, is a countable disjoint union of closed sets in the usual topology of $\RE^n$.
\end{theorem}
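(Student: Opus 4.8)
The plan is to reduce everything to a fixed input arity: it suffices to show that for each $n$ the set $L\cap\RE^n$ is a countable disjoint union of closed subsets of $\RE^n$, since $\RE^{*}$ is the topological disjoint union of the clopen pieces $\RE^n$, a subset closed in $\RE^n$ is closed in $\RE^{*}$, and hence $L=\bigsqcup_n (L\cap\RE^n)$ comes out as a countable disjoint union of closed sets. I would adopt the standard convention for decision machines (cf.\ \cite{BSSbook}) that $M$ announces its verdict by reaching one of two distinguished halting nodes, so that acceptance of an input $x$ (resp.\ of a pair $(x,x')$) is determined by the \emph{computation trace} — the sequence of nodes visited together with, at each separate branch node encountered, the recorded choice between $\beta^{+}$ and $\beta^{-}$. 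The one elementary fact used throughout is that, \emph{along a fixed trace}, the content of every register at every time step is a polynomial with real coefficients in the input coordinates $x_1,\dots,x_n$ (and, in the nondeterministic case, also in the guessed coordinates): computation nodes only add, subtract, multiply, or overwrite by a constant, shift nodes merely relabel coordinates, and Definition~\ref{def:BSS} admits no division.

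Next I would isolate the feature that makes S-BSS machines behave well topologically: a separate branch node proceeds to $\beta^{+}$ exactly when the tested register value $p$ satisfies $p\ge\epsilon^{+}$, and to $\beta^{-}$ exactly when $p\le\epsilon^{-}$, and \emph{both} alternatives are \emph{closed} conditions, $p$ being a polynomial and hence continuous in its variables. (By contrast, an ordinary BSS branch node splits on $x_0\le 0$ versus the \emph{open} condition $x_0>0$; this is precisely what lets ordinary BSS machines decide, say, the interval $(0,1)\subseteq\RE$, which by a standard connectedness argument is not a countable disjoint union of closed sets.) It follows that, for a fixed trace $\rho$, the set $C_\rho$ of inputs whose run realises $\rho$ is a finite intersection of sets of the form $\{p\ge\epsilon^{+}\}$ or $\{p\le\epsilon^{-}\}$, and is therefore closed; likewise, in the nondeterministic setting, for a fixed trace $\rho$ and guess length $m$ the set of pairs $(x,x')\in\RE^n\times[0,1]^m$ realising $\rho$ is closed.

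For part~a) let $M$ be a deterministic S-BSS machine with $\chi_L=f_M$. Totality of $f_M$ forces $M$ to halt on every input and never to reject at a separate branch node (which would leave $f_M$ undefined), so every input follows a complete trace ending at a halting node, and determinism makes this trace unique. Since the branch alphabet is finite and traces have finite length, only countably many traces occur, and the corresponding sets $C_\rho$ partition $\RE^n$; hence $L\cap\RE^n=\bigsqcup_{\rho\text{ accepting}}C_\rho$ is a countable disjoint union of closed sets.

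For part~b) let $M$ be a $[0,1]$-nondeterministic S-BSS machine deciding $L$ in time $t$. As for ordinary $\NPr$, the guess length may be bounded by the running time, so only guesses $x'\in[0,1]^m$ with $m\le t(n)$ need be considered; there are finitely many such lengths and, over a finite branch alphabet, finitely many accepting traces $\rho$ of length $\le t(n)$. For each pair $(\rho,m)$ the set $\widetilde D_{\rho,m}\subseteq\RE^n\times[0,1]^m$ of pairs $(x,x')$ realising $\rho$ is closed, and — crucially — since $[0,1]^m$ is \emph{compact}, the projection $\RE^n\times[0,1]^m\to\RE^n$ is a closed map, so its image $A_{\rho,m}$ is closed in $\RE^n$. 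Thus $L\cap\RE^n=\bigcup_{\rho,m}A_{\rho,m}$ is a finite union of closed sets, hence itself closed, a fortiori a countable disjoint union of closed sets. In all of this the routine part is the polynomial bookkeeping along a trace; the load-bearing points are that S-BSS branching yields genuinely \emph{closed} conditions and that the $[0,1]$-restriction makes the guess space compact, so that the existential quantification over guesses (a projection) preserves closedness. I expect the nondeterministic case to be the main obstacle — specifically, recognising that compactness of the guess space is exactly what is required and that dropping the $[0,1]$-restriction is fatal, in line with $\PNP{}{\RE}\equiv\NPr\equiv\NP_{[0,1]}$, classes that are not confined to countable disjoint unions of closed sets.
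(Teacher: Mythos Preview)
Your argument is correct and identifies the same two load-bearing facts the paper relies on: (i) each separate-branch condition $p\ge\epsilon^{+}$ or $p\le\epsilon^{-}$ is a \emph{closed} polynomial inequality, and (ii) compactness of $[0,1]^m$ makes the projection $\RE^n\times[0,1]^m\to\RE^n$ a closed map, so existential quantification over guesses preserves closedness.

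The route, however, is genuinely different. The paper does not work with computation traces directly. Instead it first proves a syntactic encoding lemma (Lemma~\ref{new}): for each $n,t$ the sets $L^n_t(M)$ and $L^n_{\le t}(M)$ are definable by sentences of \emph{existential loose $[0,1]$-guarded real arithmetic} (using the fraction trick $f/(1-f^2)$ from Theorem~\ref{thm:main} to keep all quantified reals in $[0,1]$). It then proves, by a separate structural induction (Theorem~\ref{thm:PEFO}), that every formula of that fragment defines a closed set --- the compactness argument you give appears there, in the quantifier step. The countable decomposition in the deterministic case is by exact running time, $L=\bigsqcup_{n,t}L^n_t(M)$, rather than by trace; in the nondeterministic case the paper, like you, obtains that each $L^n$ is already closed.

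Your trace-based argument is more elementary and entirely self-contained: it avoids the detour through a logical fragment and needs no encoding of register contents into $[0,1]$. What the paper's detour buys is reusability --- Lemma~\ref{new} and Theorem~\ref{thm:PEFO} are invoked again verbatim in Section~\ref{sec:complexityhierarchy} (separating $\PNP{}{[0,1]}$ from $\NPr$) and underpin Theorem~\ref{thm:E01} ($\exists[0,1]^{\le}=\bp{\PNP{0}{[0,1]}}$). Your proof establishes the theorem cleanly but would not directly yield those corollaries without redoing the logical encoding.
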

The result complements an analogous characterisation of BSS-decidable languages thus giving insight on the difference of the computational powers of BSS machines and S-BSS machines.
\begin{theorem}[{\cite[Theorem 1]{BSSbook}}]
	Every language decidable by a (deterministic) BSS machine is a countable disjoint union of semi-algebraic sets. 
\end{theorem}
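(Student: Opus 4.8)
The plan is to fix the input arity, decompose the language according to the (countably many) possible computation paths through the machine's node graph, and show that each accepting path contributes a semi-algebraic set, with determinism forcing these contributions to be pairwise disjoint. I read ``semi-algebraic subset of $\RE^*$'' in the only sensible way: a set contained in some $\RE^n$ and semi-algebraic there. Since $\RE^* = \bigsqcup_{n\in\N}\RE^n$ is itself a countable disjoint union, it suffices to prove that $L\cap\RE^n$ is a countable disjoint union of semi-algebraic subsets of $\RE^n$ for each fixed $n$; sets living in different $\RE^n$ are automatically disjoint, so the global decomposition is obtained by taking the union over all $n$.

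Fix $n$. On an input from $\RE^n$ the initial state is $(\ldots,0,n,x_1,\ldots,x_n,0,\ldots)$, so at time $0$ every coordinate is a polynomial in $x_1,\ldots,x_n$ with real coefficients (the relevant constants being $0$ and $n$). The technical heart of the argument is a lemma proved by induction on the number of steps $t$: after following any node sequence of length $t$, each coordinate of the current state is such a real-coefficient polynomial in $x_1,\ldots,x_n$. The inductive step is routine for the four computation-node operations, since addition, subtraction, multiplication, and constant assignment all preserve the class of real-coefficient polynomials---crucially there is no division---and it is trivial for shift nodes, which merely relabel which polynomial occupies which coordinate. Only finitely many coordinates can differ from their initial values after finitely many steps, so at every stage the computation is governed by finitely many polynomials in $x_1,\ldots,x_n$.

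Next, for a finite node sequence $P=(1=m_0,m_1,\ldots,m_t=N)$ ending at the output node, I would describe the inputs whose computation follows exactly $P$. At each branch node $m_s$ along $P$ the machine compares the zeroth coordinate with $0$; by the lemma this coordinate is a polynomial $p(x_1,\ldots,x_n)$, and following $P$ imposes the sign condition $p\le 0$ or $p>0$ according to whether $P$ takes the $\beta^-$ or $\beta^+$ branch. Hence ``the computation follows $P$'' is a finite conjunction of polynomial sign conditions and defines a semi-algebraic set $U_P\subseteq\RE^n$. To isolate acceptance I would further require that the halting state produce output value $1$: the output map reads designated negative and positive coordinates, and demanding that these equal the prescribed reals is another finite conjunction of polynomial equalities on the (polynomial) halting-state coordinates. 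Intersecting, the set $S_P$ of size-$n$ inputs that follow $P$ and output $1$ is semi-algebraic.

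Finally I would assemble the decomposition. Distinct node sequences $P\neq P'$ first diverge at some branch node, where one imposes $p\le 0$ and the other $p>0$; these are complementary, so $U_P\cap U_{P'}=\emptyset$, and hence the $S_P$ are pairwise disjoint---this is exactly where determinism enters, guaranteeing that each halting input follows one and only one path. There are only countably many finite node sequences, so $L\cap\RE^n=\bigsqcup_{P}S_P$ is a countable disjoint union of semi-algebraic sets (non-halting inputs and rejecting inputs lie in no $S_P$, as desired), and taking the union over $n$ completes the proof. The one step needing genuine care is the polynomiality lemma: one must track the bookkeeping of the bi-infinite state under repeated shifts and argue that only boundedly many coordinates are ever active, so that the branch tests are bona fide polynomial inequalities and each accepting path really does cut out a semi-algebraic set.
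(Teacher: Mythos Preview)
The paper does not prove this theorem; it is quoted verbatim from \cite{BSSbook} as background to contrast with the authors' own Theorem~\ref{thm:SBSScharacterisation} for S-BSS machines, so there is no in-paper proof to compare against. Your argument is the standard path-decomposition proof from that source and is correct for the machine model defined in this paper (which, as you rightly note, omits division so that state coordinates remain polynomials rather than rational functions). One small inaccuracy: the acceptance condition ``output equals the string $1$'' is not purely a conjunction of polynomial \emph{equalities}; with the output map $g_O$ of Definition~\ref{def:BSS} you must also say that the run of consecutive $1$'s in the negative coordinates has length exactly one, which introduces a disequality $p_{-2}(x)\neq 1$. This is still a first-order sign condition and hence semi-algebraic, so the overall conclusion is unaffected.
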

These characterisations are based on the fact that  the computation of BSS and S-BSS machines can be encoded by formulae of first-order real arithmetic.

\paragraph{Existential theory of the real arithmetic.}
Formulae of the \emph{existential real arithmetic}
are given by the grammar
\begin{equation}\label{eq:synt}
\phi::=i\leq i \mid i<i \mid \phi\wedge \phi \mid \phi \vee \phi \mid \exists x\phi,
\end{equation}
where $i$ stands for numerical terms given by the grammar
\[
i::= 0 \mid 1\mid x \mid i\times i\mid i+i,
\]
where $x$ is a first-order variable.
 The semantics is defined over a fixed structure $(\RE, +, \times, \leq, 0, 1)$ of real arithmetic in the usual way. Relations definable by such formulae with additional real constants are called semi-algebraic.

Let $M$ be an S-BSS machine and $n,t\in \N$ positive natural numbers. We denote by $L^n_t(M)$  ($L^n_{\leq t}(M)$, resp.) the set of strings $s\in \RE^n$ accepted by $M$ in time exactly (at most, resp.) $t$, and
define $L^n(M) \dfn  L(M)\cap \RE^n$. 
The following restricted fragment of $\exists \FO$ is enough to  encode S-BSS computations.

\paragraph{Existential theory of the loose $[0,1]$-guarded real arithmetic.}
Formulae of the \emph{existential loose $[0,1]$-guarded real arithmetic}
are defined as in \eqref{eq:synt}, but without $i<i$ and replacing $\exists x\phi$ with $\exists x (0\leq x\leq 1\wedge \phi)$.

\begin{lemma}\label{new}
Given a  deterministic or $[0,1]$-nondeterminis-tic S-BSS machine  $M$ and  positive $n,t\in \N$ it is possible to construct, in polynomial time, formulas $\phi$ and $\psi$ of loose $[0,1]$-guarded real arithmetic, with free variables $x_1,\dots, x_n$, that may use real constants used in $M$ such that 
\begin{align*}
	\{\big(s(x_1),\dots, s(x_n)\big) \mid (\RE, +, \times, \leq, (r)_{r\in\RE}) &\models_s \phi\} = L^n_t(M),\\
	\{\big(s(x_1),\dots, s(x_n)\big) \mid (\RE, +, \times, \leq, (r)_{r\in\RE}) &\models_s \psi\} = L^n_{\leq t}(M).
\end{align*}
\end{lemma}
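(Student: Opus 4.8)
The plan is to push the right-to-left direction of Theorem~\ref{thm:main} down to first-order real arithmetic. The point is that once the time bound $t$ and the input size $n$ are fixed concrete numbers, everything relevant is finite: in $t$ steps a machine can read or write only the registers with index in a window $W=\{-(c+t),\dots,c+t\}$, where $c$ depends only on $M$ (the largest register index occurring in a computation node of $M$), together with the input/guess coordinates $0,1,\dots,n+m+1$ and the output-reading coordinates $-2,-1,1$; here $m$, the length of the string guessed by a $[0,1]$-nondeterministic machine, may be taken $\le t$. So instead of quantifying functions over a finite part, as in Theorem~\ref{thm:main}, I would unfold them into polynomially many first-order variables, one per (register, time) or (node, time) pair. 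The decisive observation that keeps us inside the \emph{loose $[0,1]$-guarded} fragment is that a separate branch node tests $x_0\ge\epsilon^+$ against $x_0\le\epsilon^-$, both non-strict; an S-BSS run therefore never requires a strict inequality, unlike an ordinary BSS branch, which splits on $x_0\le 0$ against $x_0>0$.

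Concretely I would introduce, for every time step $\tau\in\{0,\dots,t\}$: node-indicator variables $h_{i,\tau}$, $i\in\{1,\dots,N\}$, each guarded by $0\le h_{i,\tau}\le 1$ and pinned to $\{0,1\}$ by the $\le$-pair expressing $h_{i,\tau}\times h_{i,\tau}=h_{i,\tau}$, with $h_{1,\tau}+\dots+h_{N,\tau}=1$ (again two $\le$-atoms) forcing exactly one active node per step; and, for each $r\in W$, a pair $(f_{r,\tau},g_{r,\tau})$ of guarded variables subject to $f_{r,\tau}\times f_{r,\tau}+g_{r,\tau}=1$, so that, exactly as in Theorem~\ref{thm:main}, $g_{r,\tau}=1-f_{r,\tau}^2\in(0,1]$ and the content of register $r$ at time $\tau$ is the real $f_{r,\tau}/g_{r,\tau}$ (using that $x\mapsto x/(1-x^2)$ is a bijection $(-1,1)\to\RE$). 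For a $[0,1]$-nondeterministic machine the guessed string is recorded by $m$ further guarded variables $y_1,\dots,y_m$, which fit the fragment precisely because its guard is $0\le x\le 1$. The free variables $x_1,\dots,x_n$ are the input itself and stay unencoded, being unrestricted.

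The formula is then the guarded existential closure of $\theta_{\mathrm{pre}}\wedge\theta_{\mathrm{init}}\wedge\theta_{\mathrm{comp}}\wedge\theta_{\mathrm{acc}}$. Here $\theta_{\mathrm{pre}}$ gathers the normalisation constraints above; $\theta_{\mathrm{init}}$ says that at time $0$ node $1$ is active and the registers hold the input configuration, namely register $0$ the constant $n$, register $1$ the constant $m$, register $i+1$ the value $x_i$ (via $f_{i+1,0}=x_i\times g_{i+1,0}$), the next coordinates the $y_j$, and every other register of $W$ zero; $\theta_{\mathrm{comp}}=\bigwedge_{\tau=0}^{t-1}\bigwedge_{m=1}^{N}\big(h_{m,\tau}\le 0\ \vee\ \theta^\tau_m\big)$, where $\theta^\tau_m$ transcribes the instruction of node $m$ in the $(f,g)$-representation: for a computation node that writes $x_j+x_k$ into coordinate $i$ the core conjunct is the identity $f_{i,\tau+1}\times g_{j,\tau}\times g_{k,\tau}=g_{i,\tau+1}\times(f_{j,\tau}\times g_{k,\tau}+g_{j,\tau}\times f_{k,\tau})$, with multiplication and --- after moving the negative term across the equality, since the term grammar has $+$ but not $-$ --- subtraction handled the same way, plus ``all other registers unchanged'' and $h_{\beta(m),\tau+1}\ge 1$; a shift node permutes the pairs $(f_{r,\tau},g_{r,\tau})$ along $W$; and a separate branch node gives $\big(h_{\beta^+(m),\tau+1}\ge 1\wedge\epsilon^+\times g_{0,\tau}\le f_{0,\tau}\big)\vee\big(h_{\beta^-(m),\tau+1}\ge 1\wedge f_{0,\tau}\le\epsilon^-\times g_{0,\tau}\big)$ together with ``all registers unchanged'' (rejection is automatic: if the register-$0$ value lies strictly between $\epsilon^-$ and $\epsilon^+$, this disjunction has no model). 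Finally $\theta_{\mathrm{acc}}$ asserts $h_{N,t}\ge 1$ and that the output-reading registers spell out $1$, i.e. register $-1$ holds $1$, register $-2$ holds $0$, register $1$ holds $1$; for $\phi$ (``time exactly $t$'') one also conjoins $h_{N,\tau}\le 0$ for every $\tau<t$, while $\psi$ (``time at most $t$'') is obtained either by adding a self-loop conjunct that freezes the configuration once $N$ is reached, or simply as $\bigvee_{t'=1}^{t}\phi_{t'}$. Every atom then has the form $i\le i$, every equality is two such atoms, and no quantifier is unguarded, so both formulas lie in the fragment; since there are $O((n+t)t)$ register variables and $O(Nt)$ node variables, and each conjunct has size $O(|M|)$, the construction is polynomial-time.

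The main obstacle is reconciling the $[0,1]$-guard with the need to represent \emph{arbitrary} register contents: the honest $(f,g)$-coding wants $f$ to range over $(-1,1)$, not $[0,1]$, and a negative machine constant $c$ or $\epsilon^-$ forces a negative $f$-value. I would deal with this exactly as in Lemma~\ref{lem:minus}, carrying the first-order analogue of its sign variables for each relevant (register, time) pair --- equivalently, splitting each $f_{r,\tau}$ into guarded nonnegative parts $f^{+}_{r,\tau},f^{-}_{r,\tau}$ with $f^{+}_{r,\tau}\times f^{-}_{r,\tau}=0$ and intended value $f^{+}_{r,\tau}-f^{-}_{r,\tau}$ --- propagating signs through $+$, $\times$ and the rephrased $-$, and expanding every $f_{r,\tau}$-inequality into the corresponding sign-cased disjunction of $\le$-atoms over nonnegative terms. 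One must also check that $g_{r,\tau}>0$ is genuinely forced (so that $f_{r,\tau}/g_{r,\tau}$ is always defined): this propagates forward from $\theta_{\mathrm{init}}$ and $f^2+g=1$ just as in the proof of Theorem~\ref{thm:main}. Everything else is a routine but lengthy transcription of the BSS semantics.
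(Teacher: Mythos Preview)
Your proposal is correct and follows essentially the same approach as the paper: the paper's proof is deliberately terse, saying only that one unfolds the computation into $O(t^2)$ guarded first-order variables and encodes register contents by pairs $(f,g)$ from the unit interval exactly as in the right-to-left direction of Theorem~\ref{thm:main}, while you spell out precisely these details (node indicators, the $f^2+g=1$ trick, the sign-splitting from Lemma~\ref{lem:minus} to reconcile $(-1,1)$ with the $[0,1]$-guard, and the non-strict branching). The only cosmetic difference is that the paper phrases the count as ``$O(t^2)$ registers'' whereas you make the window $W$ and the per-node bookkeeping explicit; there is no substantive divergence.
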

\begin{proof}
For a given input  of length $n$, the computation of $M$  consists of $t$ many configurations $\vec c_1,\dots \vec c_{t}$ of $M$, where $\vec c_1$ and  $\vec c_{t}$ are the initial configuration and a terminal configuration, respectively, and, for $1\leq m < t$,  $\vec c_{m+1}$ is a successor configuration of $\vec c_{m}$. Each configuration is a string of real  numbers of length $\mathcal{O}(t)$. We can use a similar technique as in the \emph{right-to-left} direction of Theorem \ref{thm:main} and encode the contents of registers by pairs of real numbers from the unit interval $[0,1]$. In order to encode the computation, it suffices to encode the values of $\mathcal{O}(t^2)$ registers; thus $\mathcal{O}(t^2)$ variables suffice.
	We then construct a formula of existential loose $[0,1]$-guarded real arithmetic of size $\mathcal{O}(t^2)$ that first existentially quantifies $\mathcal{O}(t^2)$-many variables in order to guess the whole computation of $M$ on the given input and then expresses, using perhaps at most polynomially many extra variables, that the computation is correct and accepting.  We omit further details, for the encoding is done in a similar manner as in the \emph{right-to-left} direction of Theorem \ref{thm:main}.
\end{proof}

Given a deterministic S-BSS machine $M$, it is easy to see that the sets  $L^n_t(M)$, for $n,t\in \N$, are disjoint. However, the same does not need to hold for nondeterministic machines, for the time it takes to accept an input string $x$ might depend on the guessed value for the string $x'$ (and there may be multiple accepting runs with different values for $x'$).
This problem can be evaded for languages $L$ that can be decided by a $[0,1]$-nondeterministic S-BSS machine $N$ in time $f$, for some function $f\colon \N\rightarrow \N$. In this case $L^n(N)= L^n_{\leq f(n)}(N)$, for each $n\in \N$.
Now since $L(M)=\bigcup_{n,t\in \N} L^n_t(M)$ and $L(N) =\bigcup_{n\in \N} L^n(N)$ where the unions are disjoint, we obtain the following characterisation.
\begin{theorem}\label{thm:SBSSER}
	Every language decidable by a) a deterministic S-BSS machine or b) a $[0,1]$-nondeterministic S-BSS machine in time $t$, for some $t\colon\N\rightarrow \N$, is a countable disjoint union of relations 
	defined by existential loose $[0,1]$-guarded real arithmetic formulae that may use real constants from some finite~set.
\end{theorem}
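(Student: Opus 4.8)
The plan is to assemble the claim directly from Lemma~\ref{new} together with the two disjointness observations recorded just before the statement. First I would treat the deterministic case. Fix a deterministic S-BSS machine $M$. Because $M$ is deterministic, each input in $\RE^n$ has a unique computation and is therefore accepted (if at all) after a uniquely determined number of steps; hence the sets $L^n_t(M)$ are pairwise disjoint as $(n,t)$ ranges over $\N\times\N$, and $L(M)=\bigcup_{n,t\in\N}L^n_t(M)$ is a countable disjoint union. Lemma~\ref{new} supplies, for each $(n,t)$, a formula $\phi_{n,t}$ of existential loose $[0,1]$-guarded real arithmetic, using only real constants occurring in $M$, whose set of solutions over $(\RE,+,\times,\leq,(r)_{r\in\RE})$ is exactly $L^n_t(M)$. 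Since $M$ has finitely many nodes, the real constants occurring in $M$ form one fixed finite set that serves all the $\phi_{n,t}$ simultaneously, so $L(M)$ is displayed as a countable disjoint union of relations of the required form.

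Next I would handle the $[0,1]$-nondeterministic case. Fix $N$ deciding $L$ in time $t\colon\N\to\N$. The naive decomposition by exact running time does not work here: a fixed input of length $n$ may have several accepting runs, one for each suitable guessed string $x'\in[0,1]^*$, and these may halt after different numbers of steps, so the sets $L^n_s(N)$ need not be disjoint over $s\in\N$. The remedy, already noted above, is that the time bound forces every accepted string in $\RE^n$ to have an accepting run of length at most $t(n)$, so $L^n(N)=L^n_{\leq t(n)}(N)$; and the sets $L^n(N)$, $n\in\N$, are trivially disjoint since they lie in distinct ambient spaces $\RE^n$. Applying the second half of Lemma~\ref{new} to the pair $(n,t(n))$ gives a formula $\psi_n$ of existential loose $[0,1]$-guarded real arithmetic, again over a single fixed finite set of real constants (those in $N$), defining $L^n_{\leq t(n)}(N)=L^n(N)$. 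Then $L(N)=\bigcup_{n\in\N}L^n(N)$ is the desired countable disjoint union.

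I expect no genuinely hard step once Lemma~\ref{new} is available: the proof is essentially a packaging argument. The one point that genuinely needs the hypotheses is the disjointness in part~b), which is exactly why the statement restricts to $[0,1]$-nondeterministic machines that run within a fixed time bound $t$ rather than to arbitrary $[0,1]$-nondeterministic S-BSS machines; and the only bookkeeping worth double-checking is that the finite set of real constants can be chosen uniformly across the whole union, which holds because the machine has only finitely many nodes and hence only finitely many constants in total. Finally, I would remark that Theorem~\ref{thm:SBSScharacterisation} follows immediately, since each relation defined by an existential loose $[0,1]$-guarded real arithmetic formula (no strict inequalities, all quantifiers bounded to the compact interval $[0,1]$) is a closed subset of $\RE^n$.
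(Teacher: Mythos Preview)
Your proposal is correct and follows exactly the same route as the paper: the deterministic case is handled by the disjoint decomposition $L(M)=\bigcup_{n,t}L^n_t(M)$ together with Lemma~\ref{new}, and the $[0,1]$-nondeterministic case by $L(N)=\bigcup_n L^n(N)=\bigcup_n L^n_{\leq t(n)}(N)$ and the second clause of Lemma~\ref{new}. Your observations about uniformity of the finite constant set and the reason the time bound is needed in part~b) are precisely the points the paper makes in the paragraph preceding the theorem.
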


The rest of this section is dedicated on proving the following theorem, which together with Theorem \ref{thm:SBSSER} implies Theorem \ref{thm:SBSScharacterisation}.
\begin{theorem}\label{thm:PEFO}
	Every relation defined by some existential loose $[0,1]$-guarded real arithmetic formula $\phi(x_1,...,x_n)$ with real constants is closed in %the product space 
	$\RE^n$.
\end{theorem}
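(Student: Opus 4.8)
The plan is to argue by structural induction on the formula, proving the slightly more general statement that for \emph{every} formula $\phi$ of existential loose $[0,1]$-guarded real arithmetic that may use real constants, and every tuple $\vec x=(x_1,\dots,x_n)$ of variables containing all variables free in $\phi$, the set
$S_\phi\dfn\{\vec a\in\RE^n\mid (\RE,+,\times,\leq,(r)_{r\in\RE})\models_{\vec x\mapsto\vec a}\phi\}$
is closed in $\RE^n$. (Alternatively one may first put $\phi$ into prenex form — legitimate here because every guard $0\leq y\leq 1$ is satisfiable — and then treat a single block of guarded existentials; the structural induction simply avoids the associated bookkeeping.)

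The base case is an atom $i\leq j$. Since $i$ and $j$ are terms built from $0,1$, real constants, variables, $+$ and $\times$, they induce continuous polynomial functions $\RE^n\to\RE$; hence $\vec a\mapsto (\text{value of }i\text{ at }\vec a)-(\text{value of }j\text{ at }\vec a)$ is continuous, and $S_{i\leq j}$ is the preimage of the closed half-line $(-\infty,0]$, so it is closed. The connective cases are immediate: $S_{\phi\wedge\psi}=S_\phi\cap S_\psi$ and $S_{\phi\vee\psi}=S_\phi\cup S_\psi$, and finite intersections and unions of closed sets are closed. It is worth stressing that it is exactly here that the restrictions of the fragment are used: an atom $i<i$ or a negation would in general produce open sets, and the argument would break.

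The crux is the guarded existential $\phi=\exists y\,(0\leq y\leq 1\wedge\phi')$, where (after renaming $y$ if necessary) $\phi'$ has free variables among $x_1,\dots,x_n,y$. By the induction hypothesis, $S_{\phi'}\subseteq\RE^{n+1}$ is closed, hence so is $C\dfn S_{\phi'}\cap(\RE^n\times[0,1])$. Now $S_\phi=\pi(C)$ where $\pi\colon\RE^n\times[0,1]\to\RE^n$ is the projection onto the first $n$ coordinates. Here I would invoke the standard fact that projection along a \emph{compact} factor is a closed map: if $K$ is compact and $C\subseteq\RE^n\times K$ is closed, then $\pi(C)$ is closed in $\RE^n$. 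The proof is the usual tube-lemma argument: given $\vec a\notin\pi(C)$, for each $k\in K$ pick a basic open box $U_k\times V_k\ni(\vec a,k)$ disjoint from $C$, extract a finite subcover $V_{k_1},\dots,V_{k_r}$ of $K$, and then $U\dfn\bigcap_{i}U_{k_i}$ is an open neighbourhood of $\vec a$ with $(U\times K)\cap C=\varnothing$, so $U\cap\pi(C)=\varnothing$. Applying this with $K=[0,1]$, which is compact, yields that $S_\phi=\pi(C)$ is closed, completing the induction and hence the proof of Theorem~\ref{thm:PEFO}.

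The only non-formal ingredient is thus the closed-map property of projections along compact spaces, and I expect that to be the conceptual heart of the argument rather than a genuine obstacle; everything else is routine topology. The proof also makes transparent why the $[0,1]$-guard is indispensable: it is exactly what provides the compactness needed in the existential step. Without it — quantifying over all of $\RE$ — closedness fails already for simple formulas, e.g.\ $\exists y\,(x\times y\leq 1\wedge 1\leq x\times y)$ defines $\RE\setminus\{0\}$, which is not closed; this is consistent with the fact that lifting the guard is precisely what pushes the expressive power up to $\esor{+,\times,\leq,(r)_{r\in\RE}}$.
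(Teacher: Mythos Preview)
Your proof is correct and follows essentially the same approach as the paper: structural induction on the formula, with the atomic case handled via continuity of polynomials and preimages of $(-\infty,0]$, the Boolean cases via finite unions/intersections of closed sets, and the guarded-existential case via the closed-map property of projection along the compact factor $[0,1]$. The paper isolates that last fact as a separate lemma (your tube-lemma argument), whereas you inline it, and your additional remarks on why the guard and the absence of strict inequalities are essential go slightly beyond what the paper spells out, but the core argument is the same.
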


\paragraph{Point-set topology.}
The proof of the theorem relies on some rudimentary notions and knowledge from
point-set topology summarised
in the following two lemmas (for basics of point-set topology see, e.g., the monograph \cite{willard2004general}).
In order to simplify the notation, for a topological space $X$, we use $X$ to denote also the underlying set of the space. Likewise, in this section, we let [0, 1] denote the topological space that has domain [0, 1] and the metric of Euclidean distance.

\begin{lemma}\label{prop:topology}
	Let $X$ and $Y$ be topological spaces, $f\colon X\rightarrow Y$ a continuous function, $A$ and $B$ closed sets in $X$, and $C$ a closed set in $Y$. Then
	\begin{itemize}
		\item $X$, $A\cap B$, $A\cup B$, and $f^{-1}[C]$ are closed in $X$,
		\item the product $A\times C$ is closed in the product space $X\times Y$,
		\item if $Y\supseteq A$ is a subspace of $X$ then $A$ is closed in $Y$.
	\end{itemize}
\end{lemma}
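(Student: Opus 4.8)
The plan is to verify each closed-set property directly from the definitions, reducing everything to the two topology axioms (open sets are closed under arbitrary unions and finite intersections) by passing to complements, since a set is closed precisely when its complement is open. All of these are standard facts of point-set topology, so the proof is essentially a matter of bookkeeping.

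First I would dispatch the four claims about closedness in $X$. That $X$ is closed is immediate, since its complement $\emptyset$ is open. For $A\cup B$ and $A\cap B$ I would complement: $X\setminus(A\cup B)=(X\setminus A)\cap(X\setminus B)$ is a finite intersection of open sets, hence open, and $X\setminus(A\cap B)=(X\setminus A)\cup(X\setminus B)$ is a union of open sets, hence open. For $f^{-1}[C]$ I would use that continuity of $f$ means $f^{-1}$ sends open sets to open sets; then $X\setminus f^{-1}[C]=f^{-1}[Y\setminus C]$ is the preimage of an open set, hence open, so $f^{-1}[C]$ is closed.

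Next, for the product $A\times C$ in $X\times Y$, the cleanest route is to reuse the preimage fact just established rather than to argue pointwise. Let $\pi_X\colon X\times Y\to X$ and $\pi_Y\colon X\times Y\to Y$ be the coordinate projections, which are continuous by the definition of the product topology. Then $A\times C=\pi_X^{-1}[A]\cap\pi_Y^{-1}[C]$ is an intersection of two preimages of closed sets under continuous maps, and hence is closed by the preimage and finite-intersection parts already proved. (Alternatively one may check directly that the complement $\big((X\setminus A)\times Y\big)\cup\big(X\times(Y\setminus C)\big)$ is a union of basic open sets.)

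Finally, for the subspace claim I would invoke the definition of the subspace topology: the closed sets of a subspace $Y\subseteq X$ are exactly the sets $F\cap Y$ with $F$ closed in $X$. Since $A\subseteq Y$ and $A$ is closed in $X$, we have $A=A\cap Y$, exhibiting $A$ as the intersection of a set closed in $X$ with $Y$; hence $A$ is closed in $Y$. None of these steps presents a genuine obstacle; the only point requiring mild care is the product case, where one must correctly appeal to continuity of the projections (equivalently, to the structure of the product topology) instead of arguing coordinatewise.
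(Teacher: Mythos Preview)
Your proof is correct. The paper does not actually prove this lemma; it merely states it as a summary of rudimentary point-set topology and refers the reader to a standard textbook, so your argument supplies strictly more detail than the paper itself.
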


\begin{lemma}\label{thm:projectionclosed}
	Let $X$ be a topological space, $Y$ a compact topological space, $A$ a closed set in the product space $X\times Y$, and $f$ the projection function $X\times Y \rightarrow X$. Then the image $f[A]$ of $A$ is closed in $X$.
\end{lemma}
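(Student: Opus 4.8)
The plan is to prove that $X\setminus f[A]$ is open, which is equivalent to $f[A]$ being closed. This is the standard ``closed projection along a compact factor'' argument (sometimes called the tube lemma). So I would fix an arbitrary point $x_0\in X\setminus f[A]$ and produce an open neighbourhood of $x_0$ that misses $f[A]$ entirely.

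The key observation is that $x_0\notin f[A]$ means the whole vertical slice $\{x_0\}\times Y$ is disjoint from $A$; equivalently, every point $(x_0,y)$ lies in the open set $(X\times Y)\setminus A$. By the definition of the product topology, for each $y\in Y$ I can then choose basic open rectangles: an open $U_y\subseteq X$ with $x_0\in U_y$ and an open $V_y\subseteq Y$ with $y\in V_y$ such that $U_y\times V_y\subseteq (X\times Y)\setminus A$. The family $\{V_y\mid y\in Y\}$ is an open cover of $Y$, and here is where compactness enters: extract a finite subcover $V_{y_1},\dots,V_{y_k}$, and set $U\dfn U_{y_1}\cap\dots\cap U_{y_k}$, a finite intersection of open neighbourhoods of $x_0$ and hence itself an open neighbourhood of $x_0$.

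I would then verify the ``uniform tube'' claim that $U\times Y$ is disjoint from $A$. Given any $(u,y)\in U\times Y$, since the $V_{y_i}$ cover $Y$ we have $y\in V_{y_i}$ for some $i$, and $u\in U\subseteq U_{y_i}$, so $(u,y)\in U_{y_i}\times V_{y_i}\subseteq (X\times Y)\setminus A$; thus $(u,y)\notin A$. Consequently no $u\in U$ can lie in $f[A]$, because that would require some $(u,y)\in A$ contradicting what we just showed. Hence $U\subseteq X\setminus f[A]$, so $x_0$ is an interior point of the complement. As $x_0$ was arbitrary, $X\setminus f[A]$ is open and $f[A]$ is closed.

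The only real obstacle is the uniformisation step: a priori each point of the slice gives its own $X$-neighbourhood $U_y$, and the intersection of infinitely many of them need not be open. Compactness of $Y$ is exactly what reduces this to a finite intersection, and it is the indispensable hypothesis (the statement fails without it). Everything else is a routine unwinding of the product topology and can be stated without further appeal to Lemma~\ref{prop:topology}.
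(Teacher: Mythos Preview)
Your argument is correct and is the standard tube-lemma proof of this fact. The paper does not actually prove this lemma; it is stated without proof as a rudimentary fact from point-set topology (with a reference to a standard monograph), so there is nothing to compare against.
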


\begin{proof}[Proof of Theorem \ref{thm:PEFO}]
	We prove the following claim by induction on the structure of the formulae: Let $\vec{x}$ be a $k$-tuple of distinct variables and $\phi(\vec{x})$ an existential loose $[0,1]$-guarded real arithmetic formula with real constants,
	and its
	free
	variables in $\vec{x}$. 
	The relation defined by $\phi(\vec{x})$ is closed in $\RE^k$.
	
	\begin{itemize}
		\item Assume $\phi=t_1 \leq t_2$.
		Recall that $t_1(\vec{x})$ and $t_2(\vec{x})$ are multivariate polynomials.
		Define $g(\vec{x})$ as the multivariate polynomial $t_1(\vec{x})-t_2(\vec{x})$ and consider the preimage $g^{-1}[(-\infty,0]]$. Since $(-\infty,0]$ is closed in $\RE$ and $g\colon \RE^k\rightarrow \RE$ is a continuous function, it follows that $g^{-1}[(\infty,0]]$ is closed.
		Clearly $g^{-1}[(-\infty,0]]$ is the relation defined by $\phi(\vec{x})$.
		\item The cases of disjunctions and conjunctions are clear, for the union and intersection of closed sets is closed.
		\item Assume $\phi=\exists y (0\leq y\leq 1\wedge \psi(\tuple x,y))$.
		Let  $R_\psi$ be the relation defined by $\psi(\vec{x},y)$, which by induction hypothesis is closed in $\RE^{k+1}$. Define $R'_\psi \dfn R_\psi \cap (\RE^k\times [0,1])$. Since $[0,1]$ is closed in $\RE$, it follows from
		Lemma~\ref{prop:topology} that $R'_\psi$ is closed both in $\RE^{k+1}$ and $\RE^k\times [0,1]$. Let $R^*_\psi$ be the projection of $R'_\psi$ to its $k$ first columns. Since $R'_\psi$ is closed in $\RE^{k}\times [0,1]$, and $[0,1]$ is a compact topological space, it follows from
		Lemma~\ref{thm:projectionclosed} that $R^*_\psi$ is closed in
		$\RE^k$. Clearly $R^*_\psi$ is the relation defined by $\psi(\vec{x})$.\qedhere
	\end{itemize}
\end{proof}

\section{Hierarchy of the complexity classes}\label{sec:complexityhierarchy}
The main result of this section is the separation of the complexity classes $\PNP{}{[0,1]}$ and $\NPr$. We have already done most of the work required for the separation as the result follows directly from the topological argument of Section \ref{thm:PEFO} that more generally separates S-BSS computations from BSS computations. 
The characterisations of Section \ref{sec:characterisation} then yield the separation of the related logics on $\RE$-structures.
We also give logical proofs implying that the obvious strengthenings of $\PNP{}{[0,1]}$ coincide with $\NPr$.
Finally we study the restriction of $\PNP{0}{[0,1]}$ on Boolean inputs and establish that it coincides with a natural fragment of $\exists\RE$.

\subsection{Separation of $\PNP{}{[0,1]}$ and $\NPr$}\label{sec:separations}
We can now use Theorem \ref{thm:PEFO} to prove the following:

\begin{theorem} The following separations hold:
\begin{enumerate}
\item $\PNP{0}{[0,1]}<\NPrz$ and  $\PNP{}{[0,1]}<\NPr$,
\item $\peso{[0,1]}{+,\times,\leq,0,1} < \esor{+,\times,\leq,0,1}$,
\item $\peso{[0,1]}{+,\times,\leq, (r_{r\in\RE})} < \esor{+,\times,\leq, (r)_{r_\in\RE}}$.
\end{enumerate}
\end{theorem}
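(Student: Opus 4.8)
The plan is to derive all three separations from Theorem~\ref{thm:PEFO} together with the characterisations established in Section~\ref{sec:characterisation}. First I would exhibit a single concrete language that is decidable in $\NPrz$ but is \emph{not} a countable disjoint union of closed subsets of $\RE^n$; the obvious candidate is a set whose one-dimensional slice is a nonclosed semialgebraic set, for instance $L=\{x\in\RE\mid x>0\}$ (equivalently $\{x\mid \exists y\, x\cdot y\cdot y = 1\}$ or $\{x\mid x\neq 0\wedge x\cdot x = x\cdot x\}$ written with only constants $0,1$). This $L$ is clearly in $\PTIMErz\subseteq\NPrz$ via an ordinary BSS branch node comparing the input register to $0$. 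On the other hand, $(0,\infty)$ is not expressible as a countable disjoint union of closed subsets of $\RE$: any closed subset contained in $(0,\infty)$ is also closed in $\RE$, and a countable disjoint union of nonempty closed sets covering $(0,\infty)$ would contradict the Baire category theorem (the connected set $(0,\infty)$ cannot be partitioned into more than one nonempty closed piece, and a single closed piece cannot equal the nonclosed set $(0,\infty)$). Hence by Theorem~\ref{thm:SBSScharacterisation} (part (b), applied to the guesses-in-$[0,1]$ nondeterministic S-BSS model, or directly via Theorem~\ref{thm:PEFO} and the fact that languages in $\PNP{0}{[0,1]}$ that run in polynomial — hence bounded — time are such disjoint unions) we get $L\notin\PNP{0}{[0,1]}$. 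Since $\PNP{0}{[0,1]}\subseteq\NPrz$ is trivial (every S-BSS machine is simulated by a BSS machine, and restricting guesses to $[0,1]$ only shrinks the class), this yields the strict inclusion $\PNP{0}{[0,1]}<\NPrz$; the same witness, viewed with arbitrary real constants allowed, gives $\PNP{}{[0,1]}<\NPr$, proving item~(1).

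For items~(2) and~(3) I would simply transport item~(1) through the Fagin-type characterisations. By Corollary~\ref{cor:characterisation}(\ref{this}) we have $\peso{[0,1]}{+,\times,\leq,0,1}\equiv\PNP{0}{[0,1]}$, and by Theorem~\ref{thm:meer} we have $\esor{+,\times,\leq}\equiv\NPrz$ (with $0,1$ definable as the idempotents, as noted in the footnote, so $\esor{+,\times,\leq,0,1}\equiv\esor{+,\times,\leq}\equiv\NPrz$). The inclusion $\peso{[0,1]}{+,\times,\leq,0,1}\leq\esor{+,\times,\leq,0,1}$ holds because the loose $[0,1]$-fragment is syntactically obtained from $\esor{+,\times,\leq,0,1}$ by restricting quantified functions to $[0,1]$ (expressible as $\forall\vec x\,(0\leq f(\vec x)\wedge f(\vec x)\leq 1)$ conjoined to the matrix) and forbidding negated numerical atoms, so every $\peso{[0,1]}{+,\times,\leq,0,1}$ sentence is equivalent to an $\esor{+,\times,\leq,0,1}$ sentence over every $\RE$-structure. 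Strictness then follows: if the two logics were $\equiv$ on $\RE$-structures, the characterisations would force $\PNP{0}{[0,1]}\equiv\NPrz$, contradicting item~(1). The argument for item~(3) is identical, using Corollary~\ref{cor:characterisation}(1) ($\peso{\RE}{+,\times,\leq,(r)_{r\in\RE}}\equiv\PNP{}{\RE}$) — wait, here one must use $\peso{[0,1]}{+,\times,\leq,(r)_{r\in\RE}}\equiv\PNP{}{[0,1]}$ from Theorem~\ref{thm:main} and $\esor{+,\times,\leq,(r)_{r\in\RE}}\equiv\NPr$ from Theorem~\ref{thm:meer}, together with the already established $\PNP{}{[0,1]}<\NPr$.

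The one delicate point — and the step I expect to require the most care — is making sure the topological characterisation of Section~\ref{sec:SBSScharacterisation} genuinely applies to the $\NP$-flavoured classes $\PNP{0}{[0,1]}$ and $\PNP{}{[0,1]}$, not merely to deterministic or fixed-time-bounded S-BSS machines. Theorem~\ref{thm:SBSScharacterisation}(b) is stated for machines that run in time $t$ for some (total) $t\colon\N\to\N$, and a polynomial-time nondeterministic S-BSS machine does satisfy this with $t(n)$ a polynomial, so every language in $\PNP{}{[0,1]}$ (and a fortiori $\PNP{0}{[0,1]}$) is a countable disjoint union of closed sets in $\RE^n$. Thus once the witness language $L$ above is shown \emph{not} to be such a union, the separation is immediate; I would state this as a short lemma (``every language in $\PNP{}{[0,1]}$ is a countable disjoint union of closed sets'') deduced directly from Theorem~\ref{thm:SBSScharacterisation}, and then the rest is the one-line Baire-category / connectedness observation about $(0,\infty)$.
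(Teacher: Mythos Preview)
Your overall strategy and your witness language match the paper's exactly: exhibit the positive reals as a language in $\NPrz$ (indeed $\PTIMErz$) whose one-dimensional slice fails the topological constraint imposed on $\PNP{}{[0,1]}$, and then transport the separation to the logics via Corollary~\ref{cor:characterisation} and Theorem~\ref{thm:meer}. The difference is in how you handle the topological step.

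You invoke Theorem~\ref{thm:SBSScharacterisation} and then argue that $(0,\infty)$ is not a countable \emph{disjoint} union of closed subsets of $\RE$. That conclusion is true, but your justification is incomplete: connectedness alone only rules out partitions into \emph{finitely} many nonempty closed pieces (given $F_1,F_2,\ldots$, the complement $\bigcup_{i\geq 2}F_i$ of $F_1$ need not be closed), and the ``Baire category'' allusion is really Sierpi\'nski's theorem (a connected completely metrizable space is not a countable disjoint union of $\geq 2$ nonempty closed sets), which you would have to state and invoke explicitly.

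The paper sidesteps this entirely by proving something sharper than Theorem~\ref{thm:SBSScharacterisation} for the polynomial-time case. If $L\in\PNP{}{[0,1]}$ with time bound $p$, then for each fixed $n$ one has $L^n=L^n_{\leq p(n)}$, which by Lemma~\ref{new} is defined by a \emph{single} existential loose $[0,1]$-guarded formula and hence, by Theorem~\ref{thm:PEFO}, is \emph{itself} closed in $\RE^n$ --- not merely a countable union of closed sets. So the witness only needs a non-closed slice, and ``$(0,\infty)$ is not closed in $\RE$'' is the entire argument. Your own parenthetical ``or directly via Theorem~\ref{thm:PEFO}'' is in fact the whole proof; the disjoint-union detour and the Sierpi\'nski-type lemma are unnecessary.
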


\begin{proof}
	We prove 1. by showing that there are languages in $\NPrz$ that are not in $\PNP{}{[0,1]}$. The claims 2. and 3. then follow from the logical characterisations of Corollary \ref{cor:characterisation}.
	
	Let $L$ be a language in $\PNP{}{[0,1]}$ and $M$ an $\PNP{}{[0,1]}$ S-BSS machine such that $L(M)=L$.
	Let $p$ be a polynomial function that bounds the running time of $M$.
	Fix $n\in\N$. Now $L^n=L^n_{\leq p(n)}$.
	By Lemma \ref{new} $L^n_{\leq p(n)}$, and hence  $L^n$, is
	definable by an existential loose $[0,1]$-guarded real
	arithmetic formula $\phi(x_1,...,x_n)$ that uses real
	constants from $M$. By Theorem \ref{thm:PEFO} $L^n$ is a
	closed set in the product space $\RE^n$, which
	 is not true for all languages in $\NPrz$; for instance, a language $P$ consisting of all finite strings of positive reals can be decided in $\NPrz$ (using branching), but $P^n$ is not closed in $\RE^n$.
\end{proof}

\subsection{Robustness of $\NPr$}

We have just seen that $\PNP{}{[0,1]}$ is a
 complexity class strictly below $\NPr$.
We now give purely logical proofs implying that the obvious strengthenings of $\PNP{}{[0,1]}$ collapse to $\NPr$. The proofs are based on the logical characterisations established in Corollary \ref{cor:characterisation}.

The first obvious question is: Are $\PNP{}{\RE}$ and $\PNP{0}{\RE}$ strictly below $\NPr$ and $\NPrz$?
In logical terms this boils down to the expressivity of the logic $\peso{\RE}{+,\times,\leq,(r)_{r\in \RE}}$.
We answer to this question in the negative.

\begin{proposition}\label{prop:firstcollapse}
	$\peso{\RE}{+,\times,\leq,0,1} \equiv \eso{\RE}{+,\times,\leq}$ and 
	$\peso{\RE}{+,\times,\leq, (r)_{r\in  \RE}} \equiv \eso{\RE}{+,\times,\leq, (r)_{r\in  \RE}}$.
\end{proposition}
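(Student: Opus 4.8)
The plan is to show that the ``loose'' restriction (disallowing negated numerical atoms $\neg\, i \mathrel e j$) costs nothing in expressive power once quantified functions may range over all of $\RE$. In other words, given an arbitrary $\eso{\RE}{+,\times,\leq,C}$-sentence $\phi$, I will construct an equivalent $\peso{\RE}{+,\times,\leq,C}$-sentence $\phi'$ (taking $C=\{0,1\}$ for the first equivalence and $C=\RE$ for the second — the argument is uniform in $C$). The one direction $\peso{\RE}{\cdots}\leq\eso{\RE}{\cdots}$ is trivial by syntactic inclusion, so the content is the other direction.

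First I would put $\phi$ into prenex normal form and, as in the proof of Lemma~\ref{lem:minus}, normalise every numerical comparison so that both sides are multivariate polynomials in the function terms $f(\vec x)$. The crucial observation is that over $\RE$ a negated atom $\neg\,(t_1 \leq t_2)$ is equivalent to $t_2 \leq t_1 \wedge \neg\,(t_1 = t_2)$, and more to the point $\neg\,(t_1 \leq t_2)$ says exactly $t_1 > t_2$, i.e. $t_1 - t_2 > 0$. To express a \emph{strict} positivity constraint positively, I would introduce, for each such subterm difference $p := t_1 - t_2$, a fresh quantified function $g_p$ of the appropriate arity, intended to be a ``witness of strict positivity'': I add the guard $\forall \vec x\, \big(g_p(\vec x)\times p(\vec x) = 1\big)$ and replace $\neg\,(t_1\leq t_2)$ by $t_2 \leq t_1$ together with this guard (the product being expressible via the numerical terms, and $1\in C$ in both cases). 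Since $g_p$ ranges over all reals, such a $g_p$ exists iff $p(\vec x)\neq 0$ for all $\vec x$ in the finite domain; and then automatically $p(\vec x)$ has constant sign nowhere zero, but that is too weak — it could be negative. To pin the sign down I would additionally use, as in Lemma~\ref{lem:minus}, a sign function $g_p$ with values in $\{0,1\}$ obtained by squaring tricks (e.g. $g_p$ with $\forall \vec x\,(g_p(\vec x)=0 \vee g_p(\vec x)=1)$), together with a witness function $w_p$ enforcing $w_p(\vec x)^2 = p(\vec x)$ on the set where $g_p = 1$ and $w_p(\vec x)^2 = -p(\vec x)$ where $g_p = 0$; these squared-equality constraints are positive numerical atoms, and they force $g_p(\vec x)=1 \iff p(\vec x)\geq 0$. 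Then $\neg\,(t_1\leq t_2)$, i.e. $p(\vec x)>0$, becomes $g_p(\vec x)=1 \wedge p(\vec x)\neq 0$ — but ``$p(\vec x)\neq 0$'' is itself a negated equality, so I close the loop with the reciprocal-witness $g_p'$ satisfying $\forall\vec x\, g_p'(\vec x)\times p(\vec x)=1$. Combining: $\neg\,(t_1\leq t_2)$ is equivalent, over $\RE$-structures, to $\exists g_p\, g_p'\, w_p$ of the above guards conjoined with $t_2\leq t_1$, and all atoms appearing are positive. Negated equalities $\neg\,(i=j)$ are the special case $p := i-j$ with only the reciprocal witness needed.

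With this replacement done simultaneously for all negated numerical atoms — pushing the fresh existential function quantifiers to the front of the prenex block, which is legal since they are second-order existentials — I obtain $\phi'\in\peso{\RE}{+,\times,\leq,C}$, and $\mA\models\phi \iff \mA\models\phi'$ for every $\RE$-structure $\mA$: each direction of each local equivalence above is immediate once one notes that on a finite domain ``$p$ is nowhere zero'' is witnessed by a genuine function into $\RE$. Since the construction is uniform, it yields both claimed equivalences; and by Proposition~\ref{lem:esosub} built-in $\ESO$-definable predicates on the finite part (if needed to phrase the successor/ordering bookkeeping) are available for free.

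The main obstacle, and the step to get right carefully, is the sign-propagation bookkeeping: a fresh function is needed for \emph{each} polynomial subterm occurring in a negated atom (not just the top-level difference), exactly as in Lemma~\ref{lem:minus}, because the loose fragment also forbids negated \emph{sub}-comparisons that arise from distributing; and one must double-check that the squared-witness guards do not accidentally over-constrain $g_p$ at points where $p$ vanishes. I expect this to be routine given the machinery already developed in Lemma~\ref{lem:minus}, so the paper likely just invokes that lemma's technique; the novelty here is only that the target interval is all of $\RE$, which if anything makes the reciprocal-witness trick cleaner than in the bounded case.
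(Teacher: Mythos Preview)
Your proposal has a genuine gap. You write the reciprocal-witness constraint as a top-level universal conjunct $\forall \vec x\,\big(g_p'(\vec x)\times p(\vec x)=1\big)$, but this forces $p(\vec x)\neq 0$ at \emph{every} tuple $\vec x$ of the finite domain --- including those where the atom $\neg(t_1\leq t_2)$ sits inside a disjunct that is not selected, or under a first-order existential where a different witness is taken. Hence $\phi'$ can be false while $\phi$ is true. Your own diagnosis of the first attempt (``it could be negative'') is off target: the local replacement already contains $t_2\leq t_1$, which excludes negativity; the actual obstruction is that a \emph{global} guard over-constrains the reverse implication. The sign-function and squared-witness machinery you layer on afterwards does not address this, and the global $\forall\vec x$ on $g_p'$ at the end repeats the same mistake. (This is also where the analogy with Lemma~\ref{lem:minus} breaks down: there the global conjunct $\theta$ only forces the sign functions to be $\{0,1\}$-valued and multiplicatively consistent, which is always satisfiable; it never constrains the value of $p$ itself.)

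The repair is short and makes all the sign bookkeeping unnecessary: replace each occurrence of $\neg(t_1\leq t_2)$ \emph{in place} by $t_2\leq t_1\;\wedge\;g_p(\vec x)\times(t_1-t_2)=1$, and pull only $\exists g_p$ to the front. Locally this says $p\geq 0$ and $p\neq 0$, hence $p>0$; globally, since the first-order domain is finite, one defines $g_p(\vec x)$ pointwise as $1/p(\vec x)$ wherever the disjunct is used and arbitrarily elsewhere.

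The paper's argument is simpler still. It does not introduce witness functions of positive arity at all; it observes that $i<j$ is equivalent, over the reals, to
\[
\exists r\,\exists n\,\exists m\,\big(1\leq n\;\wedge\;1\leq m\;\wedge\;n=r\times m\;\wedge\;i+r=j\big),
\]
with $r,n,m$ nullary function variables quantified \emph{locally} at the atom (the grammar in Definition~\ref{esodef} allows $\exists f$ anywhere, not only in a prenex block). The point is that the strictly positive reals are exactly the ratios $n/m$ with $n,m\geq 1$, so a positive gap $r=j-i$ is witnessed directly. One line, no prenexing, no sign propagation.
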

\begin{proof}
	The left-to-right direction is immediate as the constants $0$ and $1$ are definable in $\eso{\RE}{+,\times,\leq}$.
	For the converse direction, note that the numerical atom $\neg i\leq j$ is equivalent to the statement $j < i$. We show that $<$ is definable in $\peso{\RE}{+,\times,\leq,0,1}$. First note that every strictly positive real number $r\in\RE$ can be expressed by a ratio of two real numbers $n,m\in\RE$ such that $n,m\geq 1$. Moreover note that, for every such $n$ and $m$, the ratio $n/m > 0$. It is easy to see that the following $\peso{\RE}{+,\times,\leq,0,1} $-formula 
	\[
	\exists r \exists n \exists m ( 1\leq n \land 1\leq m \land n= r\times m \land i+r = j),
	\]
	where $r$, $n$, and $m$ are 0-ary function variables, expresses that $i < j$.
\end{proof}
Theorem \ref{thm:meer},
Proposition \ref{prop:firstcollapse}, Corollary \ref{cor:characterisation} together then yield the following:
\begin{corollary}
	$\PNP{}{\RE}= \NPr$ and $\PNP{0}{\RE} = \NPrz$.
\end{corollary}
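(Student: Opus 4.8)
The plan is to establish the two equalities by combining the logical characterisations already at our disposal, namely Corollary~\ref{cor:characterisation}, Proposition~\ref{prop:firstcollapse}, and Theorem~\ref{thm:meer}. The inclusions $\PNP{}{\RE}\subseteq\NPr$ and $\PNP{0}{\RE}\subseteq\NPrz$ are the easy half: as noted right after the definition of S-BSS machines, a separate branch node with thresholds $\epsilon_-<\epsilon_+$ can be simulated by two ordinary branch nodes that test successively $\epsilon_+-x_0\leq 0$ and $x_0-\epsilon_-\leq 0$, sending the rejecting middle case to a $0$-output sink; this uses no constants other than $\epsilon_-$ and $\epsilon_+$, so the superscript $0$ is respected. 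It therefore remains to prove the converse inclusions $\NPr\subseteq\PNP{}{\RE}$ and $\NPrz\subseteq\PNP{0}{\RE}$.

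For these I would chain the three cited equivalences, using that $\equiv$ is transitive both between two logics and between a logic and a complexity class. Reading from the BSS side: Theorem~\ref{thm:meer} gives $\NPr\equiv\eso{\RE}{+,\times,\leq,(r)_{r\in\RE}}$ and $\NPrz\equiv\eso{\RE}{+,\times,\leq}$; Proposition~\ref{prop:firstcollapse} then removes the negated numerical atoms, yielding $\eso{\RE}{+,\times,\leq,(r)_{r\in\RE}}\equiv\peso{\RE}{+,\times,\leq,(r)_{r\in\RE}}$ and $\eso{\RE}{+,\times,\leq}\equiv\peso{\RE}{+,\times,\leq,0,1}$ (the latter also using that $0$ and $1$ are definable in $\eso{\RE}{+,\times,\leq}$); and items (1) and (2) of Corollary~\ref{cor:characterisation} identify these loose logics with $\PNP{}{\RE}$ and $\PNP{0}{\RE}$ respectively. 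Composing along each row gives $\NPr\equiv\PNP{}{\RE}$ and $\NPrz\equiv\PNP{0}{\RE}$; since every language over $\RE^*$ coincides, up to a fixed polynomial-time reformatting, with a language of structure encodings, this yields in particular $\NPr\subseteq\PNP{}{\RE}$ and $\NPrz\subseteq\PNP{0}{\RE}$, completing the proof.

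I do not anticipate a genuine obstacle here: the substantive content was already spent in Proposition~\ref{prop:firstcollapse}, whose proof defines the strict order by writing a witnessing positive real as a quotient $n/m$ with $n,m\geq 1$. The only thing to watch while assembling the corollary is that the three cited results are invoked in matching constant regimes — the full reals $(r)_{r\in\RE}$ for the first equality and $\{0,1\}$ for the second — and Corollary~\ref{cor:characterisation}, Proposition~\ref{prop:firstcollapse}, and Theorem~\ref{thm:meer} are all phrased in precisely these two regimes, so the composition goes through without friction.
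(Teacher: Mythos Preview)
Your proposal is correct and follows exactly the paper's approach: the corollary is obtained by chaining Theorem~\ref{thm:meer}, Proposition~\ref{prop:firstcollapse}, and Corollary~\ref{cor:characterisation} (items (1) and (2)) in the two constant regimes. The paper states this in a single sentence; your version simply unpacks the composition and adds the direct machine-simulation argument for the easy inclusions, which is harmless extra detail.
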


The second natural question is: Are $\NP_{[0,1]}$ and
$\NP^0_{[0,1]}$ strictly below $\NPr$ and $\NPrz$?
 Again, the answer is no.
The proof of the following
proposition follows directly from the observation that arbitrary
real numbers can be encoded as ratios $x/(1-x)$, where $x\in [0,1]$, using an additional marker for sign. It is crucial to note that with negated numerical atoms one can express that the denominators of such encodings are positive; in the loose fragment this is not possible. The encodings needed can be clearly expressed in $\eso{[0,1]}{+,\times, \leq}$. We omit the proof.
\begin{proposition}
	$\eso{[0,1]}{+,\times, \leq, 0, 1}\equiv \eso{\RE}{+,\times,\leq, 0, 1}$ and  $\eso{[0,1]}{+,\times, \leq,(r)_{r\in \RE}}\equiv \eso{\RE}{+,\times,\leq,(r)_{r\in \RE}}$.
\end{proposition}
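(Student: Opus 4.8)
The plan is to prove both equivalences by one and the same argument, uniformly for $C=\{0,1\}$ and $C=\RE$. The forward inclusions are a routine relativisation: given $\phi\in\eso{[0,1]}{+,\times,\leq,C}$, replace bottom-up each function quantifier $\exists f\,\chi$ by $\exists f\,(\forall\vec x\,(0\leq f(\vec x)\wedge f(\vec x)\leq 1)\wedge\chi)$; over any $\RE$-structure the added conjunct says exactly that $f$ is $[0,1]$-valued, and the guards use only positive numerical atoms, so this yields an equivalent $\phi^*\in\eso{\RE}{+,\times,\leq,C}$ for either $C$ (this direction even holds for the loose fragments). All the content is therefore in the converse inclusions $\eso{\RE}{+,\times,\leq,C}\leq\eso{[0,1]}{+,\times,\leq,C}$.

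For the converse, start with $\phi\in\eso{\RE}{+,\times,\leq,C}$ and, exactly as in the proof of Theorem~\ref{thm:main}, Skolemise the first-order part over the finite domain and prenex to get $\exists f_1\cdots\exists f_m\,\forall\vec y\,\psi$ with $\psi$ quantifier-free; as in Lemma~\ref{lem:minus} we may further assume $\psi$ is in disjunctive normal form and that the two sides of every numerical atom are written as multivariate polynomials, with the function terms playing the role of indeterminates. Each $\RE$-valued $f_i$ of arity $k_i$ is now replaced by two $[0,1]$-valued function variables $f_i^{+},f_i^{s}$ of arity $k_i$, with the guards $\forall\vec x\,\neg(1\leq f_i^{+}(\vec x))$ — forcing $f_i^{+}(\vec x)\in[0,1)$, so that the denominator $1-f_i^{+}(\vec x)$ is strictly positive; this negated numerical atom is the one essential use of negation — and $\forall\vec x\,(f_i^{s}(\vec x)\leq 0\vee 1\leq f_i^{s}(\vec x))$, forcing $f_i^{s}$ to be $\{0,1\}$-valued. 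The intended reading is $f_i(\vec x)=f_i^{+}(\vec x)/(1-f_i^{+}(\vec x))$ when $f_i^{s}(\vec x)=0$ and $f_i(\vec x)=-f_i^{+}(\vec x)/(1-f_i^{+}(\vec x))$ when $f_i^{s}(\vec x)=1$; since $x\mapsto x/(1-x)$ is a bijection $[0,1)\to[0,\infty)$, every real value of $f_i$ is represented by exactly one such pair, so the existential over $f_i$ matches the existential over $(f_i^{+},f_i^{s})$.

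It remains to translate the literals of $\psi$ into positive numerical atoms. First-order and relational literals are kept unchanged. A literal $\neg(t_1\leq t_2)$ is rewritten as the strict inequality $t_2<t_1$ and then, exactly as $<$ was handled in Proposition~\ref{prop:firstcollapse} but using the ratio $d/(1-d)$ in place of $n/m$, as $\exists d\,(\neg(d\leq 0)\wedge\neg(1\leq d)\wedge t_1\times(1-d)=t_2\times(1-d)+d)$ with $d$ a fresh $0$-ary $[0,1]$-valued variable (here and below $=$ abbreviates the two instances of $\leq$). For a remaining atom $t_1\leq t_2$: substitute the ratio encoding for every occurrence of some $f_i$, multiply both sides through by the product of all the factors $1-f_i^{+}(\vec t)$ that occur — a genuine $+,\times$-term that is provably positive by the guards, so this preserves $\leq$ — and expand into a polynomial inequality over the function symbols whose monomials carry signs coming from the markers $f_i^{s}(\vec t)$ and from the possibly negative $\sigma$-terms $g(\vec t)$. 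Resolve these signs by the disjunction technique of Lemma~\ref{lem:minus}: use the $\{0,1\}$-valued markers $f_i^{s}(\vec t)$ together with auxiliary $\{0,1\}$-valued sign variables for each $g(\vec t)$, pinned down by the same device as the $g_p$'s there, and take the finite disjunction over all sign patterns; within each disjunct every monomial has a known sign, so moving the negative ones to the opposite side produces a literal $P\leq Q$ with $P,Q$ built from $+,\times$, the function symbols, and the constants $0,1$ only. Hence no constant outside $C$ is introduced, every numerical atom in the resulting $\phi'\in\eso{[0,1]}{+,\times,\leq,C}$ occurs positively, and by construction $\phi'$ has exactly the same $\RE$-models as $\phi$.

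The main obstacle is this last bookkeeping step: clearing denominators while simultaneously tracking all the signs, so that the resulting atoms are genuinely positive and no new constant appears — this is why the argument must go through the machinery of Lemma~\ref{lem:minus} rather than a one-line substitution. It is also precisely the point at which the loose fragment fails: there one cannot assert $\neg(1\leq f_i^{+}(\vec x))$, hence cannot guarantee that the denominators $1-f_i^{+}(\vec x)$ are nonzero, let alone positive, so the ratio encoding collapses.
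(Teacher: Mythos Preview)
Your argument is correct and uses exactly the encoding the paper sketches: represent each $\RE$-valued $f_i$ by the ratio $f_i^{+}/(1-f_i^{+})$ together with a sign marker $f_i^{s}$, and exploit the negated guard $\neg(1\leq f_i^{+}(\vec x))$ (unavailable in the loose fragment) to force the denominator positive so that clearing denominators preserves $\leq$. You do noticeably more work than required, however: since the target $\eso{[0,1]}{+,\times,\leq,C}$ is \emph{not} the loose fragment, negated numerical atoms are permitted, so the Skolemisation to DNF, the $\exists d$ rewriting of $\neg(t_1\leq t_2)$, and the auxiliary sign variables for the vocabulary terms $g(\vec t)$ are all unnecessary --- after the ratio substitution and the case-split on the $f_i^{s}$, each literal becomes a polynomial (in)equality whose only obstacle is the lack of syntactic subtraction, and it suffices to move monomials with negative \emph{coefficient} (not value) to the opposite side, keeping the literal's polarity intact.
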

Hence Corollary \ref{cor:characterisation} yields the following:
\begin{corollary}
$\NP_{[0,1]}=\NPr$ and $\NP^0_{[0,1]}=\NPrz$.
\end{corollary}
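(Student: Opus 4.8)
The plan is to prove $\eso{[0,1]}{+,\times,\leq,(r)_{r\in\RE}}\equiv\eso{\RE}{+,\times,\leq,(r)_{r\in\RE}}$ (and the analogous statement with constants $0,1$); the nontrivial direction is left-to-right, i.e.\ simulating function quantification over all of $\RE$ by quantification over $[0,1]$-valued functions. The key idea, as the statement of the proposition already hints, is that every real $r$ can be recovered from a pair consisting of a sign bit $b\in\{0,1\}$ and a value $x\in[0,1)$ via $r = (2b-1)\cdot \tfrac{x}{1-x}$, which ranges over all of $\RE$ as $x$ ranges over $[0,1)$ and $b$ over $\{0,1\}$. Crucially, unlike in the loose fragment (cf.\ Lemma~\ref{lem:minus} and Proposition~\ref{prop:firstcollapse}, which had to work much harder), here we are allowed negated numerical atoms, so we can write $\neg(x\leq \text{something})$ to force denominators strictly positive, i.e.\ to assert $1-x>0$ and hence exclude $x=1$; this is exactly why the full (non-loose) logic collapses.

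Concretely, I would proceed syntactically. Given a sentence $\phi\in\eso{\RE}{+,\times,\leq,(r)_{r\in\RE}}$, put it (without loss of generality) in a form where each quantified function $f$ occurs only through its function terms, and replace each quantifier $\exists f$ (with $f$ of arity $k$) by $\exists f_{\mathrm{val}}\,\exists f_{\mathrm{sgn}}$, where $f_{\mathrm{val}}\colon A^k\to[0,1]$ and $f_{\mathrm{sgn}}\colon A^k\to[0,1]$, together with a conjoined guard $\forall\vec x\big(\neg(1\leq f_{\mathrm{val}}(\vec x))\wedge (f_{\mathrm{sgn}}(\vec x)=0\vee f_{\mathrm{sgn}}(\vec x)=1)\big)$ enforcing $f_{\mathrm{val}}(\vec x)\in[0,1)$ and $f_{\mathrm{sgn}}(\vec x)\in\{0,1\}$. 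The remaining work is to rewrite numerical atoms $i\mathrel e j$ (and their negations) so that each occurrence of a term $f(\vec x)$ is understood as $(2f_{\mathrm{sgn}}(\vec x)-1)\cdot f_{\mathrm{val}}(\vec x)/(1-f_{\mathrm{val}}(\vec x))$. Since the logic has only $+,\times$ and not division, this is handled by clearing denominators: an atom $i\leq j$ built from such fractional terms is, after multiplying through by the (positive) product of all denominators $\prod(1-f_{\mathrm{val}}(\vec x))$, equivalent to a genuine polynomial inequality between $\{+,\times\}$-terms in the $f_{\mathrm{val}}$'s, $f_{\mathrm{sgn}}$'s, and the original vocabulary functions — though one must case-split on the sign contributions $2f_{\mathrm{sgn}}(\vec x)-1\in\{-1,1\}$ if one wants to stay with $\leq$ rather than introduce $<$ artifacts; alternatively, one may first route through the loose-fragment encoding machinery. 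To simulate constants $r\in\RE$ one keeps them if they are already available, or (for the $0,1$ version) notes they are definable as in Proposition~\ref{prop:firstcollapse}.

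The main obstacle I anticipate is not conceptual but bookkeeping: correctly defining the substitution $\circ$ on nested numerical terms so that after clearing denominators one obtains a polynomial identity that is provably equivalent, over $\RE$-structures, to the original atom — in particular making sure the SUM operator (if present; here the operator set is just $+,\times$, so this simplifies) and the interaction of multiple fractional subterms inside a single atom are handled without blow-up that changes the truth value, and keeping track of which denominators are being multiplied in. One must also check the base case where $f$ has arity $0$: in the $d[0,1]$-distribution reading a nullary function is forced to value $1$, but here we are in the plain $[0,1]$ setting, so nullary $f_{\mathrm{val}}$ is just an arbitrary element of $[0,1]$ and the guard $\neg(1\leq f_{\mathrm{val}})$ still applies. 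Finally, one verifies the equivalence $\mA\models_s\phi\iff\mA\models_s\phi'$ by an induction on formula structure, using that the map $x\mapsto x/(1-x)$ is a bijection $[0,1)\to[0,\infty)$ and that the sign bit extends this to a surjection onto $\RE$, so that any witnessing $h\colon A^k\to\RE$ for $\exists f$ in $\phi$ is matched by a pair $(h_{\mathrm{val}},h_{\mathrm{sgn}})$ for $\phi'$ and conversely. The right-to-left direction is immediate since every $[0,1]$-valued function is in particular an $\RE$-valued one and the guard and rewriting are harmless. As the authors remark, the routine verification is omitted.
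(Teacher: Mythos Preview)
Your approach is correct and matches the paper's: the preceding proposition is proven exactly by encoding arbitrary reals as $\pm\,x/(1-x)$ with $x\in[0,1)$ and a sign marker, using a negated numerical atom to enforce $x<1$ (the paper simply states this idea and omits the details you spell out), and the corollary then follows immediately from Corollary~\ref{cor:characterisation} and Theorem~\ref{thm:meer}. One cosmetic slip: you have the direction labels reversed (the nontrivial direction is $\eso{\RE}{\ldots}\leq\eso{[0,1]}{\ldots}$, i.e.\ right-to-left as you wrote the equivalence), and the ``easy'' direction is not literally immediate---you need a guard $\forall\vec x\,(0\leq f(\vec x)\wedge f(\vec x)\leq 1)$---but neither affects the substance.
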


Finally we consider a weakening of $\peso{\RE}{+,\times,\leq,0,1}$ by removing the constant $1$ from the language. It turns out that this small weakening has profound implications to the expressivity of the logic when restricted to function-free vocabularies.  

\begin{proposition}\label{prop:noone}
	Let $0\in S\subseteq \RE$. Then $\peso{S}{+,\times,\leq}
	\equiv \FO$ with respect
	to $\RE$-structures on function-free vocabularies.
\end{proposition}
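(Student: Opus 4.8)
The plan is to prove the two inclusions separately. The inclusion $\FO\le\peso{S}{+,\times,\leq}$ (on function-free vocabularies) is purely syntactic: every first-order $\tau$-sentence is equivalent to one in negation normal form built only from the literals $R(\vec x)$, $\neg R(\vec x)$, $x=y$, $\neg x=y$ and the connectives $\land,\lor,\exists x,\forall x$, and such a formula is already a sentence of $\peso{S}{+,\times,\leq}[\tau]$ (with no function quantifiers and no numerical atoms) defining the same class of $\RE$-structures. So all the content lies in the converse, $\peso{S}{+,\times,\leq}\le\FO$.

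For the converse, fix a sentence $\phi\in\peso{S}{+,\times,\leq}[\tau]$ with $\tau$ relational. First I would bring $\phi$ into the shape $\exists f_1\cdots\exists f_m\,\chi$, where $\chi$ is a first-order formula over $\tau\cup\{f_1,\dots,f_m\}$ (still possibly containing numerical atoms) in negation normal form: rename bound function variables apart and push the second-order quantifiers outward, Skolemizing each $\exists f$ lying under a universal first-order quantifier into an $\exists g$ of one extra argument. Since $S$ is closed in the relevant sense — an $S$-valued function of $k+1$ arguments is exactly a family, indexed by the finite domain, of $S$-valued functions of $k$ arguments — this Skolemization stays inside the logic. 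Now observe that, because $\tau$ is function-free, the only function symbols anywhere in $\chi$ are the bound $f_i$, these can occur only inside numerical terms, and — since the language has no real constants and $\SUM\notin O$ — every numerical term of $\chi$ contains at least one occurrence of some $f_i$.

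Let $\chi^{*}$ be obtained from $\chi$ by replacing every numerical atom $i\le j$ (which, in the loose fragment, necessarily occurs positively) by a tautology $\top$. After this replacement no function terms remain, so the residual $\exists f_i$ are vacuous and $\phi^{*}\dfn\chi^{*}$ is a first-order $\tau$-sentence. It remains to check $\struc^{\RE}(\phi)=\struc^{\RE}(\phi^{*})$. For $\subseteq$: if an $\RE$-structure $\A$ satisfies $\exists\vec f\,\chi$ via interpretations $\vec h$, then since $\chi$ is in NNF it is monotone in each numerical literal, so raising all of them to $\top$ preserves satisfaction; hence $\A[\vec h/\vec f]\models\chi^{*}$, and as $\chi^{*}$ mentions no $f_i$ we get $\A\models\phi^{*}$. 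For $\supseteq$: if $\A\models\phi^{*}$, interpret every $f_i$ as the constant function with value $0$, which is legitimate since $0\in S$; then every numerical term of $\chi$ evaluates to $0$ under every assignment (as $0+0=0\times0=0$ and there are no constants), so every numerical atom $i\le j$ becomes true, whence $\chi$ and $\chi^{*}$ are satisfied by exactly the same assignments in $\A[\vec 0/\vec f]$, giving $\A\models\phi$.

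There is no deep obstacle here; the points needing care are (i) the second-order prenexing, where one must check that Skolemizing $\forall x\,\exists f\,\psi$ keeps the witness $S$-valued and arity-correct, and (ii) the monotonicity step, which uses that $\phi$ is in the \emph{loose} fragment so that no numerical atom appears negated in the NNF. Together with the absence of real constants, this is precisely what makes the all-zero interpretation universally available and collapses the numerical machinery; and since the vocabulary is function-free there is no weight symbol from which to read nontrivial real data, which is why the function-free restriction cannot be dropped.
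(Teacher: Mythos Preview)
Your proof is correct and follows essentially the same approach as the paper: both collapse every numerical atom to $\top$ and justify this via the constant-zero interpretation (using $0\in S$) together with positivity of numerical atoms in the loose fragment. The only difference is that the paper avoids your Skolemization step by arguing directly via the homomorphism $(S,+,\times,\leq)\to(\{0\},+,\times,\leq)$, which handles nested function quantifiers without first passing to a prenex normal form.
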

\begin{proof}
	The direction $\FO\leq \peso{S}{+,\times,\leq}$ is self-evident. We give a proof for the converse. Let $\mA$ be an $\RE$-structure of  a function-free vocabulary $\tau$, $\phi\in \peso{S}{+,\times,\leq}[\tau]$  a formula, and $s$ an assignment for the first-order variables. Note that $\phi$ can be regarded also as a
formula of $\peso{\{0\}}{+,\times,\leq}$; we write $\phi_0$ to denote this interpretation. Let $\phi_\top$ denote the $\FO$-formula obtained from $\phi$ by removing the function quantifications in $\phi$ and replacing every numerical atom $i\leq j$ in $\phi$ with the formula $\exists x\, x=x$.
	Now note that there is a homomorphism from the first-order structure $(S, +, \times, \leq)$ to $(\{0\}, +, \times, \leq)$, and consequently,
	\(
	\mA \models_s \varphi \Leftrightarrow \mA \models_s \varphi_0.
	\)
	Here we note that $\varphi_0$ implies $\varphi$ since the second structure is a substructure of the first, and truth of existential formulae is preserved to extensions. Conversely, $\varphi$ implies $\varphi_0$ because atoms $i \leq j$ appear only positively, and the truth of formulae with only positive literals are preserved to homomorphic images.
	Since in the evaluation of $\phi_0$ every numerical term is evaluated to $0$ it follows that
	 \(
	 \mA \models_s \phi_0 \Leftrightarrow \mA \models_s \phi_\top.
	 \)
\end{proof}

\subsection{Separate branching on Boolean inputs and the existential theory of the reals}
It is known that on Boolean inputs $\NPrz$ coincides with the complexity class $\exists \RE$ (i.e., the class of problems polynomially reducible to the existential theory of the reals) \cite{BurgisserC06, SchaeferS17}. In this section we show an analogous result for $\PNP{0}{[0,1]}$.
\begin{definition}
	Define $\exists [0,1]^{\leq}$ to be the set of all languages $L\subseteq \{0,1\}^*$ for which there is a polynomial-time reduction $f$ from $\{0,1\}^*$ into sentences of existential loose $[0,1]$-guarded real arithmetic
	such that $x\in L$ iff $(\RE, +, \times, \leq,\allowbreak 0, 1)\models f(x)$.
\end{definition}
We show the following theorem:
\begin{theorem}\label{thm:E01}
	$\exists [0,1]^{\leq} = \bp{\PNP{0}{[0,1]}}$.
\end{theorem}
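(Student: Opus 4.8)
The goal is to prove $\exists [0,1]^{\leq} = \bp{\PNP{0}{[0,1]}}$, i.e., that the Boolean languages in $\PNP{0}{[0,1]}$ are exactly those polynomial-time reducible to the existential loose $[0,1]$-guarded real arithmetic (over the structure $(\RE,+,\times,\leq,0,1)$). This is a two-way inclusion, and the template to follow is the classical equivalence $\bp{\NPrz} = \exists\RE$ from \cite{BurgisserC06,SchaeferS17}, adapted using the machinery already built in this paper — chiefly Lemma \ref{new} and the analysis of S-BSS computations from Section \ref{sec:SBSScharacterisation}.

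**Direction $\bp{\PNP{0}{[0,1]}} \subseteq \exists[0,1]^\leq$.** Let $L \subseteq \{0,1\}^*$ be decided by a $[0,1]$-nondeterministic S-BSS machine $M$ using only constants from $\{0,1\}$, running in time bounded by a polynomial $p$. For an input word $w \in \{0,1\}^*$ of length $n$, membership $w \in L$ is equivalent to $w \in L^n_{\leq p(n)}(M)$. By Lemma \ref{new}, one can construct in polynomial time a formula $\psi$ of loose $[0,1]$-guarded real arithmetic with free variables $x_1,\dots,x_n$ (using only the constants $0,1$ occurring in $M$) whose solution set is exactly $L^n_{\leq p(n)}(M)$. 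The reduction $f$ then maps $w = (w_1,\dots,w_n)$ to the sentence obtained from $\psi$ by substituting the constant $w_i$ for each $x_i$ — since each $w_i \in \{0,1\}$ is a constant of the arithmetic, this substitution is legal and stays inside the loose $[0,1]$-guarded fragment, and it is computable in polynomial time. Then $(\RE,+,\times,\leq,0,1) \models f(w)$ iff $w \in L$, witnessing $L \in \exists[0,1]^\leq$. One small check: Lemma \ref{new} allows real constants "used in $M$", and here those are only $0,1$, so $f(w)$ genuinely is a sentence over $(\RE,+,\times,\leq,0,1)$ with no extra constants, as the definition of $\exists[0,1]^\leq$ demands.

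**Direction $\exists[0,1]^\leq \subseteq \bp{\PNP{0}{[0,1]}}$.** Let $L \subseteq \{0,1\}^*$ with a polynomial-time reduction $f$ to sentences of existential loose $[0,1]$-guarded real arithmetic such that $w \in L \iff (\RE,+,\times,\leq,0,1)\models f(w)$. I would build a $[0,1]$-nondeterministic S-BSS machine $M$ with constants $0,1$ that, on Boolean input $w$, (i) deterministically computes $f(w)$ — this is ordinary Turing computation, easily simulated in polynomial time on a BSS/S-BSS machine over $\{0,1\}$-constants, since $f(w)$ is just a finite syntactic object of polynomial size; (ii) nondeterministically guesses, from $[0,1]^*$, a satisfying assignment for the existentially quantified variables of $f(w)$ — the $[0,1]$-guards $0 \leq x \leq 1$ in the formula exactly match the restriction that guessed values lie in $[0,1]^*$, which is checked using separate branch nodes with $\epsilon_- = 0, \epsilon_+ = 1$; and (iii) deterministically evaluates the quantifier-free matrix of $f(w)$ under the guessed assignment, using computation nodes for the polynomial terms $i$ and separate branch nodes to test each atom $i \leq i'$ (equivalently $i - i' \leq 0$, branching on $\leq 0$ versus $\geq 1$ is not quite right here — one instead rejects in the "middle" region, so one needs a small gadget: to test $v \leq 0$ with a separate branch node one can branch on $v \leq 0$ vs $v \geq \epsilon_+$ only after first guaranteeing $v$ is not strictly between, which fails in general). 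This last point is the main obstacle: a separate branch node rejects on the open interval $(\epsilon_-,\epsilon_+)$, so one cannot directly implement the closed test "$v \leq 0$". The fix is that the target formula has no strict inequalities and all atoms appear positively, so an accepting computation only ever needs to confirm atoms $i \leq i'$, never refute them; combined with the fact that such a confirmation, when true, can be witnessed by a separate-branch test that lands in $(-\infty,0]$ (taking $\epsilon_- = 0$), and a run that would land in the rejecting zone simply corresponds to a wrong guess. One must argue that for every true sentence $f(w)$ there is a guess making every atom-test succeed cleanly, and that no guess makes $M$ accept a false sentence — this uses precisely that the matrix is a positive combination of non-strict atoms, so its truth is a closed condition (cf. Theorem \ref{thm:PEFO}) and accepting runs witness genuine satisfaction. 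Assembling these pieces, $M$ is a $\PNP{0}{[0,1]}$ machine with $L(M) \cap \{0,1\}^* = L$, giving $L \in \bp{\PNP{0}{[0,1]}}$.

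**Where the difficulty lies.** The easy direction is (1), essentially a repackaging of Lemma \ref{new}. The delicate direction is (2): one must take care that the separate-branching mechanism, which is strictly weaker than ordinary BSS branching precisely because it rejects on an open interval, nonetheless suffices to evaluate loose $[0,1]$-guarded formulae — and this works exactly because that fragment was engineered (strict inequalities removed, quantifiers $[0,1]$-guarded, atoms positive) to be compatible with separate branching. I would make this compatibility explicit as the crux of the proof, perhaps isolating it as a short lemma: "a positive quantifier-free non-strict arithmetic condition on $[0,1]$-valued guesses can be verified by a polynomial-time S-BSS machine using only $0,1$ as constants."
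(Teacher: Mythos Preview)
Your two directions are correct and mirror the paper's proof: $\bp{\PNP{0}{[0,1]}} \subseteq \exists[0,1]^\leq$ via Lemma~\ref{new} with the Boolean input substituted for the free variables, and the converse by simulating the polynomial-time reduction, guessing values for the existential variables from $[0,1]^*$, and verifying the matrix. The paper simply defers that last step to the left-to-right argument of Theorem~\ref{thm:main}.

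The ``main obstacle'' you raise is not one. To confirm a positive atom $i \leq j$, compute $v = i-j$ and separate-branch with $\epsilon_- = 0$, $\epsilon_+ = 1$: continue if $v \leq 0$, explicitly reject if $v \geq 1$, auto-reject if $0 < v < 1$. Since atoms occur only positively, ``reject whenever the atom fails'' is exactly what is wanted---no gadget is needed, and the appeal to the topological Theorem~\ref{thm:PEFO} is irrelevant here. What your sketch does leave implicit is the treatment of disjunctions: you cannot ``deterministically evaluate'' $\phi \lor \psi$ under separate branching, because a failed atom halts the run rather than returning a value. The nondeterministic guess must therefore also include a choice of which disjunct to verify (e.g.\ via extra $\{0,1\}$-valued coordinates of $x'$, read by a separate branch with $\epsilon_-=0,\epsilon_+=1$). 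This is exactly how the proof of Theorem~\ref{thm:main} proceeds, guessing indices $i_{\tuple a}$ into a DNF; your phrase ``there is a guess making every atom-test succeed'' is right once ``guess'' is understood to include this choice.
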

\begin{proof}{\,}
Note that the \textbf{\emph{right-to-left}} direction of this theorem follows immediately from Lemma \ref{new} by noting that the only real constants used by $\PNP{0}{[0,1]}$ S-BSS machines $M$ are $0$ and $1$, and that the Boolean inputs to $M$ can be defined in $\exists [0,1]^{\leq}$ by using the constants $0$ and $1$.

\textbf{\emph{Left-to-right.}}
	There exists a deterministic polynomial time Turing machine $M$ that given an input string computes the corresponding sentence $\phi$ of  existential loose $[0,1]$-guarded real arithmetic. Let $p$ be the polynomial that bounds the running time of $M$. Without loss of generality we may assume that, for any given input $i$ of length $n$, the formula computed by $M$ from input $i$ uses only variables $x_1,\dots,x_{p(n)}$.   
	Let $M^*$ be a nondeterministic S-BSS machine that, for a given input $i$ of length $n$, first guesses $p(n)$ many real numbers from the unit interval $[0,1]$ (these will correspond to the values of the variables $x_1,\dots,x_{p(n)}$). Then $M^*$ simulates the run of the deterministic polynomial time Turing machine $M$ on input $i$. Let $\phi$ be the formula computed this way. Finally we can use $M^*$ to check the matrix of $\phi$ using the values guessed for the variables $x_1,\dots,x_{p(n)}$. We omit further details, for the evaluation of the matrix can done essentially in the same way as in the \emph{left-to-right} direction of Theorem \ref{thm:main}.
\end{proof}

\section{Probabilistic team semantics}\label{sec:teamsemantics}
The purpose of this section is to characterise the descriptive complexity of probabilistic independence logic \cite{HKMV18}. 
 The formulae of this logic, and other logics that make use of dependency concepts involving quantities, are interpreted in probabilistic team semantics which generalises team semantics
by adding weights on variable assignments. A finite model together with a probabilistic team can then be seen as a particular metafinite structure, and thus a natural approach to computational complexity comes from BSS machines. 

Let $D$ be a finite set of first-order variables, $A$ a finite set, and $X$ a finite set of assignments (i.e., a \emph{team}) from $D$ to $A$. A $\emph{probabilistic team}$ $\X$ is then defined as a function 
\[\X\colon X\rightarrow [0,1]\]
such that $\sum_{s\in X}\X(s)=1$.
Also the empty function is considered a probabilistic team.
We call $D$ and $A$ the variable domain and value domain of $\X$, respectively. 

\emph{Probabilistic independence logic} ($\FO(\cpind)$) is defined as the extension of first-order logic with 
 probabilistic independence atoms $\pcixyz$ whose semantics is the standard semantics of conditional independence in probability distributions. Another probabilistic logic, $\FO(\approx)$, is obtained by extending first-order logic with \emph{marginal identity atoms} $\tuple x \approx \tuple y$ which state that the marginal distributions on $\tuple x$ and $\tuple y$ are identically distributed. The semantics for complex formulae are defined compositionally by generalising the team semantics of dependence logic to probabilistic teams. For details, not necessary in this paper, we refer the reader to \cite{HKMV18}. In principle, the point is that formulae of probabilistic independence logic define properties of $(\mA,\X)$ where $\A$ is a finite model and $\X$ a probabilistic team with value domain $\Dom(\mA)$. 
 
 \begin{example}
 Suppose we flip a coin. If we get heads, we roll two dice $x$ and $y$. If we get tails, we roll only $x$ and copy the same value for $y$. Repeating this procedure infinitely many times yields at the limit a probabilistic team (i.e., a joint probability distribution) over variables $x$ and $y$ satisfying
 \[(\pmi{x}{y}\vee x=y)\wedge \forall z\, x \approx z.\]
 By definition $\phi\vee \psi$ is true for a probabilistic team $\X$ if $\X$ is a mixture of two teams with respective properties $\phi$ and $\psi$ (here independence and (row-wise) identity between $x$ and $y$). By definition $\forall z \phi$ is true for a probabilistic team $\X$ if the extension of $\X$ with a uniform distribution for $z$ has the property $\phi$ (here identity between marginal distributions on $x$ and $z$).
 \end{example}
 
 We will now show that the descriptive complexity of probabilistic independence logic is exactly $\PNP{0}{[0,1]}$. For this we need some background definitions and results. 
 
\paragraph{Expressivity comparisons wrt. probabilistic team semantics}
Fix a relational vocabulary $\tau$. For a probabilistic team $\X$ with variable domain $\{x_1, \ldots ,x_n\}$ and value domain $A$, the function $f_\X:A^n\to [0,1]$ is defined as the probability distribution such that $f_\X(s(\tuple x))=\X(s)$ for all  $s\in X$. For a formula $\phi\in \FO(\cpind)$
of vocabulary $\tau$ and with free variables $\{x_1, \ldots ,x_n\}$, the class $\struc(\phi)$ is defined as the class of $\RE$-structures $\A$ over $\tau\cup\{f\}$ such that
 $(\A\upharpoonright \tau)\models_{\X} \phi$, where  $f_\X=f^\A$ and $\A\upharpoonright \tau$ is the finite $\tau$-structure underlying $\A$.

Let $\calL$ be any of the logics defined in Section \ref{sec:preli}.
We write $\FO(\cpind)\leq\calL$ if for every formula $\phi\in \FO(\cpind)$ of vocabulary $\tau$ there is a sentence $\psi\in \calL$ of vocabulary $\tau\cup\{f\}$ such that $\struc(\phi)= \struc^{d[0,1]}(\psi)$. Vice versa, we write $\calL \leq \FO(\cpind)$ if for every sentence $\psi\in \calL$ of vocabulary $\tau\cup\{f\}$ there is a formula $\phi\in \FO(\cpind)$ of vocabulary $\tau$ such that  $\struc(\phi)= \struc^{d[0,1]}(\psi)$.

\paragraph{Complexity characterisations wrt. probabilistic team semantics.}
 Let $\FO(\cpind)$ be a logic with vocabulary $\tau$ and $\calC$ a complexity class. Let $\calS$ be an arbitrary class of $\RE$-structures over $\tau\cup \{f\}$ that is closed under isomorphisms and where the interpretations of $f$ are distributions. We write $\enc(\calS)$ for the set of encodings of structures in $\calS$. Consider the following two conditions:
\begin{enumerate}[(i)]
	\item $\enc(\calS)=\{\enc(\A)\mid \A \in \struc(\phi)\}$ for some $\phi\in {\FO(\cpind)}\}$.
	\item $\enc(\calS)\in \calC$.
\end{enumerate}
If  $(i)$ implies $(ii)$, we write $\FO(\cpind) \leq \calC$, and if the vice versa holds, we write $\calC\leq \FO(\cpind)$.

 It is already known that probabilistic independence logic captures a variant of loose existential second-order logic in which function quantification ranges over distributions. This result was shown in two stages. First, it was proven in \cite{HKMV18} that the logic $\FO(\cpind,\approx)$
  is expressively equivalent to $\pesod{\SUM,\times,=}$.\footnote{In \cite{HKMV18} equi-expressivity with $\esod{\SUM,\times,=}$ is erroneously stated; the results in the paper actually entail equi-expressivity with $\pesod{\SUM,\times,=}$. % in Second, the translation from ESO to probabilistic team semantics used Lemma \ref{lemma} which had a small issue with distribution quantification.
}
 Later, it was proven in \cite{jelia19} that marginal identity can be expressed using independence, that is, $\FO(\cpind, \approx)$ is expressively equivalent to $\FO(\cpind)$.\footnote{In fact, $\FO(\cpind)$ is expressively equivalent to $\FO(\pind)$ which is the extension of first-order logic with \emph{marginal independence atoms} $\pmixy$, the semantics of which is the standard semantics of marginal independence in probability distributions \cite{jelia19}.}
\begin{theorem}[\cite{HKMV18,jelia19}]\label{thm:start}
$\FO(\cpind)\equiv \pesod{\SUM,\times,= }$.
\end{theorem}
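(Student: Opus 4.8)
The plan is to derive Theorem~\ref{thm:start} by chaining the two previously established equivalences recalled just above: by~\cite{HKMV18} (with the correction noted in the footnote) one has $\FO(\cpind,\approx)\equiv\pesod{\SUM,\times,=}$, and by~\cite{jelia19} one has $\FO(\cpind,\approx)\equiv\FO(\cpind)$. Both equivalences are formulated over the same class of objects -- a finite relational structure together with a probabilistic team over its domain, regarded as a $d[0,1]$-structure whose distinguished weight function is the corresponding distribution -- so, since $\leq$ (and hence $\equiv$) is transitive, composing the two translations yields $\FO(\cpind)\equiv\pesod{\SUM,\times,=}$ with no further bookkeeping about the underlying structures.

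Should one want a self-contained argument rather than a citation, both component equivalences would have to be reproved. For the inclusion $\FO(\cpind,\approx)\le\pesod{\SUM,\times,=}$ I would encode a probabilistic team $\X$ by its distribution $f_\X$ and translate formulas compositionally: first-order literals are kept verbatim; a split disjunction $\phi\vee\psi$ is rendered by existentially quantifying two subdistributions whose pointwise sum is the current distribution and routing them into the respective subtranslations; $\forall x\,\phi$ corresponds to extending the distribution by the uniform weight on $x$; $\exists x\,\phi$ to quantifying a conditional distribution for $x$; and the dependency atoms become polynomial equalities between marginal probabilities -- each marginal being a $\SUM$ of team weights -- so that $\pcixyz$ turns into the usual multiplied-out conditional-independence identity and $\tuple x\approx\tuple y$ into equality of the corresponding coordinatewise marginal sums. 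The converse inclusion $\pesod{\SUM,\times,=}\le\FO(\cpind,\approx)$ is the substantial direction: one must realise the arithmetic of $\SUM$, $\times$ and $=$ on distributions entirely through team-semantic connectives and (conditional) independence atoms, representing values of the quantified distributions as conditional probabilities and using independence to enforce the required products and sums.

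For the second equivalence, $\FO(\cpind,\approx)\equiv\FO(\cpind)$, only $\tuple x\approx\tuple y\in\FO(\cpind)$ needs proof, the reverse containment being immediate; here I would follow~\cite{jelia19} and express equality of the marginal distributions on $\tuple x$ and on $\tuple y$ by quantifying auxiliary variables and imposing conditional independence constraints that force a coupling witnessing $P_{\tuple x}=P_{\tuple y}$ -- the footnoted refinement even achieves this using marginal independence atoms alone, i.e.\ in $\FO(\pind)$. I expect this step, together with the backward direction of the first equivalence, to be the main obstacle in any self-contained treatment, since both amount to simulating genuinely arithmetic and combinatorial content inside the comparatively rigid syntax of team semantics; granting the two cited results, however, the theorem is immediate.
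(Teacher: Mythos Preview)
Your proposal is correct and matches the paper's approach exactly: the theorem is stated with citations to \cite{HKMV18} and \cite{jelia19} and is obtained simply by composing the two equivalences $\FO(\cpind,\approx)\equiv\pesod{\SUM,\times,=}$ and $\FO(\cpind,\approx)\equiv\FO(\cpind)$, with no further argument given. Your additional sketch of how a self-contained proof would proceed is reasonable but goes beyond what the paper itself provides.
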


We will now improve this result by removing the condition that restricts function quantification to distributions. For this we utilize a normal form lemma from \cite{HKMV18}. Observe that  we restrict attention to $d[0,1]$-structures, that is, all function symbols from the underlying vocabulary are interpreted as distributions.
\begin{lemma}[\cite{HKMV18}]\label{lemma}
For every $\pesod{\SUM,\times,=}$-formula $\phi$ there is an $\pesod{\SUM,\times,=}$-formula $\phi^*$ such that 
$\struc^{d[0,1]}{\phi}=\struc^{d[0,1]}{\phi^*}$, where $\phi^*$ is
of the form
$\exists \tuple f \forall  \tuple x\theta$,
where $\theta$ is quantifier-free and such that its second sort identity atoms are of the form $f_i(\tuple u,\tuple v) = f_j(\tuple u)\times f_k(\tuple v)$ or $f_i(\tuple u) = \SUM_{\tuple v}f_j(\tuple u,\tuple v)$ for
distinct $f_i,f_j,f_k$ such that at most one of them is not quantified.
\end{lemma}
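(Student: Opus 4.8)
The plan is to bring $\phi$ to prenex form with the (existential) function quantifiers outermost and only universal first‑order quantifiers inside, and then to repeatedly \emph{flatten} the numerical atoms by naming subterms with fresh existentially quantified function variables until every second‑sort identity atom has one of the two prescribed shapes. Since all second‑order quantifiers of $\pesod{\SUM,\times,=}$ are existential, one first moves them to the front, obtaining $\exists\tuple f\,\chi$ with $\chi$ first‑order, and puts the first‑order part into prenex normal form; this only rearranges quantifiers and creates no negated numerical atoms, so we stay in the loose fragment. The existential first‑order quantifiers are then Skolemised into additional existential function variables: for an innermost $\exists x\,\psi(\tuple a,x)$, with $\tuple a$ the universal variables in scope, introduce a fresh function variable $g$ of arity $|\tuple a|+1$, require it to be a "scaled choice distribution" (its $\tuple a$‑marginal $\SUM_x g(\tuple a,x)$ is the same for every $\tuple a$, hence, as $g$ is a distribution, a positive constant, and $g(\tuple a,\cdot)$ is concentrated at a single point), and replace $\exists x\,\psi(\tuple a,x)$ by $\forall x\,\big(g(\tuple a,x)=0\lor\psi(\tuple a,x)\big)$. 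Writing $h(\tuple a)$ for the unique point where $g(\tuple a,\cdot)\neq 0$, this asserts exactly $\psi(\tuple a,h(\tuple a))$, which is the usual Skolem equivalence, and all atoms used are unnegated. Pulling the new universal quantifiers out gives an equivalent $\exists\tuple f\,\forall\tuple x\,\chi'$ with $\chi'$ quantifier‑free.

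Now $\chi'$ is a positive Boolean combination of first‑order literals and numerical identities $i=j$, where $i,j$ are terms over constants, function applications, $\times$ and $\SUM$. Working bottom‑up, for each proper subterm $t$ whose head is $\times$ or $\SUM$, with distinct free first‑order variables $\tuple w$, introduce a fresh function variable $g_t$ of arity $|\tuple w|$, replace $t$ by $g_t(\tuple w)$ in its parent and conjoin $g_t(\tuple w)=t'$, where $t'$ is $t$ with that one replacement performed at depth one. Iterating drives every numerical atom to one of $s=g_j(\cdot)\times g_k(\cdot)$ or $s=\SUM_{\tuple v}g_j(\cdot,\tuple v)$ with $s$ a single function term after one further naming step. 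What remains is to fit these into the exact forms $f_i(\tuple u,\tuple v)=f_j(\tuple u)\times f_k(\tuple v)$ and $f_i(\tuple u)=\SUM_{\tuple v}f_j(\tuple u,\tuple v)$, with $f_i,f_j,f_k$ distinct and at most one of them outside $\tuple f$: argument tuples are realigned, padded or de‑duplicated by introducing still more fresh variables linked to the old ones by further atoms of these same two shapes — using a $0$‑ary function variable, which in a $d[0,1]$‑structure is forced to equal $1$, as a multiplicative unit to turn a leftover pure equality $f(\tuple u)=g(\tuple v)$ into $f(\tuple u)=g(\tuple v)\times e$ — and any atom mentioning two or more vocabulary symbols is split by replacing all but one of them with fresh quantified copies. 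Pulling all new quantifiers to the front yields $\phi^*=\exists\tuple f\,\forall\tuple x\,\theta$ of the claimed form, with equivalence over $d[0,1]$‑structures checked step by step.

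The genuinely delicate point is the elimination of the first‑order existential quantifiers while staying inside the \emph{loose} fragment: one cannot write ``$g(\tuple a,x)\neq 0$'', so the Skolem witness must be encoded so that its guard appears only positively (here as the disjunct $g(\tuple a,x)=0$), which is exactly why the encoding is forced to make $g(\tuple a,\cdot)$ a scaled point mass, and one must verify that this is consistent with $g$ being a distribution and that the constant $0$ (or an equivalent device, e.g.\ via the built‑in predicates of Proposition~\ref{lem:esosub}) is available. The second, more tedious ingredient is the purely syntactic surgery forcing every numerical atom into precisely the two permitted shapes — checking that mismatched argument tuples and residual equalities can always be absorbed, that the $d[0,1]$ convention really supplies a usable multiplicative unit, and that $\SUM_{\tuple v}f_j(\tuple u,\tuple v)$ can always be arranged to define an honest distribution in $\tuple u$ by exploiting the freedom in choosing $f_j$.
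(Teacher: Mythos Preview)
The paper does not prove this lemma; it is quoted from \cite{HKMV18} and used as a black box in the proof of Lemma~\ref{lem:esosummult2}. So there is no proof in the present paper to compare against.

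Your outline --- prenex, eliminate first-order existentials by Skolemisation into quantified distributions, then flatten numerical terms by naming subterms --- is the natural strategy and is essentially what the normal form construction in \cite{HKMV18} does. Two points deserve more care than you give them. First, you rely on the real constant $0$ to write the Skolem guard $g(\tuple a,x)=0$, but $\pesod{\SUM,\times,=}$ has no real constants; you gesture at Proposition~\ref{lem:esosub}, and indeed one can manufacture a term with value $0$ (e.g.\ force a binary distribution $d$ to satisfy $d(a)=\SUM_x d(x)$, so $d(b)=0$ for $b\neq a$), but this needs to be spelled out. Second, and more seriously, every fresh function variable you introduce during flattening must itself be a \emph{distribution}. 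For a subterm such as $f(\tuple u)\times g(\tuple u)$ (shared variables) or an intermediate product whose total mass is not $1$, the na\"ive naming $g_t(\tuple w):=t(\tuple w)$ fails this constraint. You acknowledge the issue (``exploiting the freedom in choosing $f_j$'', padding arities, the $0$-ary unit), but the argument that one can always allocate leftover mass and realign variable tuples so that \emph{every} intermediate definition is simultaneously a distribution and an instance of one of the two permitted atom shapes is exactly the technical content of the lemma, and your sketch stops short of it. This is not a wrong approach, just an incomplete one: the normal form in the statement is carefully chosen so that both atom shapes are closed under the distribution constraint (a product of independent distributions is a distribution; a marginal of a distribution is a distribution), and the proof in \cite{HKMV18} has to organise the flattening to exploit this.
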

\begin{lemma}\label{lem:esosummult2}
  $\pesod{\SUM,\times,=}\\\equiv_{d[0,1]}  \pesod{+,\times,=} \equiv_{d[0,1]}\pesou{+,\times,=,0,1}$.
\end{lemma}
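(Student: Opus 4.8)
The plan is to establish the chain of equivalences in two stages, using Lemma~\ref{lemma} as the crucial starting point. The first equivalence $\pesod{\SUM,\times,=} \equiv_{d[0,1]} \pesod{+,\times,=}$ is the substantive one; the second, $\pesod{+,\times,=} \equiv_{d[0,1]} \pesou{+,\times,=,0,1}$, should be routine since over $d[0,1]$-structures all quantified functions are distributions anyway, so the only difference is whether we also \emph{insist syntactically} that quantified functions be distributions versus merely range in $[0,1]$ with the constants $0,1$ available to impose (or exploit) that; one encodes each other by explicitly stating $\forall \tuple x\, \SUM_{\tuple y} f(\tuple x,\tuple y)=1$-type conditions, or conversely by noting a distribution is in particular $[0,1]$-valued, taking care that $0$-ary distributions are forced to value $1$.

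For the first equivalence, the direction $\pesod{+,\times,=} \leq_{d[0,1]} \pesod{\SUM,\times,=}$ is immediate once we can express addition $i+j$ using $\SUM$: over a finite domain of size $\geq 2$ with a built-in linear order (available by Proposition~\ref{lem:esosub}), a binary sum can be simulated by summing a two-term auxiliary function, e.g.\ introduce a fresh quantified function that on the two $\fleq$-least domain elements takes the values $[i]$ and $[j]$ and is $0$ elsewhere, then $\SUM$ over that element; one must only be slightly careful with the distribution constraint on the auxiliary function, which can be repaired by adding a correction mass on a third element, or by working with pairs $f/g$ of distributions as in the proof of Theorem~\ref{thm:main}. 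The harder direction, $\pesod{\SUM,\times,=} \leq_{d[0,1]} \pesod{+,\times,=}$, is where Lemma~\ref{lemma} pays off: by that lemma every $\pesod{\SUM,\times,=}$ sentence has an equivalent of the form $\exists\tuple f\,\forall\tuple x\,\theta$ whose only second-sort atoms are $f_i(\tuple u,\tuple v)=f_j(\tuple u)\times f_k(\tuple v)$ and $f_i(\tuple u)=\SUM_{\tuple v} f_j(\tuple u,\tuple v)$, with at most one function per atom unquantified. So the sole remaining task is to eliminate the bounded-summation atom $f_i(\tuple u)=\SUM_{\tuple v} f_j(\tuple u,\tuple v)$ in favour of $+$.

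The key idea for eliminating $f_i(\tuple u)=\SUM_{\tuple v} f_j(\tuple u,\tuple v)$ is to introduce an auxiliary quantified function $h(\tuple u,\tuple v)$ that accumulates the running partial sums of $f_j(\tuple u,\cdot)$ along the built-in linear order on the $\tuple v$-tuples: require $h(\tuple u,\tuple 0)=f_j(\tuple u,\tuple 0)$ and $h(\tuple u,\tuple v+1)=h(\tuple u,\tuple v)+f_j(\tuple u,\tuple v+1)$ for successive $\tuple v$, and finally $f_i(\tuple u)=h(\tuple u,\max)$. Each of these is a conjunction of $+$-atoms and so lies in $\pesod{+,\times,=}$ — modulo the recurring nuisance that $h$ must itself be a distribution, which again is handled by the pairing trick (represent the genuine value $h$ as a ratio $f(\tuple s)/g(\tuple s)$ of two distribution values exactly as in the right-to-left direction of Theorem~\ref{thm:main}, so that all quantified objects really are distributions while the intended quantities range over all of $[0,1]$ or even $\RE_{\geq 0}$). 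The main obstacle, and the only place real care is needed, is precisely this bookkeeping: ensuring that throughout the translation every quantified function is literally a distribution (sum of values $=1$, and $0$-ary ones equal to $1$), while the arithmetic we actually want to perform — unbounded-in-value partial sums that may well exceed $1$ — is carried out on derived ratio-encoded quantities. Once the pairing-with-denominator machinery of Theorem~\ref{thm:main} is in place, the recurrence above goes through termwise, the built-in order and successor come from Proposition~\ref{lem:esosub}, and the translation is clearly computable and preserves models over $d[0,1]$-structures, completing the equivalence.
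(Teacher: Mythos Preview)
Your overall strategy is close to the paper's, but there are two genuine gaps, and the paper's decomposition is different in a way that matters.

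\textbf{Decomposition.} The paper does not prove the two equivalences separately. It proves a cycle: Step~1 is $\pesod{\SUM,\times,=}\leq_{d[0,1]}\pesod{+,\times,=}$, Step~2 is $\pesod{+,\times,=}\leq\pesou{+,\times,=,0,1}$, and Step~3 is $\pesou{+,\times,=,0,1}\leq_{[0,1]}\pesod{\SUM,\times,=}$. In particular the paper never shows $\pesod{+}\leq\pesod{\SUM}$ directly, so your discussion of simulating $i+j$ by a two-term $\SUM$ is unnecessary.

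\textbf{Gap 1: the second equivalence is not routine.} You write that over $d[0,1]$-structures ``all quantified functions are distributions anyway''; this is false for $\pesou{+,\times,=,0,1}$, where quantified functions range over all of $[0,1]$. The direction $\pesou{+,\times,=,0,1}\leq\pesod{\ldots}$ therefore requires simulating an arbitrary $[0,1]$-valued function by a distribution, and ``a distribution is in particular $[0,1]$-valued'' gives only the opposite direction. The paper's Step~3 handles this by a scaling trick: replace each $m$-ary $[0,1]$-valued $f$ by an $(m{+}1)$-ary distribution $d_f$ with $d_f(\tuple x,c)=\frac{1}{n^k}f(\tuple x)$ (enforced via an auxiliary uniform distribution), normalise all monomials to a common degree $D$, and then every numerical identity $i=j$ becomes $\frac{i}{n^{Dk}}=\frac{j}{n^{Dk}}$. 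Your plan contains nothing corresponding to this.

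\textbf{Gap 2: the ratio encoding does not yield distributions.} For eliminating $\SUM$, your running-partial-sum idea is exactly right, and you correctly identify the obstacle: the accumulator $h$ is not a distribution (each value $h(\tuple u,\tuple v)$ lies in $[0,1]$, but $\sum_{\tuple u,\tuple v}h(\tuple u,\tuple v)$ can be far larger than $1$). However, invoking the $f/g$ trick from Theorem~\ref{thm:main} does not fix this: there $f$ and $g$ range over $(-1,1)$ and $(0,1]$ and are \emph{not} distributions, so you have merely shifted the problem. The paper's Step~1 instead uses a halving trick: the auxiliary $g$ satisfies $g(\tuple y,\tuple y)=\frac{1}{2^n}\sum_{\tuple x\leq\tuple y}f'(\tuple u,\tuple x)$ (for $\tuple y$ the $n$th tuple) and $g(\tuple x,\tuple y)=\frac{1}{2^n}f'(\tuple u,\tuple x)$ for $\tuple x>\tuple y$, enforced purely with $+$ via doubling equations like $g(\tuple x,S(\tuple y))+g(\tuple x,S(\tuple y))=g(\tuple x,\tuple y)$. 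This guarantees the total weight on $\{(\tuple x,\tuple y):\tuple x\geq\tuple y\}$ is at most $1$, so $g$ can be completed to a distribution. A second halved function $h$ with $h(\tuple y)=\frac{1}{2^n}f(\tuple u)$ is built analogously, and the $\SUM$-atom becomes $g(\tuple{\max},\tuple{\max})=h(\tuple{\max})$.
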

\begin{proof}
We prove the claim in three steps, without relying on multiplication at any step. By  Proposition \ref{lem:esosub} we may assume that the finite domain is enriched with a
 successor function $S$ for tuples, its transitive derivatives $<, \leq$, and its minimal and maximal tuples $\tuple \min$ and $\tuple \max$ (of an appropriate arity),
  obtained by the lexicographic ordering induced from some linear ordering $\fleq$. Additionally, we may assume a constant $c$ on the finite domain.

\noindent 
\textbf{Step 1:} $\pesod{\SUM,\times,=} \leq_{d[0,1]} \pesod{+,\times,=}$. 
We may assume that any $\pesod{\SUM,\times,=}$ formula is of the form
 stated in Lemma \ref{lemma}. 
 Thus it suffices to express in $\pesod{+,\times,=}$ each numerical identity of the form $f(\tuple u)=\SUM_{\tuple x}f'(\tuple u,\tuple x)$.
 First, we quantify a $2m$-ary distribution variable $g$ upon which we impose:
\begin{align}\label{partial}
\forall \tuple x\tuple y\big[&g(\tuple x,\tuple \min)+g(\tuple x,\tuple \min)=f'(\tuple u,\tuple x) \wedge\\
& \big( \tuple y< \tuple \max\to\nonumber\\
& g(S(\tuple y),S(\tuple y))+g(S(\tuple y),S(\tuple y))= g(S(\tuple y),\tuple y)+g(\tuple y,\tuple y)\big)\wedge \nonumber\\
& \big(  S(\tuple y)<\tuple x \to\nonumber\\
 & g(\tuple x,S(\tuple y))+g(\tuple x,S(\tuple y)) = g(\tuple x,\tuple y)\big)\big].\nonumber
\end{align}
The point
is to calculate partial sums $\SUM_{\tuple x\leq y}f'(\tuple u,\tuple x)$ and store sufficiently small fractions of them in $g(\tuple y,\tuple y)$.
 Suppose $\tuple y$ is the $n$th tuple.
 Then 
\[g(\tuple y,\tuple y)=\frac{1}{2^{n}}(f'(\tuple u,\tuple \min)+ \ldots +f'(\tuple u,\tuple y)),\] 
and for 
$\tuple x > \tuple y$,
\[g(\tuple x,\tuple y)=\frac{1}{2^{n}} f'(\tuple u,\tuple x).\] 
Consequently, the sum of all $g(\tuple x,\tuple y)$ where $\tuple x \geq \tuple y$ is at most $1$. By allocating the remaining weights to $(\tuple x,\tuple y)$ such that $\tuple x < \tuple y$, it follows that $g$ is a distribution.

Furthermore, we quantify a $2m$-ary distribution variable $h$  satisfying:
\begin{align*}
\forall \tuple x &[  h(\tuple \min)+h(\tuple \min)= f(\tuple u)\wedge \\
&  \tuple x < \tuple \max \to h(S(\tuple x)) + h(S(\tuple x))= h(\tuple x)]. 
\end{align*} 
It follows that $
h(\tuple y)= \frac{1}{2^{n}} f(\tuple u)$. Consequently, $g(\tuple \max,\tuple \max)= h(\tuple \max)$ if and only if $f(\tuple u)=\SUM_{\tuple x}f'(\tuple u,\tuple x)$. Note that $h$ is not a distribution since the weights do not add up to $1$. However, we may increment the arity of $h$ by one and replace $h(\tuple x)$ above with $h(\tuple x,c)$.
 Then $h$ is a distribution if the remaining weights are pushed to $h(\tuple x,y)$, where $y\neq c$. This concludes the proof of Step 1.

\noindent
\textbf{Step 2:} We show a stronger claim: $\pesod{+,\times,=}\le\pesou{+,\times,=,0,1}$. For this, it suffices to show how to 
 express in $\pesou{+,=,0,1}$ that a function $f$
  is a distribution. The following formula expresses just that:
\begin{align*}\exists g \big(&g(\tuple \min)=f(\tuple \min) \wedge \\
&\forall \tuple x ( \tuple x < \tuple \max  \to g( S(\tuple x))= g(\tuple x)+f(S(\tuple x))) \wedge g(\tuple \max) = 1\big).
\end{align*}

\noindent
\textbf{Step 3:} We show a stronger claim: $\pesou{+,\times,=,0,1}\\\leq_{[0,1]} \pesod{\SUM,\times,=}$. Suppose $\phi$ is some formula in $\pesou{+,\times,=,0,1}$.
 Let $k$ be the maximal arity of any function variable/symbol appearing in $\phi$, and suppose $n$ is the size of the finite domain; the total sum of the weights of a function
 is thus at most $n^k$.
We now show how to obtain from $\phi$ an equivalent formula in $\pesod{\SUM,\times,=}$; the idea is to scale all function weights by $1/n^k$. We have two cases:

\noindent
\textit{Function variables.}
If $f$ is an $m$-ary quantified function variable, we replace it with
 an $(m+1)$-ary quantified distribution variable $d_f$ satisfying
\begin{equation*}\label{eq:distsim}
\forall \tuple x \exists d' \forall \tuple y \,d'(\tuple y,c) = d_f(\tuple x,c),
\end{equation*}
 where $d'$ is a $(k+1)$-ary distribution variable.
 Now $n^k d_f(\tuple x,c)\le 1$ because $d'$ is a distribution, and thus $ d_f(\tuple x,c)\leq\frac{1}{n^k}$.
 
 \noindent
\textit{Function symbols.}
Suppose $f(\tuple x)$ is a function term which appears as a term or subterm in $\phi$, and $f$ is a function symbol from the underlying vocabulary. We quantify a $(k+1)$-ary distribution variable $d_{f(\tuple x)}$ satisfying
\begin{equation*}
\forall \tuple x(\SUM_{\tuple y}d_{f(\tuple x)}(\tuple y,c)=f(\tuple x) \wedge\forall \tuple y\tuple z d_{f(\tuple x)}(\tuple y,c)=d_{f(\tuple x)}(\tuple z,c)).
\end{equation*}
 It follows that $d_{f(\tuple x)}(\tuple x,c)=\frac{1}{n^k}f(\tuple x)$. Since  $f(\tuple x)\le 1$, we may define $d_{f(\tuple x)}$ as a distribution.
  
 Observe now that each numerical atom appearing in $\phi$ is an identity between two multivariate polynomials over function terms. Without loss of generality all the constituent monomials in these atoms are of a fixed degree $D$ and have coefficient one;  note that each monomial with degree less than $D$ can be appended in $\pesou{+,\times,=,0,1}$ with a quantified nullary function $n$ taking value $1$. We now replace in each numerical atom $i=j$ function terms $f(\tuple x)$ with $d_f(\tuple x,c)$ or $d_{f(\tuple x)}(\tuple x,c)$, depending on whether $f$ is a function variable or a function symbol. Thus we represent $i=j$ in $\pesod{\SUM,\times,=}$ as $\frac{i}{n^{Dk}}=\frac{j}{ n^{Dk}}$, wherefore not only its truth value, but also that of $\phi$, is preserved in the transformation.
\end{proof}
By combining Corollary \ref{cor:characterisation}.\ref{this}, Theorem \ref{thm:start}, and Lemma \ref{lem:esosummult2}, we finally obtain the following result.
\begin{theorem}
$\FO(\cpind)\equiv \PNP{0}{[0,1]}$.
\end{theorem}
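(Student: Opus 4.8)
The plan is to concatenate the three cited equivalences into one chain, closing two small notational gaps. By Theorem~\ref{thm:start} we have $\FO(\cpind)\equiv\pesod{\SUM,\times,=}$, by Lemma~\ref{lem:esosummult2} we have $\pesod{\SUM,\times,=}\equiv_{d[0,1]}\pesou{+,\times,=,0,1}$, and by Corollary~\ref{cor:characterisation}.\ref{this} we have $\peso{[0,1]}{+,\times,\leq,0,1}\equiv\PNP{0}{[0,1]}$. What remains is (a) to see that the loose $[0,1]$-fragments with numerical identity $=$ and with inequality $\leq$ are equi-expressive, and (b) to reconcile the fact that the first chain speaks of $d[0,1]$-structures (where the vocabulary function $f$ must be a distribution, to match $f_\X$) while Corollary~\ref{cor:characterisation} speaks of arbitrary $\RE$-structures.

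For (a): one direction is immediate, since $i=j$ is the positive formula $i\leq j\wedge j\leq i$. For the converse, note that over $[0,1]$- and $d[0,1]$-structures every numerical term of a fixed formula evaluates, on every structure, to a value in $[0,N]$ for some constant $N$ depending only on the formula (function and constant values lie in $[0,1]$, products stay in $[0,1]$, and after distributing there are only boundedly many summands). Hence $i\leq j$ is equivalent to $\exists g_1\cdots\exists g_N\,(i+g_1+\cdots+g_N=j)$, where the $g_k$ are fresh function variables (of arity equal to the number of universally quantified first-order variables in scope, after Skolemizing and pulling the new quantifiers to the front): since quantified functions in the $[0,1]$-fragment take only nonnegative values, a valid choice of the $g_k$ exists exactly when $j-i\in[0,N]$, i.e. when $i\leq j$. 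This rewriting stays inside the loose fragment.

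For (b): In the direction $\FO(\cpind)\leq\PNP{0}{[0,1]}$, take a class $\calS=\struc(\phi)$ of distribution-structures over $\tau\cup\{f\}$; the chain together with (a) yields $\psi\in\peso{[0,1]}{+,\times,\leq,0,1}$ with $\struc^{d[0,1]}(\psi)=\calS$. Let $\delta_f$ be the $\peso{[0,1]}{+,\times,\leq,0,1}$-formula asserting that $f^\mA$ is a distribution: nonnegativity is the positive atom $\forall\vec x\,0\leq f(\vec x)$, and summation to $1$ is expressed by existentially quantifying a $[0,1]$-valued partial-sum function exactly as in Step~2 of the proof of Lemma~\ref{lem:esosummult2} (using the built-in successor and extremal elements provided by Proposition~\ref{lem:esosub}, with equalities written as pairs of $\leq$-atoms). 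Then $\struc^{\RE}(\psi\wedge\delta_f)=\struc^{d[0,1]}(\psi)=\calS$, so Corollary~\ref{cor:characterisation}.\ref{this} gives $\enc(\calS)\in\PNP{0}{[0,1]}$. Conversely, if $\enc(\calS)\in\PNP{0}{[0,1]}$ for a class $\calS$ of distribution-structures, Corollary~\ref{cor:characterisation}.\ref{this} supplies $\psi\in\peso{[0,1]}{+,\times,\leq,0,1}$ with $\struc^{\RE}(\psi)=\calS$ up to isomorphism; since every member of $\calS$ is a distribution-structure, $\struc^{d[0,1]}(\psi)=\struc^{\RE}(\psi)=\calS$, and applying (a), Lemma~\ref{lem:esosummult2}, and Theorem~\ref{thm:start} in reverse yields $\phi\in\FO(\cpind)$ with $\struc(\phi)=\calS$.

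The translations in (a) are routine. The step I expect to need the most care is the structure-sort bookkeeping of (b), and specifically the observation that ``$f$ is a distribution'' is expressible in the loose $[0,1]$-fragment equipped only with $\leq$ (and not $\SUM$); this is what lets the $\RE$-structure characterisation of Corollary~\ref{cor:characterisation} be transported to classes that a priori live among $d[0,1]$-structures.
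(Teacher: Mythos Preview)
Your proposal is correct and follows exactly the paper's approach of chaining Theorem~\ref{thm:start}, Lemma~\ref{lem:esosummult2}, and Corollary~\ref{cor:characterisation}.\ref{this}. In fact you are more careful than the paper's one-line proof: the bookkeeping you supply in (a) for passing between $=$ and $\leq$ (the bounded-term argument over $[0,1]$-valued functions is sound, since the language has no subtraction and only finitely many $+$'s), and in (b) for reconciling $d[0,1]$-structures with $\RE$-structures via the definable $\delta_f$, are genuine details that the paper leaves implicit.
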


\section{Concluding remarks}\label{sec:conclusion}
Applications of logic in AI and advanced data management require
probabilistic interpretations, a role that is well fulfilled by
probabilistic team semantics.  On the other hand, in the theory
of computation and automated reasoning, computation and logics
over the reals are well established with solid foundations.  In
this paper we have provided bridges between the two worlds. 
We introduced a novel variant of BSS machines and provided a logical and topological characterisation of its computational power. In addition, we determined the expressivity of probabilistic independence logic with respect to the BSS model of computation.

There are many interesting directions of future research. One is to consider the additive fragment of BSS computation. Restricted to Boolean inputs it is known that, if unrestricted use of machine constants is allowed, the additive $\NPr$ branching on equality collapses to $\NP$ and branching on inequality captures $\NP/poly$ \cite{Koiran94}. What can we say about the additive fragment of S-BSS computation? Another direction is to devise logics that characterise other important complexity classes over S-BSS machines. Gr\"adel and Meer \cite{GradelM95} established a characterisation of polynomial time on ranked $\RE$-structures using a variant of least fixed point logic. In the setting of team semantics and classical computation, Galliani and Hella \cite{gallhella13} showed that the so-called \emph{inclusion logic} characterises polynomial time on ordered structures. Can we extend the applicability of these results to the realms of S-BSS computation and probabilistic team semantics? Finally, we would like to devise natural complete problems for the complexity classes defined by S-BSS machines. In particular, we would like to obtain a natural complete problem for $\exists [0,1]^\leq$; a weakening of the art gallery problem is one promising candidate.
We conclude with a few open problems:
\begin{itemize}
	\item Is $\exists[0,1]^{\leq}$ strictly included in $\exists \RE$? A positive answer would be a major breakthrough, as it would separate $\NP$ from $\PSPACE$.
	\item We know that $\NP\leq \exists [0,1]^\leq \leq \exists \RE \leq \PSPACE$. Can we establish a better upper bound for $\exists [0,1]^\leq$? In particular, is $\exists [0,1]^\leq$ contained in the polynomial hierarchy?
	\item We established that S-BSS computable languages are included in the class of BSS computable languages that are countable disjoint unions of closed sets. Does the converse hold?
\end{itemize}

\begin{acks} 
The first and the second author were supported by the Academy of Finland grant 308712. The third and the fourth author were supported by the Research Foundation Flanders grant
G0G6516N.
 The third author was partially
supported by the National Natural Science Foundation of China under grant 61972455, and
 the fourth author was
 an international research fellow
 of the Japan Society for the Promotion of Science, Postdoctoral Fellowships for Research in Japan (Standard).
\end{acks}

 \bibliographystyle{ACM-Reference-Format}
\bibliography{biblio}

\end{document}